\newtheorem{theorem}{Theorem}[section]
\newtheorem{corollary}[theorem]{Corollary}
\newtheorem{definition}[theorem]{Definition}
\pgfplotsset{compat=1.18}
\def\ie{{\em i.e.,}\xspace}
\def\eg{{\em e.g.,}\xspace}
\def\cf{{\em cf.}\xspace}
\DeclareMathOperator*{\argmax}{\arg\!\max}
\DeclareMathOperator*{\argmin}{\arg\!\min}
	\definecolor{sthlmLightBlue}{RGB}{214,237,252} % HEX #d6edfc
	\definecolor{sthlmBlue}{RGB}{0,110,191} % HEX #006ebf
	\definecolor{sthlmLightGreen}{RGB}{213,247,244} % HEX #0d5f7f4
	\definecolor{sthlmGreen}{RGB}{0,134,127} % #00867f
	\definecolor{sthlmLightGrey}{RGB}{213,217,225} %HEX #D5D9E1
	\definecolor{sthlmGrey}{RGB}{245,243,238} % HEX #f5f3ee
	\definecolor{sthlmDarkGrey}{RGB}{51,51,51} % HEX #333333
	\definecolor{sthlmLightOrange}{RGB}{255,215,210} % HEX #ffd7d2
	\definecolor{sthlmOrange}{RGB}{221,74,44} % HEX #dd4a2c
	\definecolor{sthlmLightPurple}{RGB}{241,230,252} % HEX #f1e6fc
	\definecolor{sthlmPurple}{RGB}{93,35,125} % HEX #5d237d
	\definecolor{sthlmLightRed}{RGB}{254,222,237} % HEX #c40064
	\definecolor{sthlmRed}{RGB}{196,0,100} % HEX #fedeed
	\definecolor{sthlmYellow}{RGB}{252,191,10} % HEX #fcbf0a
  \newcommand{\myrowcolour}{\rowcolor[gray]{0.925}}
  \newcommand{\highest}[1]{\textcolor{sthlmRed}{\textbf{#1}}}%
\tikzset{
 rightrobot/.pic={
   code={
    \tikzset{scale=10/10}, 
    
    % Head
    \draw[fill=gray!30, thick] (0,4) circle (0.5);

    % Body
    \draw[fill=gray!30, thick]  (-0.5,4)--(-0.5,2.5)  --  (0.5,2.5) -- (0.5,4);

    % Camera
    \draw[fill=white, thick] (-.65,3.889) rectangle (-.265,3.959);
    \draw[fill=white, thick] (-.6,4) rectangle (-.265,3.85);

    % Hand and arms
    \draw[thick] (-.85,3.2) -- (0,3.55);
    \draw[thick] (-.85,3.2) -- (-.85,2.95);
    \draw[thick] (-.85,3.2) -- (-1.1,3.2);

    % Wheel
    \draw[fill=white, thick] (0,2.5) circle (0.25);
    
    % Mind
    \draw[fill=white] (0.1,4.155) ellipse (.295cm and .25cm) node[scale=10/10] at (0.1,4.155) {}; %$m^\mathtt{A}$$\pi(s)$
   }
  }
}
\tikzset{
 leftrobot/.pic={
   code={
    \tikzset{scale=10/10}, 
    
    % Head
    \draw[fill=gray!30, thick] (0,4) circle (0.5);

    % Body
    \draw[fill=gray!30, thick]  (-0.5,4)--(-0.5,2.5)  --  (0.5,2.5) -- (0.5,4);

    % Camera
    \draw[fill=white, thick] (+.65,3.889) rectangle (+.385,3.959);
    \draw[fill=white, thick] (.6,4) rectang le (+.335,3.85);

    % Hand and arms
    \draw[thick] (.85,3.2) -- (0,3.55);
    \draw[thick] (.85,3.2) -- (.85,2.95);
    \draw[thick] (.85,3.2) -- (1.1,3.2);

    % Wheel
    \draw[fill=white, thick] (0,2.5) circle (0.25);
    
    % Mind
    \draw[fill=white] (-.1,4.155) ellipse (.295cm and .25cm) node[scale=10/10] at (-.1,4.155) {$\pi(s)$}; 
   }
  }
}
\tikzset{
 passiveleftrobot/.pic={
   code={    
    % Head
    \draw[fill=gray!30, thick] (0,4) circle (0.5);

    % Body
    \draw[fill=gray!30, thick]  (-0.5,4) -- (-0.5,2.5) -- (0.5,2.5) -- (0.5,4);

    % Camera
    \draw[fill=white, thick] (+.65,3.889) rectangle (+.385,3.959);
    \draw[fill=white, thick] (.6,4) rectangle (+.335,3.85);

    % Hand and arms
    \draw[thick] (.85,3.2) -- (0,3.55);
    \draw[thick] (.85,3.2) -- (.85,2.95);
    \draw[thick] (.85,3.2) -- (1.1,3.2);

    % Wheel
    \draw[fill=white, thick] (0,2.5) circle (0.25);
    
    % Mind
    \draw[fill=white] (-.1,4.155) ellipse (.295cm and .25cm) node at (-.1,4.155) {}; 
   }
  }
}
\tikzset{
 passiverightrobot/.pic={
   code={    
    % Head
    \draw[fill=gray!30, thick] (0,4) circle (0.5);

    % Body
    \draw[fill=gray!30, thick]  (-0.5,4) -- (-0.5,2.5) -- (0.5,2.5) -- (0.5,4);

    % Camera
    \draw[fill=white, thick] (-.65,3.889) rectangle (-.265,3.959);
    \draw[fill=white, thick] (-.6,4) rectangle (-.265,3.85);

    % Hand and arms
    \draw[thick] (-.85,3.2) -- (0,3.55);
    \draw[thick] (-.85,3.2) -- (-.85,2.95);
    \draw[thick] (-.85,3.2) -- (-1.1,3.2);

    % Wheel
    \draw[fill=white, thick] (0,2.5) circle (0.25);
    
    % Mind
    \draw[fill=white] (-.1,4.155) ellipse (.295cm and .25cm) node at (-.1,4.155) {}; 
   }
  }
}
\title{\(\varepsilon\)-Optimally Solving Two-Player Zero-Sum POSGs} %\\ Supplementary Material
\author{%
  Erwan C. Escudie\\
  \texttt{e.c.escudie@rug.nl}\\
  University of Groningen
  \And Matthia Sabatelli\\
  \texttt{m.sabatelli@rug.nl}\\
  University of Groningen
  \And Olivier Buffet\\
  \texttt{olivier.buffet@loria.fr}\\
  Inria -- Nancy Grand-Est
  \And Jilles Steeve Dibangoye \\
  \texttt{j.s.dibangoye@rug.nl} \\
  University of Groningen 
}
\begin{document}

\maketitle

\begin{abstract}

We present a novel framework for \(\varepsilon\)-optimally solving two-player zero-sum partially observable stochastic games (zs-POSGs). These games pose a major challenge due to the absence of a principled connection with dynamic programming (DP) techniques developed for two-player zero-sum stochastic games (zs-SGs). Prior attempts at transferring solution methods have lacked a lossless reduction—defined here as a transformation that preserves value functions, equilibrium strategies, and optimality structure—thereby limiting generalisation to ad hoc algorithms. This work introduces the first lossless reduction from zs-POSGs to transition-independent zs-SGs, enabling the principled application of a broad class of DP-based methods. We show empirically that point-based value iteration (PBVI) algorithms, applied via this reduction, produce \(\varepsilon\)-optimal strategies across a range of benchmark domains, consistently matching or outperforming existing state-of-the-art methods. Our results open a systematic pathway for algorithmic and theoretical transfer from SGs to partially observable settings.

\end{abstract}

\section{Introduction}

\citet{bellman} introduced the principle of optimality for sequential decision-making under uncertainty in the 1950s, originally in the context of Markov decision processes (MDPs). Since then, this principle has provided a foundation for solving progressively more complex problems. \citet{shapley} extended it to zero-sum stochastic games (zs-SGs), while others adapted it to the partially observable case, including \citet{astro-1965:optimcontr:journal:174}, \citet{smallwood}, and \citet{sondik78} for partially observable Markov decision processes (POMDPs). More recently, a rich body of work has applied this reduction-based methodology to partially observable stochastic games (POSGs). For common-payoff POSGs, several approaches have successfully constructed fully observable surrogates—typically common-payoff Markov games—thus enabling the transfer of dynamic programming (DP) theories and algorithms without compromising optimality \citep{Szer05,OliehoekSpaanVlassis2008,OliehoekSDA10, nayyar2013decentralized, DibangoyeABC13, Dibangoye:OMDP:2016, oliehoek2013sufficient, OliehoekSpaanAmatoWhiteson2013,lerer2020improving,peralezsolving,peralez2025optimally}. In contrast, for zs-POSGs, although a number of methods have been proposed \citep{Wiggers16, 7963513, horak2017heuristic, HorBos-aaai19, Buffet2020, BrownBLG20, DelBufDibSaf-DGAA-23, SokotaDL0KB23}, none constitutes a lossless reduction \citep{sanjari2023isomorphism}. As a consequence, generalisation to this setting has remained restricted to ad hoc algorithmic designs, with no principled framework for transferring DP techniques from SGs to zs-POSGs.

A lossless reduction transforms zs-POSGs into zs-SGs while satisfying three main criteria: value preservation, equilibrium correspondence, and information structure equivalence \citep{sanjari2023isomorphism}. Value preservation requires that the expected return of any joint policy in the original game equals that of its image in the reduced game. Equilibrium correspondence demands that the reduction induce a bijection between equilibria, ensuring that Nash strategies remain valid and interpretable across both formulations. Finally, information structure equivalence ensures that the transformation does not introduce extraneous information or collapse distinctions essential to the players’ strategic reasoning. Several reductions have been proposed, but all fail to satisfy one or more of these criteria. The occupancy Markov game (OMG) assumes a centralised planner selects joint policies based on the occupancy state—an object unobservable to either player—thus violating the equilibrium correspondence criterion \citep{Wiggers16,Buffet2020,DelBufDibSaf-DGAA-23}. To circumvent this limitation, \citet{DelBufDibSaf-DGAA-23} introduce policy tracking via ad hoc bookkeeping techniques. The public-belief alternating Markov game (PuB-AMG) assumes that both players publicly commit to their policies, which violates the equilibrium correspondence criterion \citep{SokotaDL0KB23}. To address this, the authors introduce a regularised minimax formulation intended to restore solution correspondence by ensuring that strategies computed in the reduced game are interpretable in the original zs-POSG. While the regularisation guarantees convergence to unique solutions within the PuB-AMG, the resulting strategies may correspond to policies with high exploitability in the original game. Although an annealing scheme can reduce this gap empirically, there is no formal guarantee that the regularised solutions preserve value or recover exact Nash equilibria in the original zs-POSG. As a result, the reduction does not satisfy the criteria for a lossless transformation.

\textbf{Contributions.}
We make several key contributions to the study of zs-POSGs:

\textbf{(1) A principled and lossless reduction to transition-independent zs-SGs.} We introduce the first reduction that maps any zs-POSG to a strategically equivalent \emph{transition-independent} zs-SG, preserving value, equilibrium structure, and information constraints. The reduction adopts a decentralised perspective: each player independently selects a sequence of decision rules—mappings from private histories (i.e., past actions and observations) to action distributions—defining the local state of that player, formalised as an \emph{occupancy set}. The global state of the reduced game, the \emph{occupancy state}, is the intersection of the two players’ occupancy sets, capturing the joint consistency of their behaviours. Because each local state evolves independently of the opponent, the reduced game exhibits \emph{transition-independent dynamics}, where transitions depend only on the current local state and selected decision rule. This reformulation preserves the strategic structure of the original zs-POSG while enabling dynamic programming over occupancy sets—avoiding explicit reasoning over joint policy spaces and supporting scalable, exact solution methods.
%
%\begin{reviews}
%
The hierarchy of planners introduced in this work—ranging from focal to marginal planners—forms a nested structure based on increasing information availability, as illustrated in Figure~\ref{fig:planner:hierarchy:nested}.
\begin{figure}[h]
\centering
\scalebox{.55}{
\begin{tikzpicture}[x=1pt,y=1pt,scale=1]

% Define named colors
\definecolor{marginalcolor}{RGB}{70,63,207}
\definecolor{informedcolor}{RGB}{83,46,160}
\definecolor{uninformedcolor}{RGB}{56,40,146}
\definecolor{privatecolor}{RGB}{30,30,100}

% Draw ellipses from innermost to outermost
\fill[marginalcolor] (5.5,140) 
  .. controls (5.5,70) and (190,40) .. (360,40)  % Tighter lower curve
  .. controls (530,40) and (700,70) .. (700,140)  % Reduced vertical stretch
  .. controls (700,210) and (530,240) .. (360,240)  % Compressed upper curve
  .. controls (190,240) and (5.5,210) .. (5.5,140) -- cycle;

\fill[informedcolor] (5.5,140)
  .. controls (5.5,92) and (151.225,47) .. (285.475,47)
  .. controls (419.725,47) and (553.975,92) .. (553.975,140)
  .. controls (553.975,188) and (419.725,233) .. (285.475,233)
  .. controls (151.225,233) and (5.5,188) .. (5.5,140) -- cycle;

\fill[uninformedcolor] (5.5,140)
  .. controls (5.5,94) and (111.725,70.5) .. (202.475,70.5)
  .. controls (293.225,70.5) and (384.0,94) .. (384.0,140)
  .. controls (384.0,186) and (293.225,209.5) .. (202.475,209.5)
  .. controls (111.725,209.5) and (5.5,186) .. (5.5,140) -- cycle;

\fill[privatecolor] (5.5,140)
  .. controls (5.5,114) and (69.08,101) .. (120.33,101)
  .. controls (171.58,101) and (222.83,114) .. (222.83,140)
  .. controls (222.83,166) and (171.58,179) .. (120.33,179)
  .. controls (69.08,179) and (5.5,166) .. (5.5,140) -- cycle;

% Info labels
\node[text=white, font=\sffamily, align=center, scale=1.25] at (120,125) {\sffamily \textcolor{sthlmYellow}{\it Knows only the focal}\\\sffamily \textcolor{sthlmYellow}{\it policy and nothing else}};
\node[text=white, font=\sffamily\small, align=center, scale=1.25] at (300,125) {\sffamily \textcolor{sthlmYellow}{\it Knows 
 the opponent policy}\\ \sffamily \textcolor{sthlmYellow}{\it  and focal planner data}};
\node[text=white, font=\sffamily\small, align=center, scale=1.25] at (460,125) {\sffamily \textcolor{sthlmYellow}{\it Knows public signals}\\ \sffamily \textcolor{sthlmYellow}{\it and uninformed planner data}};
\node[text=white, font=\sffamily\small, align=center, scale=1.25] at (622.5,125) {\sffamily \textcolor{sthlmYellow}{\it Knows opponent histories}\\ \sffamily \textcolor{sthlmYellow}{\it and informed planner data}};

% Planner names
\node[text=white, font=\bfseries, align=center, scale=1.5] at (120,155) {\sffamily \textcolor{sthlmYellow}{\bf\it \textbf{Focal planner}}};
\node[text=white, font=\sffamily\bfseries, align=center, scale=1.5] at (300,155) {\sffamily \textcolor{sthlmYellow}{\bf\it \textbf{Uninformed planner}}};
\node[text=white, font=\sffamily\bfseries, align=center, scale=1.5] at (460,155) {\sffamily \textcolor{sthlmYellow}{\bf\it \textbf{Informed planner}}};
\node[text=yellow!90!black, font=\sffamily\bfseries, align=center, scale=1.5] at (622.5,155) {\sffamily \textcolor{sthlmYellow}{\bf\it \textbf{Marginal planner}}};

\end{tikzpicture} %
}
\caption{A planner hierarchy induced by relaxing information constraints, from the focal to the marginal planner, supporting our theoretical and algorithmic framework.}
\label{fig:planner:hierarchy:nested}
\end{figure}
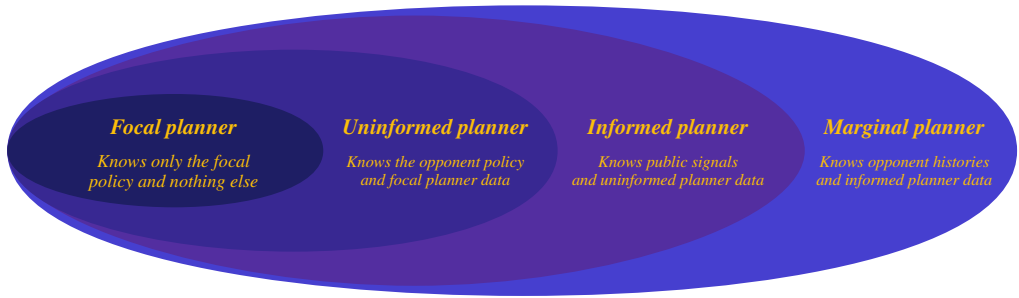
%\end{reviews}
%

\textbf{(2) A planner hierarchy for structured reasoning.} The reduced game reveals a hierarchy of planners—ranging from a minimally informed \emph{focal planner}, to increasingly informed \emph{uninformed}, \emph{informed}, and finally \emph{marginal planners}. Each planner defines a distinct optimisation problem, characterised by its reasoning scope (single-agent or centralised) and its access to information (from no observations to full access to public and private histories). While only the focal planner is ever implemented in practice, the remaining planners serve as conceptual tools that underpin the structure of value functions and guide the transfer of theoretical insights. Solving the reduced game requires traversing this hierarchy: each planner contributes a well-defined subproblem whose value function and policy are essential for constructing the overall solution to the zs-POSG.

\textbf{(3) Structural properties of value functions.} The planner hierarchy reveals new structural properties of zs-POSGs, including \emph{optimality equations}, \emph{strategy selection rules}, and, critically, the \emph{uniform continuity} of value functions. Uniform continuity guarantees that small changes in \emph{occupancy states} lead to uniformly bounded changes in value, regardless of where they occur in the state space. This property enables value functions to generalise across occupancy states in a principled way, supporting reliable planning without requiring dense sampling or finely tuned control at every point.

\textbf{(4) Practical benefits through algorithmic transfer.} As a concrete example of the framework in use, we show that point-based value iteration (PBVI) \citep{pineau2003point,horak2017heuristic,HorBos-aaai19}, applied to the reduced game, computes \(\varepsilon\)-optimal strategies across standard zs-POSG benchmarks, consistently matching or outperforming existing methods. More broadly, the reduction enables the transfer of a wide class of dynamic programming algorithms—originally developed for stochastic games—into partially observable settings, thereby expanding the set of scalable planning tools available for zs-POSGs.

\section{Preliminaries}
\label{sec:preliminaries}

This section presents the standard formulation of zero-sum partially observable stochastic games (zs-POSGs), along with their associated policies and value functions.

\begin{definition}
A two-player zero-sum partially observable stochastic game \(\mathcal{M}\) is defined as the tuple \((\mathcal{S}, \mathcal{A}_{\textcolor{sthlmRed}{1}}, \mathcal{A}_{\textcolor{sthlmRed}{2}}, \mathcal{Z}_{\textcolor{sthlmRed}{1}}, \mathcal{Z}_{\textcolor{sthlmRed}{2}}, \mathcal{W}, p, r, b, \gamma, \ell)\), where players \textcolor{sthlmRed}{1} and \textcolor{sthlmRed}{2} are the maximising and minimising players, respectively. \(\mathcal{S}\) is a finite set of hidden states. \(\mathcal{A}_{\textcolor{sthlmRed}{1}}\) and \(\mathcal{A}_{\textcolor{sthlmRed}{2}}\) are finite sets of private actions, and \(\mathcal{Z}_{\textcolor{sthlmRed}{1}}\) and \(\mathcal{Z}_{\textcolor{sthlmRed}{2}}\) are finite sets of private observations for each player. \(\mathcal{W}\) denotes the set of public observations available to both players. The transition function \(p: \mathcal{S} \times \mathcal{A}_{\textcolor{sthlmRed}{1}} \times \mathcal{A}_{\textcolor{sthlmRed}{2}} \to \Delta(\mathcal{S} \times \mathcal{Z}_{\textcolor{sthlmRed}{1}} \times \mathcal{Z}_{\textcolor{sthlmRed}{2}} \times \mathcal{W})\) defines the probability \(p(s\prime, z_{\textcolor{sthlmRed}{1}}, z_{\textcolor{sthlmRed}{2}}, w | s, a_{\textcolor{sthlmRed}{1}}, a_{\textcolor{sthlmRed}{2}})\) of transitioning to next state \(s\prime\) and emitting observations \((z_{\textcolor{sthlmRed}{1}}, z_{\textcolor{sthlmRed}{2}}, w)\) given current state \(s\) and actions \((a_{\textcolor{sthlmRed}{1}}, a_{\textcolor{sthlmRed}{2}})\). The reward function \(r: \mathcal{S} \times \mathcal{A}_{\textcolor{sthlmRed}{1}} \times \mathcal{A}_{\textcolor{sthlmRed}{2}} \to \mathbb{R}\) specifies the stage payoff \(r(s, a_{\textcolor{sthlmRed}{1}}, a_{\textcolor{sthlmRed}{2}})\) received by player \textcolor{sthlmRed}{1}. The initial belief over states is given by \(b \in \Delta(\mathcal{S})\), the discount factor is \(\gamma \in [0, 1)\), and the planning horizon is finite with \(\ell+1< \infty\).
\end{definition}
\begin{reviews}
This definition aligns with the factored-observation stochastic game (FOSG) model \citep{KOVARIK2022103645}, which extends POSGs by explicitly encoding structural features such as public signals and factored observations. While this does not increase representational power, it enables more scalable planning by revealing exploitable structure. In contrast, many real-world domains—such as autonomous driving in dynamic environments \citep{van2016coordinated} or adaptive traffic control \citep{wei2019presslight}—lack such explicit structure. In these settings, agents must infer latent dependencies from partial observations. Recent approaches have tackled this challenge by learning inter-agent influence graphs \citep{pieroth2024detecting} or reconstructing shared signals from local views \citep{mao2023ma2e}, reaffirming the generality and practical relevance of zs-POSGs.
\end{reviews}

\paragraph{Policies.} 
\begin{reviews}
 As illustrated in Figure~\ref{graphical:model:zs:posg}, either player \(\textcolor{sthlmRed}{i}\) begins with initial private history \(h_{\textcolor{sthlmRed}{i},0} \doteq \emptyset\).
\end{reviews}
At each stage \(t \in \{0, \dots, \ell\}\), player~\(\textcolor{sthlmRed}{i}\) selects actions based on a private action-observation history \(h_{\textcolor{sthlmRed}{i},t} \in \mathcal{H}_{\textcolor{sthlmRed}{i},t} \doteq (\mathcal{A}_{\textcolor{sthlmRed}{i}} \times \mathcal{Z}_{\textcolor{sthlmRed}{i}})^t\) and a public observation history \(h_{\textcolor{sthlmRed}{\text{pub}},t} \in \mathcal{H}_{\textcolor{sthlmRed}{\text{pub}},t} \doteq \mathcal{W}^t\), starting from \(h_{\textcolor{sthlmRed}{i},0} = \emptyset\). A decision rule \(d_{\textcolor{sthlmRed}{i},t} : \mathcal{H}_{\textcolor{sthlmRed}{i},t} \times \mathcal{H}_{\textcolor{sthlmRed}{\text{pub}},t} \to \Delta(\mathcal{A}_{\textcolor{sthlmRed}{i}})\) maps \emph{joint} histories to distributions over actions, with the set of all such rules denoted \(\mathcal{D}_{\textcolor{sthlmRed}{i},t}\). The players' actions determine a transition to state \(s_{t+1}\), yield a payoff \(r(s_t, a_{\textcolor{sthlmRed}{1},t}, a_{\textcolor{sthlmRed}{2},t})\), and generate new observations \((z_{\textcolor{sthlmRed}{1},t+1}, z_{\textcolor{sthlmRed}{2},t+1}, w_{t+1})\), which update the histories recursively. A policy \(\pi_{\textcolor{sthlmRed}{i}} = (d_{\textcolor{sthlmRed}{i},0}, \dots, d_{\textcolor{sthlmRed}{i},\ell})\) is a sequence of such rules; the set of all history-dependent policies is denoted \(\Pi_{\textcolor{sthlmRed}{i}}\). The full sets of private and public histories are \(\mathcal{H}_{\textcolor{sthlmRed}{i}} = \cup_{t=0}^{\ell} \mathcal{H}_{\textcolor{sthlmRed}{i},t}\) and \(\mathcal{H}_{\textcolor{sthlmRed}{\text{pub}}} = \cup_{t=0}^{\ell} \mathcal{H}_{\textcolor{sthlmRed}{\text{pub}},t}\), respectively.

\paragraph{Value Functions.} Given an initial state distribution \(b\), the expected cumulative discounted payoff under joint policies \((\pi_{\textcolor{sthlmRed}{1}}, \pi_{\textcolor{sthlmRed}{2}})\) is \(v_{\pi_{\textcolor{sthlmRed}{1}}, \pi_{\textcolor{sthlmRed}{2}}}(b) = \mathbb{E}[\sum_{t=0}^{\ell} \gamma^t \cdot r(s_t, a_{\textcolor{sthlmRed}{1},t}, a_{\textcolor{sthlmRed}{2},t})]\), where the expectation is over trajectories induced by \(b\), \(p\), and the policy pair. Player~\(\textcolor{sthlmRed}{1}\) seeks to maximise this value while player~\(\textcolor{sthlmRed}{2}\) seeks to minimise it. Under perfect recall, behavioural (history-dependent) policies are equivalent to mixed strategies~\citep{kuhn1953}, and von Neumann's minimax theorem~\citep{Neumann1928}—extended to behavioral strategy spaces by \citet{DelBufDibSaf-DGAA-23}—ensures the existence of a game value \(v_*(b)\), satisfying \(v_*(b) = \min_{\pi_{\textcolor{sthlmRed}{2}}} \max_{\pi_{\textcolor{sthlmRed}{1}}} v_{\pi_{\textcolor{sthlmRed}{1}}, \pi_{\textcolor{sthlmRed}{2}}}(b) = \max_{\pi_{\textcolor{sthlmRed}{1}}} \min_{\pi_{\textcolor{sthlmRed}{2}}} v_{\pi_{\textcolor{sthlmRed}{1}}, \pi_{\textcolor{sthlmRed}{2}}}(b)\). The solution to \(\mathcal{M}\) is a policy \(\pi_{\textcolor{sthlmRed}{1}}\) that maximises the guaranteed payoff against any opponent policy, i.e., \(\min_{\pi_{\textcolor{sthlmRed}{2}}} v_{\pi_{\textcolor{sthlmRed}{1}}, \pi_{\textcolor{sthlmRed}{2}}}(b) = v_*(b)\); the symmetric holds for player~\(\textcolor{sthlmRed}{2}\). The corresponding pair forms a Nash equilibrium.
% \paragraph{Lossless Reductions.} A reduction from a zs-POSG \(\mathcal{M}\) to a surrogate game \(\mathcal{M}'\) is said to be \emph{lossless} if it preserves the strategic structure and information constraints of \(\mathcal{M}\), enabling exact transfer of value functions, equilibria, and decision rules \citep{sanjari2023isomorphism}. This requires: (i) \emph{value preservation}—for all joint policies \((\pi_{\textcolor{sthlmRed}{1}}, \pi_{\textcolor{sthlmRed}{2}})\), the expected return remains unchanged, that is, \(v^{\mathcal{M}}_{\pi_{\textcolor{sthlmRed}{1}}, \pi_{\textcolor{sthlmRed}{2}}}(b) = v^{\mathcal{M}'}_{\pi_{\textcolor{sthlmRed}{1}}, \pi_{\textcolor{sthlmRed}{2}}}(b)\); (ii) \emph{equilibrium correspondence}—a strategy profile is a Nash equilibrium in \(\mathcal{M}\) if and only if it is a Nash equilibrium in \(\mathcal{M}'\); and (iii) \emph{information structure preservation}—the reduction must not alter what each player can observe or condition decisions on. In particular, the set of admissible decision rules in the reduced game must match that of the original game with respect to private and public histories. This ensures that no strategic distinctions are introduced or lost. A reduction satisfying all three properties enables exact transfer of optimality equations, equilibrium characterisations, and theoretical guarantees between \(\mathcal{M}\) and \(\mathcal{M}'\).
\paragraph{Lossless Reductions.} A reduction from a zs-POSG \(\mathcal{M}\) to a surrogate game \(\mathcal{M}'\) is said to be \emph{lossless} if it preserves value functions, supports equilibrium transfer, and maintains the relevant information structure. This includes: (i) \emph{value preservation}, i.e., \(v^{\mathcal{M}}_{\pi_{\textcolor{sthlmRed}{1}}, \pi_{\textcolor{sthlmRed}{2}}}(b) = v^{\mathcal{M}'}_{\pi_{\textcolor{sthlmRed}{1}}, \pi_{\textcolor{sthlmRed}{2}}}(b)\) for all joint policies; (ii) \emph{equilibrium correspondence}, meaning each Nash equilibrium in \(\mathcal{M}'\) induces one in \(\mathcal{M}\) and vice versa; and (iii) \emph{information compatibility}, ensuring the reduction respects players’ original observation constraints and decision spaces. These conditions allow exact transfer of optimality equations and policies, while preserving the strategic essence of the original game.
\section{Reducing zs-POSGs to Transition-Independent zs-SGs}

This section presents the main contribution of this work: a lossless reduction from any zs-POSG to a strategically equivalent transition-independent zs-SG. The reduced model preserves the value function, equilibrium structure, and information constraints of the original zs-POSG, thereby enabling the exact transfer of dynamic programming principles and solution methods. The key insight behind this reduction is to factor the game into local planning processes, one for each player, while maintaining the strategic dependencies of the original interaction through a shared global state.

We reformulate the original zs-POSG as a planning process—a transition-independent zs-SG—through two complementary perspectives. The centralised view casts it as a planning problem executed by an \emph{uninformed planner}, a hypothetical central authority that selects joint decision rules without access to any observations. The decentralised view decomposes this process into two \emph{player-specific focal planners}, each reasoning independently over a single player’s sequence of decision rules.
This planning process unfolds stage by stage. At each time step \(t\), the underlying global state is an \emph{uninformed occupancy state} \(x_t\): a distribution over hidden states, private action-observation histories, and public observations, induced by the sequence of decision rules \(\theta_t \doteq (d_{\textcolor{sthlmRed}{1},0}, d_{\textcolor{sthlmRed}{2},0}, \ldots, d_{\textcolor{sthlmRed}{1},t-1}, d_{\textcolor{sthlmRed}{2},t-1})\). This global state captures the full strategic context of the game but remains unobservable to either player. Instead, each player reasons over a local state \(x_{\textcolor{sthlmRed}{i},t}\), defined as a player-specific \emph{occupancy set}: the collection of uninformed occupancy states consistent with their own sequence of decision rules \((d_{\textcolor{sthlmRed}{i},0}, \ldots, d_{\textcolor{sthlmRed}{i},t-1})\), regardless of the opponent’s choices. Based solely on this local state, player \(\textcolor{sthlmRed}{i}\) selects a decision rule \(d_{\textcolor{sthlmRed}{i},t}\) and transitions to the next local state \(\tau_{\textcolor{sthlmRed}{i}}(x_{\textcolor{sthlmRed}{i},t},d_{\textcolor{sthlmRed}{i},t})\), formed by appending the selected rule. This decentralised process continues until the planning horizon \(\ell+1\) is reached. At each step, the environment returns an immediate payoff \(\rho(x_t, d_{\textcolor{sthlmRed}{1},t}, d_{\textcolor{sthlmRed}{2},t})\) and updates the global state via \(\tau(x_t, d_{\textcolor{sthlmRed}{1},t}, d_{\textcolor{sthlmRed}{2},t})\), both unobservable to the players. Crucially, each local state evolves independently of the opponent, and the current global state satisfies \(\{x_t\} = x_{\textcolor{sthlmRed}{1},t} \cap x_{\textcolor{sthlmRed}{2},t}\). These properties define a structured model known as a \emph{transition-independent zero-sum stochastic game (zs-SG)} , \cf Figure \ref{graphical:model:zs:iiomg}.
\begin{figure}[!ht]
\centering
\scalebox{.8}{
\begin{tikzpicture}[->,>=stealth',auto,thick,node distance=4cm, square/.style={regular polygon,regular polygon sides=4}] %
  \tikzstyle{every state}=[draw=black,text=black,inner color= white,outer color= white,draw= black,text=black, drop shadow,inner xsep=0pt,  inner ysep=0pt, outer sep=0pt,  minimum size=0pt]
  \tikzstyle{place}=[thick,draw=sthlmBlue,fill=blue!20,minimum size=12mm, opacity=.5, inner xsep=4pt,  inner ysep=4pt, outer sep=4pt,  minimum size=0pt]
  \tikzstyle{red place}=[square,place,draw=sthlmRed,fill=sthlmLightRed,minimum size=17mm,inner xsep=0pt,  inner ysep=0pt, outer sep=0pt,  minimum size=0pt]
  \tikzstyle{green place}=[diamond, place,draw=sthlmGreen,fill=sthlmGreen!20, inner xsep=2pt,  inner ysep=2pt, outer sep=2pt,  minimum size=0pt]

  \draw[rounded corners, gray, fill=gray!10] (-2.5,-1) rectangle (12.5,1);
  \node[fill=gray!10,text=black,draw=none,scale=.7]  at (11,-.75) {Hidden};
  
 \node[initial, state, place] (S0) {$x_{t_0}$};
 \node[state, place] (S1) [right of=S0] {$x_{t_1}$};
 \node[state, place] (S2) [ right of=S1] {$x_{t_2}$};
 \node[] (S3) [ right of=S2] {$\ldots$};

  \node[state,green place, text=sthlmOrange]         (O0) [below  of=S0,node distance=2cm] { $x_{\textcolor{sthlmRed}{1},t_0}$ };
  \node[state,green place, text=sthlmOrange]         (O1) [below  of=S1,node distance=2cm] { $x_{\textcolor{sthlmRed}{1},t_1}$};
  \node[state,green place, text=sthlmOrange]         (O2) [below  of=S2,node distance=2cm] { $x_{\textcolor{sthlmRed}{1},t_2}$};
  \node[state,green place, text=sthlmGreen]         (O3) [above  of=S0,node distance=2cm] { $x_{\textcolor{sthlmRed}{2},t_0}$ };
  \node[state,green place, text=sthlmGreen]         (O4) [above  of=S1,node distance=2cm] { $x_{\textcolor{sthlmRed}{2},t_1}$ };
  \node[state,green place, text=sthlmGreen]         (O5) [above  of=S2,node distance=2cm] { $x_{\textcolor{sthlmRed}{2},t_2}$ };

 \node[state,red place,text=sthlmOrange]         (A0) [below right of=O0,node distance=2cm] {$d_{\textcolor{sthlmRed}{1},t_0}$};
 \node[state,red place,text=sthlmOrange]         (A1) [right of=A0]       {$d_{\textcolor{sthlmRed}{1},t_1}$};
 \node[]         (A2) [right of=A1]       {};
 \node[state,red place,text=sthlmGreen]         (A3) [above right of=O3,node distance=2cm] {$d_{\textcolor{sthlmRed}{2},t_0}$};
 \node[state,red place,text=sthlmGreen]         (A4) [right of=A3]       {$d_{\textcolor{sthlmRed}{2},t_1}$};
 \node[]         (A5) [right of=A4]       {};

 \node[]         (Time) at (-1,-4.65) {Time};
 \node[]         (T0) [below  of=A0,node distance=1.25cm] {$t_0$};
 \node[]         (T1) [below  of=A1,node distance=1.25cm] {$t_1$};
 \node[]         (T2) [below  of=A2,node distance=1.25cm] {$t_2$};

 \node[]         (N0) at (2.8,4.15) {};
 \node[]         (N1) at (2.8,-4.75) {};
 \draw[-,dotted] (N0)-- (N1);
 \draw[-,dotted] (-2.5,-1.1)--(12.5,-1.1);
 \draw[-,dotted] (-2.5,1.1)--(12.5,1.1);
 \node[]         (N2) at (6.8,4.15) {};
 \node[]         (N3) at (6.8,-4.75) {};
 \draw[-,dotted] (N2)--(N3);

\node[fill=white,text=sthlmOrange,draw=none,scale=.7]  at (10.75,-3.85) {($\mathtt{P}$)layer's viewpoint};
\path[-] (10.6,-5.35) pic[scale=0.75,color=sthlmOrange, every node/.style={scale=0.5}] {rightrobot};
\node[draw=sthlmOrange, text=sthlmOrange,circle,inner sep=1pt] at (11.5,-1.55) {\scriptsize Goal};
\node[draw=sthlmOrange,circle,inner sep=2pt] at (11,-1.85) {};

\node[fill=white,text=sthlmGreen,draw=none,scale=.7]  at (10.75,1.35) {($\mathtt{P}$)layer's viewpoint};
\path[-] (10.75,-.1) pic[scale=0.75,color=sthlmGreen, every node/.style={scale=0.5}] {rightrobot};
\node[draw=sthlmGreen, text=sthlmGreen,circle,inner sep=1pt] at (11.5,3.7) {\scriptsize Goal};
\node[draw=sthlmGreen,circle,inner sep=2pt] at (11,3.45) {};
  
\path (S0) edge[sthlmBlue] node[pos=.25, fill=none,text=black,draw=none,scale=.5, above]  {$\tau(x,d_{\textcolor{sthlmRed}{1}},d_{\textcolor{sthlmRed}{2}})$} (S1)
	 (S1) edge[sthlmBlue]  node[pos=.25, fill=none,text=black,draw=none,scale=.5, above] {$\tau(x,d_{\textcolor{sthlmRed}{1}},d_{\textcolor{sthlmRed}{2}})$} (S2)
	 (S2) edge[sthlmBlue]  node[pos=.25, fill=none,text=black,draw=none,scale=.5, above]  {$\tau(x,d_{\textcolor{sthlmRed}{1}},d_{\textcolor{sthlmRed}{2}})$} (S3);

\path (S0) edge[sthlmBlue] node[pos=.2, fill=none,text=black,draw=none,scale=.5, below]  {$\rho(x,d_{\textcolor{sthlmRed}{1}},d_{\textcolor{sthlmRed}{2}})$} (S1)
	 (S1) edge[sthlmBlue]  node[pos=.2, fill=none,text=black,draw=none,scale=.5, below] {$\rho(x,d_{\textcolor{sthlmRed}{1}},d_{\textcolor{sthlmRed}{2}})$} (S2)
	 (S2) edge[sthlmBlue]  node[pos=.2, fill=none,text=black,draw=none,scale=.5, below]  {$\rho(x,d_{\textcolor{sthlmRed}{1}},d_{\textcolor{sthlmRed}{2}})$} (S3);

\path (A0) edge  [sthlmRed, out=90, in=-155] node {} (S1)
         (A1) edge  [sthlmRed, out=90, in=-155] node {} (S2)
         (A3) edge  [sthlmRed, out=-90, in=-205] node {} (S1)
         (A4) edge  [sthlmRed, out=-90, in=-205] node {} (S2);

\path[sthlmGreen] (A0) edge[ out=90, in=-180]  node {} (O1)
         (A1) edge [ out=90, in=-180]  node {} (O2)
         (A3) edge [ out=-90, in=-180]  node {} (O4)
         (A4) edge [ out=-90, in=-180]  node {} (O5);

  \path[sthlmGreen] (S0) edge    node {} (O0)
            edge    node {} (O3)
       	   (S1)  edge    node {} (O1)
            edge    node {} (O4)
        (S2)  edge    node {} (O2)
            edge    node {} (O5);
            
\end{tikzpicture}
}
\caption{An influence diagram of a transition-independent, two-player, zero-sum stochastic game.}
\label{graphical:model:zs:iiomg}
\end{figure}
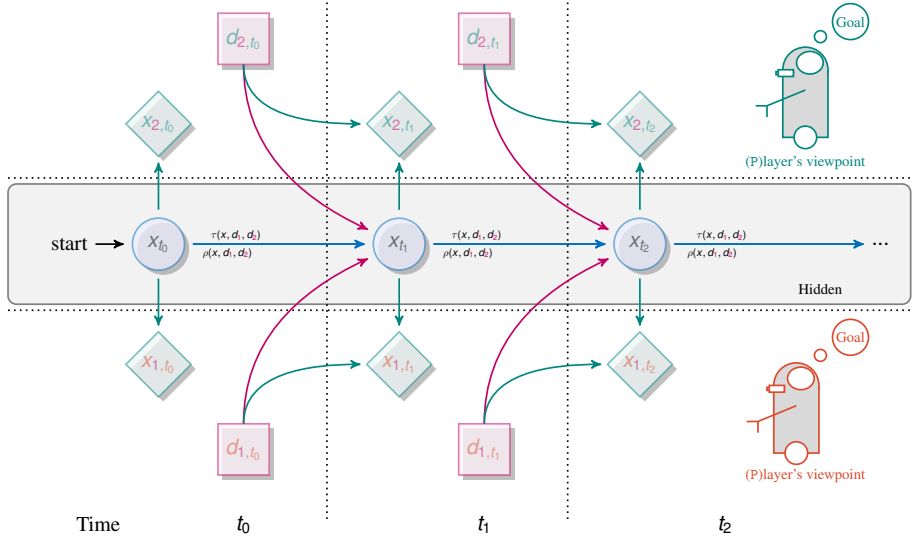   

We first describe model \(\mathcal{M}_{\textcolor{sthlmRed}{i}}\) for each player-specific \emph{focal planner}—a single-agent planner that selects policies for player \(\textcolor{sthlmRed}{i}\), based only on their own decision-rule history, with no access to observations or the opponent’s policy. This defines the least informed level of the planner hierarchy and serves as the local computational engine underlying decentralised dynamic programming within the zs-SG.

\begin{definition}
\label{def:focal:planner}
A player-specific focal planning process \(\mathcal{M}_{\textcolor{sthlmRed}{i}} = (\mathcal{X}_{\textcolor{sthlmRed}{i}}, 
\mathcal{F}_{\textcolor{sthlmRed}{i}},
\mathcal{D}_{\textcolor{sthlmRed}{i}}, \tau_{\textcolor{sthlmRed}{i}}, \rho_{\textcolor{sthlmRed}{i}})\) consists of: a set \(\mathcal{X}_{\textcolor{sthlmRed}{i}}\) of local states (occupancy sets); a set \(\mathcal{F}_{\textcolor{sthlmRed}{i}}\subset \mathcal{X}_{\textcolor{sthlmRed}{i}}\) of (terminal)  occupancy sets at stage \(\ell+1\); a set of decision rules \(\mathcal{D}_{\textcolor{sthlmRed}{i}}\); a local transition operator
\(
\tau_{\textcolor{sthlmRed}{i}}(x_{\textcolor{sthlmRed}{i}}, d_{\textcolor{sthlmRed}{i}}) = \{ \tau(x, d_{\textcolor{sthlmRed}{i}}, d_{\textcolor{sthlmRed}{-i}}) | x \in x_{\textcolor{sthlmRed}{i}}, \, d_{\textcolor{sthlmRed}{-i}} \in \mathcal{D}_{\textcolor{sthlmRed}{-i}} \}
\),
and a local payoff function, which is zero if \(x_{\textcolor{sthlmRed}{i}}\in \mathcal{X}_{\textcolor{sthlmRed}{i}}\backslash \mathcal{F}_{\textcolor{sthlmRed}{i}}\) and \(\rho_{\textcolor{sthlmRed}{i}}\colon x_{\textcolor{sthlmRed}{i}}  \mapsto \operatorname{\overline{\mathtt{opt}}}_{x\in x_{\textcolor{sthlmRed}{i}}} g(x) \) otherwise, where the operator \(\operatorname{\overline{\mathtt{opt}}}\) corresponds to \(\min\) for player \(\textcolor{sthlmRed}{1}\) and \(\max\) for player \(\textcolor{sthlmRed}{2}\), and, for any uninformed occupancy state \(x_{t+1}\),
\(
\textstyle
g(x_{t+1}) \doteq
    \mathbb{E}_{(s_0,a_{\textcolor{sthlmRed}{1},0},a_{\textcolor{sthlmRed}{2},0},\ldots,s_t,a_{\textcolor{sthlmRed}{1},t},a_{\textcolor{sthlmRed}{2},t},s_{t+1})\sim \Pr(\cdot|x_{t+1})} [\sum_{t'=0}^t \gamma^{t'} \cdot r(s_{t'},a_{\textcolor{sthlmRed}{1},t'},a_{\textcolor{sthlmRed}{2},t'})]
\).
\end{definition}
Having formalised the focal planning processes as \(\mathcal{M}_{\textcolor{sthlmRed}{1}}\) and \(\mathcal{M}_{\textcolor{sthlmRed}{2}}\), we now lift these constructions to define the transition-independent zero-sum stochastic game \(\mathcal{M}'\), which governs the joint dynamics over uninformed occupancy states and their associated objective.
\begin{definition}
\label{def:transition:independent:zs:sg}
A transition-independent zero-sum stochastic game (zs-SG) is a tuple \(\mathcal{M}' = (\mathcal{X},\mathcal{F}, \mathcal{M}_{\textcolor{sthlmRed}{1}}, \mathcal{M}_{\textcolor{sthlmRed}{2}}, \tau, \varphi, \rho, \ell, \gamma)\), where \(\mathcal{X}\) is the set of uninformed occupancy states; \(\mathcal{F}\subset \mathcal{X}\) is the set of (terminal) uninformed occupancy states at stage \(\ell+1\); \(\tau : (x, d_{\textcolor{sthlmRed}{1}}, d_{\textcolor{sthlmRed}{2}}) \mapsto x\prime\) is the global transition function, that is, for all hidden states \(s\prime\), private action-observation histories \((h_{\textcolor{sthlmRed}{i}},a_{\textcolor{sthlmRed}{i}},z_{\textcolor{sthlmRed}{i}})\) for player \(\textcolor{sthlmRed}{i}\), and public observation histories \((h_{\textcolor{sthlmRed}{\text{pub}}},w)\),
\[
\textstyle
x\prime(s\prime,(h_{\textcolor{sthlmRed}{i}},a_{\textcolor{sthlmRed}{i}},z_{\textcolor{sthlmRed}{i}})_{\textcolor{sthlmRed}{i}}, (h_{\textcolor{sthlmRed}{\text{pub}}},w)) 
= \sum_s
x(s,h_{\textcolor{sthlmRed}{1}},h_{\textcolor{sthlmRed}{2}},h_{\textcolor{sthlmRed}{\text{pub}}}) 
p(s\prime,z_{\textcolor{sthlmRed}{1}},z_{\textcolor{sthlmRed}{2}},w|s,a_{\textcolor{sthlmRed}{1}},a_{\textcolor{sthlmRed}{2}})
\prod_{\textcolor{sthlmRed}{i}}
d_{\textcolor{sthlmRed}{i}}(a_{\textcolor{sthlmRed}{i}}|h_{\textcolor{sthlmRed}{i}},h_{\textcolor{sthlmRed}{\text{pub}}}); 
\]
\(\varphi(x_{\textcolor{sthlmRed}{1}}, x_{\textcolor{sthlmRed}{2}}) = x\), where \(\{x\} = x_{\textcolor{sthlmRed}{1}} \cap x_{\textcolor{sthlmRed}{2}}\); \(\rho : (x, d_{\textcolor{sthlmRed}{1}}, d_{\textcolor{sthlmRed}{2}}) \mapsto \mathbb{R}\) is the stage-wise payoff function, given by
\[
\rho (x, d_{\textcolor{sthlmRed}{1}}, d_{\textcolor{sthlmRed}{2}})
=\textstyle 
\sum_s
\sum_{h_{\textcolor{sthlmRed}{1}},h_{\textcolor{sthlmRed}{2}}}
\sum_{h_{\textcolor{sthlmRed}{\text{pub}}}}
x(s,h_{\textcolor{sthlmRed}{1}},h_{\textcolor{sthlmRed}{2}},h_{\textcolor{sthlmRed}{\text{pub}}}) 
\sum_{a_{\textcolor{sthlmRed}{1}},a_{\textcolor{sthlmRed}{2}}}
d_{\textcolor{sthlmRed}{1}}(a_{\textcolor{sthlmRed}{1}}|h_{\textcolor{sthlmRed}{1}},h_{\textcolor{sthlmRed}{\text{pub}}})
d_{\textcolor{sthlmRed}{2}}(a_{\textcolor{sthlmRed}{2}}|h_{\textcolor{sthlmRed}{2}},h_{\textcolor{sthlmRed}{\text{pub}}})
r(s,a_{\textcolor{sthlmRed}{1}},a_{\textcolor{sthlmRed}{2}}); 
\]
\(\ell+1\) is the horizon and \(\gamma \in [0,1)\) the discount factor. Each component \(\mathcal{M}_{\textcolor{sthlmRed}{i}}\) captures the planning process from the perspective of player \(\textcolor{sthlmRed}{i}\), see Definition \ref{def:focal:planner}.
\end{definition}

Having defined the transition-independent zs-SG \(\mathcal{M}'\), we now specify the objective of solving it. The goal is to find policies \(\psi_{\textcolor{sthlmRed}{1}}\colon \mathcal{X}_{\textcolor{sthlmRed}{1}}\to \mathcal{D}_{\textcolor{sthlmRed}{1}}\) and \(\psi_{\textcolor{sthlmRed}{2}}\colon \mathcal{X}_{\textcolor{sthlmRed}{2}}\to \mathcal{D}_{\textcolor{sthlmRed}{2}}\) that map player-specific occupancy sets to decision rules. These are \emph{occupancy-set dependent} policies, tailored to the decentralised structure of the reduced game. Let \(\Psi_{\textcolor{sthlmRed}{i}}\) denote the set of such policies for player \(\textcolor{sthlmRed}{i}\). Given an initial uninformed occupancy state \(x_0 = b\) and initial occupancy sets \(x_{\textcolor{sthlmRed}{1},0} = x_{\textcolor{sthlmRed}{2},0} = \{b\}\), the expected cumulative discounted payoff under joint policy \((\psi_{\textcolor{sthlmRed}{1}}, \psi_{\textcolor{sthlmRed}{2}})\) is defined as
\[
\textstyle v'_{\psi_{\textcolor{sthlmRed}{1}}, \psi_{\textcolor{sthlmRed}{2}}}(b) 
\doteq 
\sum_{t=0}^{\ell} 
\gamma^t \cdot 
\rho(x_t, d_{\textcolor{sthlmRed}{1},t}, d_{\textcolor{sthlmRed}{2},t}) 
| 
x_t = \varphi(x_{\textcolor{sthlmRed}{1},t}, x_{\textcolor{sthlmRed}{2},t}), 
d_{\textcolor{sthlmRed}{i},t} = \psi_{\textcolor{sthlmRed}{i}}(x_{\textcolor{sthlmRed}{i},t}),
x_{\textcolor{sthlmRed}{i},t} = \tau_{\textcolor{sthlmRed}{i}}(x_{\textcolor{sthlmRed}{i},t-1},d_{\textcolor{sthlmRed}{i},t-1}).
\]
Player \(\textcolor{sthlmRed}{1}\) seeks to maximise this quantity, while player \(\textcolor{sthlmRed}{2}\) aims to minimise it. We now show that the reduced game \(\mathcal{M}'\) admits a well-defined value and satisfies the minimax property, enabling us to reason about optimal policies via standard game-theoretic principles.
\begin{restatable}[]{lem}{lembehavioralequivalence}%[Proof in Appendix \ref{prooflembehavioralequivalence}]
\label{lem:behavioral:equivalence}
The reduced game \(\mathcal{M}'\) admits a well-defined value \(v'_{*}(b)\), which satisfies the minimax identity:
\(
\textstyle
v'_{*}(b) = \min_{\psi_{\textcolor{sthlmRed}{2}} \in \Psi_{\textcolor{sthlmRed}{2}}} \max_{\psi_{\textcolor{sthlmRed}{1}} \in \Psi_{\textcolor{sthlmRed}{1}}} v'_{\psi_{\textcolor{sthlmRed}{1}}, \psi_{\textcolor{sthlmRed}{2}}}(b) = \max_{\psi_{\textcolor{sthlmRed}{1}} \in \Psi_{\textcolor{sthlmRed}{1}}} \min_{\psi_{\textcolor{sthlmRed}{2}} \in \Psi_{\textcolor{sthlmRed}{2}}} v'_{\psi_{\textcolor{sthlmRed}{1}}, \psi_{\textcolor{sthlmRed}{2}}}(b)
\).
\end{restatable}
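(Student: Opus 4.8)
The plan is to transfer the minimax structure from the original zs-POSG $\mathcal{M}$ rather than re-establish compactness and convexity directly in the reduced game $\mathcal{M}'$. By the preliminaries, $\mathcal{M}$ already admits a value $v_*(b)$ with $v_*(b) = \min_{\pi_2}\max_{\pi_1} v_{\pi_1,\pi_2}(b) = \max_{\pi_1}\min_{\pi_2} v_{\pi_1,\pi_2}(b)$ (von Neumann's theorem extended to behavioural strategies under perfect recall). The bridge is a value-preserving correspondence between occupancy-set policies $\psi_i \in \Psi_i$ and history-dependent policies $\pi_i \in \Pi_i$, which I would use to push every optimisation in $\mathcal{M}'$ onto the matching optimisation in $\mathcal{M}$.

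First I would make the policy correspondence explicit. Fix $\psi_i \in \Psi_i$ and start from $x_{i,0} = \{b\}$. Because $\tau_i$ merely appends the selected rule, the occupancy set at stage $t$ is a deterministic function of the decision-rule prefix, so unrolling $\psi_i$ via $d_{i,t} = \psi_i(x_{i,t})$ and $x_{i,t+1} = \tau_i(x_{i,t},d_{i,t})$ yields a unique sequence $\pi_i = (d_{i,0},\dots,d_{i,\ell}) \in \Pi_i$. Conversely, given any $\pi_i$, the occupancy sets $x_{i,0},\dots,x_{i,\ell}$ visited along its unrolling are pairwise distinct, since they encode histories of strictly increasing length and hence lie in disjoint strata of $\mathcal{X}_i$; I can therefore define $\psi_i$ to return $d_{i,t}$ on $x_{i,t}$ and arbitrarily elsewhere, and this $\psi_i$ unrolls back to $\pi_i$. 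Thus the unrolling map is onto $\Pi_i$, so the two policy classes realise exactly the same behaviours for each player separately.

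Next I would prove value preservation: $v'_{\psi_1,\psi_2}(b) = v_{\pi_1,\pi_2}(b)$ whenever both $\psi_i$ unroll to $\pi_i$. By induction on $t$, the global occupancy state $x_t = \varphi(x_{1,t},x_{2,t})$ produced by the reduced dynamics equals the true stage-$t$ occupancy distribution $\Pr(s_t,h_1,h_2,h_{\text{pub}})$ induced by $(\pi_1,\pi_2)$ in $\mathcal{M}$: the base case is $x_0 = b$, and the inductive step is precisely the definition of the global transition $\tau$, which is the Bayesian occupancy update. Here I would also invoke the construction's consistency property $\{x_t\} = x_{1,t}\cap x_{2,t}$ so that $\varphi$ returns the single occupancy state jointly consistent with both players' rules. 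Given the induction, the definition of $\rho(x_t,d_{1,t},d_{2,t})$ is exactly the expected immediate reward $\mathbb{E}[r(s_t,a_{1,t},a_{2,t})]$ at stage $t$, whence $v'_{\psi_1,\psi_2}(b) = \sum_{t=0}^{\ell}\gamma^t \rho(x_t,d_{1,t},d_{2,t}) = \mathbb{E}[\sum_{t=0}^{\ell}\gamma^t r(s_t,a_{1,t},a_{2,t})] = v_{\pi_1,\pi_2}(b)$.

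Finally, combining the two steps, for each fixed $\psi_2$ unrolling to $\pi_2$ the value-preserving surjection from $\Psi_1$ onto $\Pi_1$ gives $\max_{\psi_1} v'_{\psi_1,\psi_2}(b) = \max_{\pi_1} v_{\pi_1,\pi_2}(b)$, and symmetrically with the roles swapped; pushing the remaining optimisation through the correspondence then yields $\min_{\psi_2}\max_{\psi_1} v'_{\psi_1,\psi_2}(b) = \min_{\pi_2}\max_{\pi_1} v_{\pi_1,\pi_2}(b) = v_*(b) = \max_{\pi_1}\min_{\pi_2} v_{\pi_1,\pi_2}(b) = \max_{\psi_1}\min_{\psi_2} v'_{\psi_1,\psi_2}(b)$, so setting $v'_*(b) \doteq v_*(b)$ establishes both existence and the minimax identity. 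I expect the main obstacle to be the rigorous justification of the correspondence rather than the algebra: one must verify that occupancy-set policies are neither strictly weaker nor stronger than behavioural policies, which hinges on occupancy sets being faithful sufficient statistics for decision-rule prefixes and on the transition independence that lets each player's local state evolve without reference to the opponent, guaranteeing that the reverse assignment of $\psi_i$ is well defined on the reachable stratum of $\mathcal{X}_i$.
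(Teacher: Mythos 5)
Your proof is correct and follows essentially the same route as the paper's: both rest on the correspondence between occupancy-set dependent policies and behavioural strategies of the underlying extensive-form game, combined with perfect recall (Kuhn's theorem) and the behavioural minimax theorem. The only organisational difference is that you establish surjectivity of the unrolling map and value preservation inline and then transfer the known value of \(\mathcal{M}\), whereas the paper's proof of the lemma only constructs the forward map \(\psi_{\textcolor{sthlmRed}{i}} \mapsto \pi_{\textcolor{sthlmRed}{i}}\) and invokes the minimax theorem directly on the normal-form version of \(\mathcal{M}'\), deferring value preservation and the converse correspondence to the proof of Theorem~\ref{thm:reduced:game}.
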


The objective in solving \(\mathcal{M}'\) is to compute an optimal policy \(\psi^*_{\textcolor{sthlmRed}{i}}\) for each player \(\textcolor{sthlmRed}{i}\) such that
\[
\textstyle
\min_{\psi_{\textcolor{sthlmRed}{2}} \in \Psi_{\textcolor{sthlmRed}{2}}} v'_{\psi^*_{\textcolor{sthlmRed}{1}}, \psi_{\textcolor{sthlmRed}{2}}}(b) = 
\max_{\psi_{\textcolor{sthlmRed}{1}} \in \Psi_{\textcolor{sthlmRed}{1}}} v'_{\psi_{\textcolor{sthlmRed}{1}}, \psi^*_{\textcolor{sthlmRed}{2}}}(b)
= v'_{*}(b).
\]
We now formally state our main theoretical result, which establishes that the reduced game satisfies the lossless reduction criteria introduced above.
\begin{restatable}[]{thm}{thmreducedgame}
\label{thm:reduced:game}
The reduced game \(\mathcal{M}'\) constitutes a lossless reduction of the original zs-POSG \(\mathcal{M}\).
\end{restatable}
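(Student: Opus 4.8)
The plan is to verify the three defining criteria of a lossless reduction in turn—information compatibility, value preservation, and equilibrium correspondence—all of which rest on a single structural observation that I would establish first. Since the initial occupancy set \(x_{\textcolor{sthlmRed}{i},0} = \{b\}\) is fixed and the local transition \(\tau_{\textcolor{sthlmRed}{i}}(x_{\textcolor{sthlmRed}{i},t-1}, d_{\textcolor{sthlmRed}{i},t-1})\) merely appends the chosen rule, each reachable occupancy set \(x_{\textcolor{sthlmRed}{i},t}\) is in one-to-one correspondence with the decision-rule history \((d_{\textcolor{sthlmRed}{i},0}, \ldots, d_{\textcolor{sthlmRed}{i},t-1})\) that produced it. Consequently, an occupancy-set-dependent policy \(\psi_{\textcolor{sthlmRed}{i}} \in \Psi_{\textcolor{sthlmRed}{i}}\), unrolled from \(x_{\textcolor{sthlmRed}{i},0}\), traces a unique path of occupancy sets and hence a unique decision-rule sequence \(\pi_{\textcolor{sthlmRed}{i}} = (d_{\textcolor{sthlmRed}{i},0}, \ldots, d_{\textcolor{sthlmRed}{i},\ell}) \in \Pi_{\textcolor{sthlmRed}{i}}\); conversely, every \(\pi_{\textcolor{sthlmRed}{i}}\) is realised by such a policy. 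I would package this as a bijection \(\Phi_{\textcolor{sthlmRed}{i}} : \Psi_{\textcolor{sthlmRed}{i}} \to \Pi_{\textcolor{sthlmRed}{i}}\). Because each occupancy set is built exclusively from player \(\textcolor{sthlmRed}{i}\)'s own rule history, and each \(d_{\textcolor{sthlmRed}{i},t}\) conditions only on \(\textcolor{sthlmRed}{i}\)'s private history and the public history, this correspondence respects the original observation constraints and decision spaces, which is exactly criterion (iii).

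Next I would prove value preservation (criterion (i)) by induction on the stage \(t\), showing that the global state \(x_t = \varphi(x_{\textcolor{sthlmRed}{1},t}, x_{\textcolor{sthlmRed}{2},t})\) generated along the trajectory equals the true occupancy state of \(\mathcal{M}\) under \((\Phi_{\textcolor{sthlmRed}{1}}(\psi_{\textcolor{sthlmRed}{1}}), \Phi_{\textcolor{sthlmRed}{2}}(\psi_{\textcolor{sthlmRed}{2}}))\), i.e. the joint law \(\Pr(s_t, h_{\textcolor{sthlmRed}{1},t}, h_{\textcolor{sthlmRed}{2},t}, h_{\text{pub},t})\) induced by \(b\), \(p\), and the policy pair. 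The base case is \(x_0 = b\). For the inductive step I would check that the global transition \(\tau\) of Definition~\ref{def:transition:independent:zs:sg} is precisely the one-step Bayesian push-forward of this distribution under \(d_{\textcolor{sthlmRed}{1},t}, d_{\textcolor{sthlmRed}{2},t}\) and \(p\)—a direct match between the summation defining \(\tau\) and the law of total probability. Given the hypothesis, the stage payoff \(\rho(x_t, d_{\textcolor{sthlmRed}{1},t}, d_{\textcolor{sthlmRed}{2},t})\) is, term by term, the expected immediate reward \(\mathbb{E}[r(s_t, a_{\textcolor{sthlmRed}{1},t}, a_{\textcolor{sthlmRed}{2},t})]\) at stage \(t\); summing with weight \(\gamma^t\) over \(t = 0, \ldots, \ell\) yields \(v'_{\psi_{\textcolor{sthlmRed}{1}}, \psi_{\textcolor{sthlmRed}{2}}}(b) = v^{\mathcal{M}}_{\Phi_{\textcolor{sthlmRed}{1}}(\psi_{\textcolor{sthlmRed}{1}}), \Phi_{\textcolor{sthlmRed}{2}}(\psi_{\textcolor{sthlmRed}{2}})}(b)\).

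Finally, equilibrium correspondence (criterion (ii)) follows by combining the bijection with value preservation. Since \(\Phi_{\textcolor{sthlmRed}{1}}, \Phi_{\textcolor{sthlmRed}{2}}\) are bijections satisfying \(v'_{\psi_{\textcolor{sthlmRed}{1}}, \psi_{\textcolor{sthlmRed}{2}}}(b) = v^{\mathcal{M}}_{\Phi_{\textcolor{sthlmRed}{1}}(\psi_{\textcolor{sthlmRed}{1}}), \Phi_{\textcolor{sthlmRed}{2}}(\psi_{\textcolor{sthlmRed}{2}})}(b)\), fixing one player's policy and optimising over the other is identical in both games; hence best-response sets are carried across by \(\Phi_{\textcolor{sthlmRed}{i}}\), and a pair is a saddle point in \(\mathcal{M}'\) iff its image is a saddle point in \(\mathcal{M}\). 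Invoking Lemma~\ref{lem:behavioral:equivalence} for the existence of \(v'_*(b)\) and the minimax identity in \(\mathcal{M}'\), together with the minimax property of \(\mathcal{M}\) recalled in the preliminaries, the two game values coincide, \(v'_*(b) = v_*(b)\), and the map between Nash equilibria is a bijection. The three parts together establish that \(\mathcal{M}'\) is a lossless reduction of \(\mathcal{M}\).

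I expect the main obstacle to be the inductive identification of \(x_t\) with the true occupancy state: one must verify that the intersection constraint \(\{x_t\} = x_{\textcolor{sthlmRed}{1},t} \cap x_{\textcolor{sthlmRed}{2},t}\), combined with the independently evolving local states, reconstructs exactly the coupled joint distribution, and that no probability mass is lost or double-counted when the private and public history components are updated simultaneously by \(\tau\). Particular care is needed to confirm that the transition-independence of the local dynamics remains consistent with the coupled reward \(\rho\), so that decentralised occupancy-set policies retain the full expressive power of history-dependent behavioural policies.
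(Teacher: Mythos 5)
Your proposal is correct and follows essentially the same route as the paper: both arguments rest on the correspondence between occupancy-set dependent policies and decision-rule sequences, identify \(\rho(x_t,d_{1,t},d_{2,t})\) with the expected stage reward to obtain value preservation, transfer the saddle-point inequalities across that correspondence for equilibrium correspondence, and obtain information compatibility from the unchanged decision-rule sets. The only cosmetic difference is that you make the policy correspondence and the stagewise identification of \(x_t\) with the joint law \(\Pr(s_t,h_{1,t},h_{2,t},h_{\text{pub},t})\) explicit, whereas the paper asserts these steps directly; note only that your map \(\Psi_i \to \Pi_i\) is strictly a surjection whose fibres agree on the single reachable trajectory from \(x_{i,0}\), which is all the argument actually requires.
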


This reformulation preserves the strategic structure of the original zs-POSG while enabling dynamic programming across a hierarchy of planners, avoiding explicit reasoning over joint policy spaces and supporting scalable, \(\varepsilon\)-exact solution methods.

\section{Solving Transition-Independent zs-SGs via A Hierarchy of Planners}

Transition-independent zs-SGs enable value-based planning over structured state spaces induced by sequences of decision rules selected independently by each player. This structure supports a \emph{hierarchy of planners}, from local focal planners reasoning unilaterally over a single player’s policy to more informed marginal and central planners. While the two least-informed planners suffice to define the reduced game, the full hierarchy enables more efficient solutions. Foundational results in planning and reinforcement learning show that hierarchical formulations improve efficiency by introducing abstraction, decomposition, and temporally extended reasoning \citep{ghallab2004automated,andre2002state,vezhnevets2017feudal,kaelbling2011hierarchical}. Similarly, in transition-independent zs-SGs, the planner hierarchy refines state representations—from player-specific occupancy sets to marginal occupancy states—while preserving strategic structure and enabling dynamic programming. The hierarchy includes two \emph{focal planners}, one per player; a central \emph{uninformed planner} blind to observations; an \emph{informed planner} with access to public signals; and two \emph{marginal planners}, each aware of both policies but only one player’s private trajectory. This layered structure underpins the analysis and algorithms presented next.

Solving a transition-independent zs-SG can be approached by solving focal planning problems, each defined over a player-specific process \(\mathcal{M}_{\textcolor{sthlmRed}{i}}\). These problems may be solved independently—when computing only a safe policy for one player—or jointly, since the planners share structural components. The objective is to compute an optimal occupancy-set dependent policy \(\psi^*_{\textcolor{sthlmRed}{i}}\) that optimises the worst-case expected return. The value of a given policy \(\psi_{\textcolor{sthlmRed}{i}}\) from initial state \(x_{\textcolor{sthlmRed}{i},0}\) is \(v_{\textcolor{sthlmRed}{i},\psi_{\textcolor{sthlmRed}{i}}}(x_{\textcolor{sthlmRed}{i},0}) = \rho_{\textcolor{sthlmRed}{i}}(x_{\textcolor{sthlmRed}{i},\ell+1})\), where \(x_{\textcolor{sthlmRed}{i},t} = \tau_{\textcolor{sthlmRed}{i}}(x_{\textcolor{sthlmRed}{i},t-1}, \psi_{\textcolor{sthlmRed}{i}}(x_{\textcolor{sthlmRed}{i},t-1}))\). The optimal value function is then \(v_{\textcolor{sthlmRed}{i},*}(x_{\textcolor{sthlmRed}{i},0}) = \operatorname{\mathtt{opt}}_{\psi_{\textcolor{sthlmRed}{i}} \in \Psi_{\textcolor{sthlmRed}{i}}} v_{\textcolor{sthlmRed}{i},\psi_{\textcolor{sthlmRed}{i}}}(x_{\textcolor{sthlmRed}{i},0})\). The following result characterises this function through \citeauthor{bellman}’s optimality equations.
\begin{restatable}[]{thm}{thmbellmansoptimality}
\label{thm:bellman:optimality}
The optimal state-value function \(v_{\textcolor{sthlmRed}{i},*} \colon \mathcal{X}_{\textcolor{sthlmRed}{i}} \to \mathbb{R}\) of \(\mathcal{M}_{\textcolor{sthlmRed}{i}}\) satisfies \citeauthor{bellman}'s optimality equations: \(v_{\textcolor{sthlmRed}{i},*}(x_{\textcolor{sthlmRed}{i}}) = \rho_{\textcolor{sthlmRed}{i}}(x_{\textcolor{sthlmRed}{i}})\) if \(x_{\textcolor{sthlmRed}{i}} \in \mathcal{F}_{\textcolor{sthlmRed}{i}}\), and \(v_{\textcolor{sthlmRed}{i},*}(x_{\textcolor{sthlmRed}{i}}) = \operatorname{\mathtt{opt}}_{d_{\textcolor{sthlmRed}{i}} \in \mathcal{D}_{\textcolor{sthlmRed}{i}}} v_{\textcolor{sthlmRed}{i},*}(\tau_{\textcolor{sthlmRed}{i}}(x_{\textcolor{sthlmRed}{i}}, d_{\textcolor{sthlmRed}{i}}))\) otherwise; with an optimal policy given by \(\psi^*_{\textcolor{sthlmRed}{i}} \colon x_{\textcolor{sthlmRed}{i}} \mapsto \arg\operatorname{\mathtt{opt}}_{d_{\textcolor{sthlmRed}{i}} \in \mathcal{D}_{\textcolor{sthlmRed}{i}}} v_{\textcolor{sthlmRed}{i},*}(\tau_{\textcolor{sthlmRed}{i}}(x_{\textcolor{sthlmRed}{i}}, d_{\textcolor{sthlmRed}{i}}))\),
where the optimisation operator \(\operatorname{\mathtt{opt}}\) corresponds to \(\max\) for player \(\textcolor{sthlmRed}{1}\) and \(\min\) for player \(\textcolor{sthlmRed}{2}\).
\end{restatable}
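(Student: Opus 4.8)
The plan is to recognise that each focal planning process $\mathcal{M}_{i}$ is, despite its game-theoretic origin, a \emph{deterministic}, finite-horizon, \emph{single-agent} control process: the transition $\tau_{i}(x_{i},d_{i})$ is a deterministic function of the current occupancy set and the chosen decision rule, and all adversarial reasoning over the opponent has already been folded into the terminal payoff $\rho_{i}(x_{i}) = \operatorname{\overline{\mathtt{opt}}}_{x\in x_{i}} g(x)$ via the inner worst-case operator $\operatorname{\overline{\mathtt{opt}}}$. Consequently the focal planner faces a plain optimisation, and I would prove the optimality equations by backward induction on the stage (depth) index $t\in\{\ell+1,\ell,\dots,0\}$. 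First I would make the stage structure explicit: every occupancy set reachable from the root $x_{i,0}$ carries a depth equal to the number of decision rules appended, $\tau_{i}(\cdot,d_{i})$ increments the depth by one, and depth $\ell+1$ is precisely $\mathcal{F}_{i}$. The crucial structural observation is that the running reward vanishes off $\mathcal{F}_{i}$, so the value $v_{i,\psi_{i}}(x_{i})$ of any policy started at $x_{i}$ equals $\rho_{i}$ evaluated at the \emph{unique} terminal set reached by iterating $\tau_{i}(\cdot,\psi_{i}(\cdot))$ from $x_{i}$; this single object is what the recursion must decompose.

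The base case is immediate: for $x_{i}\in\mathcal{F}_{i}$ the process has terminated, so every policy gives $v_{i,\psi_{i}}(x_{i})=\rho_{i}(x_{i})$ and hence $v_{i,*}(x_{i})=\rho_{i}(x_{i})$. For the inductive step at a non-terminal $x_{i}$, I would split an arbitrary $\psi_{i}$ into the decision rule $d_{i}=\psi_{i}(x_{i})$ it prescribes at $x_{i}$ and its restriction to the states reachable from the successor $\tau_{i}(x_{i},d_{i})$. Because the terminal set ultimately reached is determined by $d_{i}$ together with this continuation, and because the reward is terminal-only, $v_{i,\psi_{i}}(x_{i})=v_{i,\psi_{i}}(\tau_{i}(x_{i},d_{i}))$. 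Optimising over $\psi_{i}$ then separates into an outer optimisation over $d_{i}$ and an inner optimisation over the continuation, which by the induction hypothesis applied at the successor gives $v_{i,*}(x_{i})=\operatorname{\mathtt{opt}}_{d_{i}\in\mathcal{D}_{i}} v_{i,*}(\tau_{i}(x_{i},d_{i}))$. The greedy rule $\psi^{*}_{i}(x_{i})=\arg\operatorname{\mathtt{opt}}_{d_{i}\in\mathcal{D}_{i}} v_{i,*}(\tau_{i}(x_{i},d_{i}))$ then attains this value, again by applying the inductive optimality of $\psi^{*}_{i}$ at the successor, closing the induction.

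The step I expect to require the most care is the separation of the outer and inner optimisations. I would justify it through the forward-acyclic structure of the occupancy-set dynamics: since depth strictly increases under $\tau_{i}$, no occupancy set is its own descendant, and the value $v_{i,\psi_{i}}(x_{i})$ depends only on the values $\psi_{i}$ takes on the descendants of $x_{i}$. This locality is exactly what licenses choosing the continuation policy below each successor freely and independently of the choice $d_{i}$ made at $x_{i}$, and hence the exchange $\operatorname{\mathtt{opt}}_{\psi_{i}}=\operatorname{\mathtt{opt}}_{d_{i}}\operatorname{\mathtt{opt}}_{\text{cont}}$. A secondary technical point is the \emph{attainment} of the $\arg\operatorname{\mathtt{opt}}$ needed for the optimal policy to be well defined: since $\mathcal{D}_{i}$ is a product of simplices over finitely many history/public-observation pairs, it is compact, and $d_{i}\mapsto v_{i,*}(\tau_{i}(x_{i},d_{i}))$ is continuous because each constituent next occupancy state is affine in $d_{i}$, $g$ and therefore $\rho_{i}$ are continuous, and continuity propagates through the finitely many induction stages. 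Here I would invoke the uniform continuity of the value functions established elsewhere in the paper; with compactness plus continuity the optimum is attained, so the $\arg\operatorname{\mathtt{opt}}$ is non-empty and the stated optimal policy exists.
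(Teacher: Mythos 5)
Your proposal is correct and follows essentially the same route as the paper: backward induction on the stage index, with the base case at the terminal sets $\mathcal{F}_{\textcolor{sthlmRed}{i}}$ and the inductive step splitting each policy into its first decision rule and its continuation so that $\operatorname{\mathtt{opt}}_{\psi_{\textcolor{sthlmRed}{i}}}$ factors as $\operatorname{\mathtt{opt}}_{d_{\textcolor{sthlmRed}{i}}}\operatorname{\mathtt{opt}}_{\text{cont}}$ before invoking the induction hypothesis at the successor. Your additional remarks on the acyclic depth structure and on attainment of the $\arg\operatorname{\mathtt{opt}}$ via compactness and continuity are elaborations the paper leaves implicit, not a different argument.
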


Focal planners offer safe and implementable policies but are difficult to solve due to reasoning over entire occupancy sets. We now turn to the next planner in the hierarchy: the uninformed planner. This planner operates over individual uninformed occupancy states and, while not designed to extract a safe policy for either player, its value function aligns with that of the focal planners. Specifically, \(v_{\textcolor{sthlmRed}{1},*}(x_{\textcolor{sthlmRed}{1},0}) = v_{\textcolor{sthlmRed}{2},*}(x_{\textcolor{sthlmRed}{2},0}) = v_*(x_0)\), where \(x_0 = \varphi(x_{\textcolor{sthlmRed}{1},0}, x_{\textcolor{sthlmRed}{2},0})\). Thus, computing the value of an occupancy set can be reduced to computing the values of its constituent uninformed occupancy states.
\begin{restatable}[]{thm}{thmuninformedboe}
\label{thm:uninformed:boe}
The optimal state-value function \(v_{*} \colon \mathcal{X} \to \mathbb{R}\) of  \(\mathcal{M}'\) satisfies \citeauthor{bellman}'s optimality equations: \(v_{*}(x) = 0\) if \(x \in \mathcal{F}\), and \(v_{*}(x) = \max_{d_{\textcolor{sthlmRed}{1}} \in \mathcal{D}_{\textcolor{sthlmRed}{1}}} 
\min_{d_{\textcolor{sthlmRed}{2}} \in \mathcal{D}_{\textcolor{sthlmRed}{2}}} 
\left[ \rho(x, d_{\textcolor{sthlmRed}{1}}, d_{\textcolor{sthlmRed}{2}}) + \gamma v_{*}(\tau(x, d_{\textcolor{sthlmRed}{1}}, d_{\textcolor{sthlmRed}{2}})) \right]\) otherwise.  For player \(\textcolor{sthlmRed}{i}\), the value of their focal planner at occupancy set \(x_{\textcolor{sthlmRed}{i}}\) at stage \(t\) is given by:
\[
v_{\textcolor{sthlmRed}{i},*}(x_{\textcolor{sthlmRed}{i}}) =  
\operatorname{\overline{\mathtt{opt}}}_{x\in x_{\textcolor{sthlmRed}{i}}}~ [g(x) + \gamma^t v_{*}(x)],\quad \forall x_{\textcolor{sthlmRed}{i}}\in \mathcal{X}_{\textcolor{sthlmRed}{i}}.
\]
\end{restatable}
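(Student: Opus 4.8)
The plan is to prove both displayed assertions by backward induction on the stage index $t$, descending from the terminal horizon $\ell+1$ to $0$, and to treat them in sequence. For the first claim—Bellman's optimality equations for $v_*$—I would first dispatch the terminal case: since the return $v'_{\psi_1,\psi_2}(b)=\sum_{t=0}^{\ell}\gamma^t\rho(x_t,d_{1,t},d_{2,t})$ is an empty sum once a trajectory reaches stage $\ell+1$, every $x\in\mathcal{F}$ carries no remaining payoff, giving $v_*(x)=0$. For a non-terminal $x$ at stage $t$, I would regard $x$ as the root of a subgame of $\mathcal{M}'$ and invoke Lemma~\ref{lem:behavioral:equivalence}, applied to that subgame, to assert that its value $v_*(x)$ exists and equals both $\max_{\psi_1}\min_{\psi_2}$ and $\min_{\psi_2}\max_{\psi_1}$ of the subgame return $v'_{\mathrm{sub}}$ rooted at $x$.

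The core of the first claim is a one-stage decomposition carried out in both optimisation orders supplied by the subgame minimax identity. Writing each occupancy-set-dependent policy as $\psi_i=(d_i,\psi_i')$—its stage-$t$ rule followed by a continuation—and using additivity of the discounted return, $v'_{\mathrm{sub}}$ splits as $\rho(x,d_1,d_2)+\gamma\, v'_{\mathrm{sub}}(\tau(x,d_1,d_2);\psi_1',\psi_2')$. Transition-independence is decisive: player $i$'s reachable continuation occupancy sets are $\tau_i(x_i,d_i)$, which depend only on $d_i$, so player $i$ may commit to one occupancy-set-dependent continuation that is simultaneously optimal in every reachable subgame. This lets me interchange the continuation optimisation with the stage-$t$ optimisation and collapse $\max_{\psi_1'}\min_{\psi_2'}v'_{\mathrm{sub}}(\tau(x,d_1,d_2);\cdot,\cdot)$ to $v_*(\tau(x,d_1,d_2))$ by the induction hypothesis. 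Applying this to $v_*(x)=\max_{\psi_1}\min_{\psi_2}v'_{\mathrm{sub}}$ yields $v_*(x)=\max_{d_1}\min_{d_2}Q(x,d_1,d_2)$ with $Q(x,d_1,d_2)=\rho(x,d_1,d_2)+\gamma v_*(\tau(x,d_1,d_2))$, while applying it to the reverse order yields $v_*(x)=\min_{d_2}\max_{d_1}Q(x,d_1,d_2)$, so the stage-game minimax is inherited from the subgame minimax rather than assumed, completing the first claim.

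For the second claim I would relate the focal value $v_{i,*}$ to $v_*$ by a parallel backward induction, using the focal Bellman equations of Theorem~\ref{thm:bellman:optimality} together with the recursion for $v_*$ just established. The identity $v_{i,*}(x_i)=\operatorname{\overline{\mathtt{opt}}}_{x\in x_i}[g(x)+\gamma^t v_*(x)]$ expresses that $g(x)$ records the reward accrued before stage $t$, $\gamma^t v_*(x)$ the optimally-played future from each constituent $x$, and $\operatorname{\overline{\mathtt{opt}}}_{x\in x_i}$ the worst case over opponent histories consistent with player $i$'s decision-rule history; at stage $\ell+1$ it reduces to $\rho_i(x_i)=\operatorname{\overline{\mathtt{opt}}}_{x\in x_i}g(x)$ with $v_*\equiv 0$ on $\mathcal{F}$. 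For the inductive step I would substitute $v_{i,*}(x_i)=\operatorname{\mathtt{opt}}_{d_i}v_{i,*}(\tau_i(x_i,d_i))$, expand $\tau_i(x_i,d_i)=\{\tau(x,d_i,d_{-i}):x\in x_i,\,d_{-i}\in\mathcal{D}_{-i}\}$, and apply the reward-accounting relation that the successor's $g$ equals $g(x)$ plus the $\gamma^t$-discounted stage reward $\rho$. The delicate point is that this leaves player $i$'s own $\operatorname{\mathtt{opt}}_{d_i}$ outside the adversarial $\operatorname{\overline{\mathtt{opt}}}$ over constituents, whereas the target folds it into $v_*(x)$ at each constituent. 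The interchange is licensed precisely by transition-independence: by the first claim an optimal uninformed policy for player $i$ depends only on their occupancy set, hence is common to all constituents $x\in x_i$, so the optimising decision rule may be chosen simultaneously for every constituent. This collapses $\operatorname{\mathtt{opt}}_{d_i}\operatorname{\overline{\mathtt{opt}}}_{x,d_{-i}}[\rho+\gamma\, v_*\circ\tau]$ to $\operatorname{\overline{\mathtt{opt}}}_{x\in x_i}v_*(x)$ and closes the induction.

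I expect two points to carry the real weight, and both reduce to the same structural fact. In the first claim it is the interchange of the continuation optimisation with the stage-$t$ optimisation—showing that each player can commit to a single occupancy-set-dependent continuation that is optimal against all opponent responses, so that both orders of the subgame minimax descend to the one-stage game and the stage minimax comes for free. In the second claim it is the commuting of a player's own optimisation with the adversarial $\operatorname{\overline{\mathtt{opt}}}$ over constituents. Each rests on transition-independence forcing optimal play to be a function of a player's own occupancy set alone, so the induction must thread this property carefully through every stage.
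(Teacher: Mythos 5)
Your architecture matches the paper's: backward induction from the horizon, a one-stage split of each policy into its stage-$t$ rule and its continuation, and a second induction tying $v_{\textcolor{sthlmRed}{i},*}$ to $v_*$. The gap is in the justification of the two interchanges you yourself identify as weight-bearing. You claim transition-independence lets player $\textcolor{sthlmRed}{i}$ ``commit to one occupancy-set-dependent continuation that is simultaneously optimal in every reachable subgame.'' Transition-independence only guarantees that the \emph{domain} of the continuation policy---the occupancy sets reachable from $\tau_{\textcolor{sthlmRed}{i}}(x_{\textcolor{sthlmRed}{i}},d_{\textcolor{sthlmRed}{i}})$---is the same for every opponent choice; it does not make the \emph{optimal} continuation independent of that choice, because the objective still depends on the global state $\tau(x,d_{\textcolor{sthlmRed}{1}},d_{\textcolor{sthlmRed}{2}})$. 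A single committed $\psi'_{\textcolor{sthlmRed}{1}}$ is in general \emph{not} optimal in every subgame indexed by $d_{\textcolor{sthlmRed}{2}}$; what is true is the weaker identity $\max_{\psi'_{\textcolor{sthlmRed}{1}}}\min_{d_{\textcolor{sthlmRed}{2}}}(\cdot)=\min_{d_{\textcolor{sthlmRed}{2}}}\max_{\psi'_{\textcolor{sthlmRed}{1}}}(\cdot)$, and that is a minimax exchange, not a consequence of decoupled dynamics. The same objection hits your second claim: the greedy $d_{\textcolor{sthlmRed}{i}}$ produced by the Bellman equation depends on the global constituent $x$, not only on $x_{\textcolor{sthlmRed}{i}}$, so ``the optimising decision rule may be chosen simultaneously for every constituent'' is precisely the exchange $\operatorname{\mathtt{opt}}_{d_{\textcolor{sthlmRed}{i}}}\operatorname{\overline{\mathtt{opt}}}_{x\in x_{\textcolor{sthlmRed}{i}}}=\operatorname{\overline{\mathtt{opt}}}_{x\in x_{\textcolor{sthlmRed}{i}}}\operatorname{\mathtt{opt}}_{d_{\textcolor{sthlmRed}{i}}}$ that needs proof rather than a premise.

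The paper closes both holes by importing the exchangeability of $\min$ and $\max$ over behavioural continuation strategies from \citet[Thm.~3.2]{DelBufDibSaf-DGAA-23}, applied twice to commute $\max_{\pi_{\textcolor{sthlmRed}{1},t+1}}$ past $\min_{d_{\textcolor{sthlmRed}{2},t}}\min_{\pi_{\textcolor{sthlmRed}{2},t+1}}$, rather than deriving it from transition-independence. Your plan could be repaired without that citation by a sandwich argument you gesture at but do not execute: since
$\max_{d_{\textcolor{sthlmRed}{1}},\psi'_{\textcolor{sthlmRed}{1}}}\min_{d_{\textcolor{sthlmRed}{2}},\psi'_{\textcolor{sthlmRed}{2}}}\le \max_{d_{\textcolor{sthlmRed}{1}}}\min_{d_{\textcolor{sthlmRed}{2}}}\max_{\psi'_{\textcolor{sthlmRed}{1}}}\min_{\psi'_{\textcolor{sthlmRed}{2}}}\le \min_{d_{\textcolor{sthlmRed}{2}},\psi'_{\textcolor{sthlmRed}{2}}}\max_{d_{\textcolor{sthlmRed}{1}},\psi'_{\textcolor{sthlmRed}{1}}}$,
the subgame minimax identity of Lemma~\ref{lem:behavioral:equivalence} (suitably restated for subgames rooted at arbitrary $x$) pins the middle term to the common value. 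As written, however, the decisive step rests on a claim that is false in general, so the proof does not go through.
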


The uninformed planner treats all public observation histories as indistinguishable, preventing it from leveraging structure revealed by public signals. The \emph{informed planner} addresses this limitation by reasoning separately for each realisation of public observations. It operates over \emph{informed occupancy states} \(o_{x, h_{\textcolor{sthlmRed}{\text{pub}}}}\), which are distributions over hidden states and private action-observation histories, induced by the uninformed occupancy state \(x\) and public observation history \(h_{\textcolor{sthlmRed}{\text{pub}}} \in \mathcal{H}_{\textcolor{sthlmRed}{\text{pub}}}\); that is, for any hidden state \(s\) and private histories \((h_{\textcolor{sthlmRed}{1}}, h_{\textcolor{sthlmRed}{2}})\) of the two players:
\(
o_{x, h_{\textcolor{sthlmRed}{\text{pub}}}}(s,h_{\textcolor{sthlmRed}{1}}, h_{\textcolor{sthlmRed}{2}}) 
\doteq
\Pr(s, h_{\textcolor{sthlmRed}{1}}, h_{\textcolor{sthlmRed}{2}} | \theta, h_{\textcolor{sthlmRed}{\text{pub}}}).
\)
Uninformed occupancy states are convex combinations of these informed states, indexed by public observation histories. Letting \(\pmb{e}_{h_{\textcolor{sthlmRed}{\text{pub}}}}\) denote the one-hot vector for \(h_{\textcolor{sthlmRed}{\text{pub}}}\), we have:
\(
x = 
\textstyle
\sum_{h_{\textcolor{sthlmRed}{\text{pub}}} \in \mathcal{H}_{\textcolor{sthlmRed}{\text{pub}}}} 
\Pr(h_{\textcolor{sthlmRed}{\text{pub}}} | x) \cdot 
\left( o_{(x,h_{\textcolor{sthlmRed}{\text{pub}}})} \otimes \pmb{e}_{h_{\textcolor{sthlmRed}{\text{pub}}}} \right),
\)
where \(o_{(x,h_{\textcolor{sthlmRed}{\text{pub}}})} \otimes \pmb{e}_{h_{\textcolor{sthlmRed}{\text{pub}}}}\) denotes a Kronecker product. This decomposition allows the optimal value function \(v_* \colon \mathcal{X} \to \mathbb{R}\) to be computed separately for each informed occupancy state, by selecting decision rules \(d_{\textcolor{sthlmRed}{i},h_{\textcolor{sthlmRed}{\text{pub}}}} \in \mathcal{D}_{\textcolor{sthlmRed}{i},h_{\textcolor{sthlmRed}{\text{pub}}}}\) for player \(\textcolor{sthlmRed}{i}\) independently across all public observation histories \(h_{\textcolor{sthlmRed}{\text{pub}}}\).

\begin{restatable}[]{thm}{thminformedboe}
\label{thm:informed:boe}
The optimal state-value function \(v_{*} \colon \mathcal{X} \to \mathbb{R}\) of transition-independent zs-SG \(\mathcal{M}'\), as defined by \citeauthor{bellman}'s optimality equations in Theorem~\ref{thm:uninformed:boe}, is a linear map over informed occupancy states. Specifically, if \(x \in \mathcal{F}\), then \(v_{*}(x) = 0\); otherwise,
\begin{align*}
\textstyle
&v_{*}(x) = \textstyle
\sum_{h_{\textcolor{sthlmRed}{\text{pub}}}\in \mathcal{H}_{\textcolor{sthlmRed}{\text{pub}}}} 
\Pr(h_{\textcolor{sthlmRed}{\text{pub}}} | x) 
\max_{d_{\textcolor{sthlmRed}{1},h_{\textcolor{sthlmRed}{\text{pub}}}}\in \mathcal{D}_{\textcolor{sthlmRed}{1},h_{\textcolor{sthlmRed}{\text{pub}}}}} 
\min_{d_{\textcolor{sthlmRed}{2},h_{\textcolor{sthlmRed}{\text{pub}}}}\in \mathcal{D}_{\textcolor{sthlmRed}{2},h_{\textcolor{sthlmRed}{\text{pub}}}}} 
q_{*}(o_{(x,h_{\textcolor{sthlmRed}{\text{pub}}})}, d_{\textcolor{sthlmRed}{1},h_{\textcolor{sthlmRed}{\text{pub}}}}, d_{\textcolor{sthlmRed}{2},h_{\textcolor{sthlmRed}{\text{pub}}}})\\
&q_{*}(o_{(x,h_{\textcolor{sthlmRed}{\text{pub}}})}, d_{\textcolor{sthlmRed}{1},h_{\textcolor{sthlmRed}{\text{pub}}}}, d_{\textcolor{sthlmRed}{2},h_{\textcolor{sthlmRed}{\text{pub}}}}) 
=\textstyle
\rho(o_{(x,h_{\textcolor{sthlmRed}{\text{pub}}})}, d_{\textcolor{sthlmRed}{1},h_{\textcolor{sthlmRed}{\text{pub}}}}, d_{\textcolor{sthlmRed}{2},h_{\textcolor{sthlmRed}{\text{pub}}}})
+ \gamma v_{*}(\tau(o_{(x,h_{\textcolor{sthlmRed}{\text{pub}}})}, d_{\textcolor{sthlmRed}{1},h_{\textcolor{sthlmRed}{\text{pub}}}}, d_{\textcolor{sthlmRed}{2},h_{\textcolor{sthlmRed}{\text{pub}}}})),
\end{align*}
where \(o_{(x,h_{\textcolor{sthlmRed}{\text{pub}}})}\) denotes the informed occupancy state induced by \((x, h_{\textcolor{sthlmRed}{\text{pub}}})\).
\end{restatable}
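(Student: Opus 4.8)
The plan is to prove the displayed identity by backward induction on the stage index \(t\), taking the uninformed Bellman optimality equation of Theorem~\ref{thm:uninformed:boe} as the starting recursion and showing that at each non-terminal stage the inner minimax factorises over public observation histories. The base case is the terminal stage \(\ell+1\), where \(x\in\mathcal{F}\) forces \(v_{*}(x)=0\); this is trivially affine in the constituent informed occupancy states and matches the claimed form with an empty sum. For the inductive step I would maintain two hypotheses at stage \(t+1\): first, that \(v_{*}\) satisfies the displayed decomposition, and second, the accompanying linearity statement, namely that for any fixed pair of decision rules the quantity \(q_{*}(o,d_{\textcolor{sthlmRed}{1}},d_{\textcolor{sthlmRed}{2}})\) is affine in the informed occupancy state \(o\). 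The latter is what licenses the phrase ``linear map over informed occupancy states'' and is needed to push \(v_{*}\) of a successor state through the public-history mixture.

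The inductive step rests on three algebraic facts, which I would establish in order. \emph{(i) Factorisation of the one-stage objective.} Writing \(x=\sum_{h_{\mathrm{pub}}}\Pr(h_{\mathrm{pub}}\mid x)\,(o_{(x,h_{\mathrm{pub}})}\otimes\pmb{e}_{h_{\mathrm{pub}}})\), and using that a decision rule \(d_{\textcolor{sthlmRed}{i}}\) depends on the history only through the pair \((h_{\textcolor{sthlmRed}{i}},h_{\mathrm{pub}})\), the definition of \(\rho\) in Definition~\ref{def:transition:independent:zs:sg} splits as \(\rho(x,d_{\textcolor{sthlmRed}{1}},d_{\textcolor{sthlmRed}{2}})=\sum_{h_{\mathrm{pub}}}\Pr(h_{\mathrm{pub}}\mid x)\,\rho(o_{(x,h_{\mathrm{pub}})},d_{\textcolor{sthlmRed}{1},h_{\mathrm{pub}}},d_{\textcolor{sthlmRed}{2},h_{\mathrm{pub}}})\). \emph{(ii) Factorisation of the continuation.} The global transition \(\tau\) sends a realisation with public history \(h_{\mathrm{pub}}\) to a successor public history \((h_{\mathrm{pub}},w)\); since distinct \(h_{\mathrm{pub}}\) have disjoint successor-history subtrees (each \((h_{\mathrm{pub}},w)\) determines its prefix \(h_{\mathrm{pub}}\)), the informed occupancy state at \((x',(h_{\mathrm{pub}},w))\) is exactly the \(w\)-conditioned image of \(o_{(x,h_{\mathrm{pub}})}\) under \((d_{\textcolor{sthlmRed}{1},h_{\mathrm{pub}}},d_{\textcolor{sthlmRed}{2},h_{\mathrm{pub}}})\), with weights satisfying \(\Pr(h_{\mathrm{pub}}\mid x)\,\Pr(w\mid o_{(x,h_{\mathrm{pub}})},\cdot)=\Pr((h_{\mathrm{pub}},w)\mid x')\). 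Invoking the affine induction hypothesis then yields \(\gamma v_{*}(\tau(x,d_{\textcolor{sthlmRed}{1}},d_{\textcolor{sthlmRed}{2}}))=\sum_{h_{\mathrm{pub}}}\Pr(h_{\mathrm{pub}}\mid x)\,\gamma v_{*}(\tau(o_{(x,h_{\mathrm{pub}})},d_{\textcolor{sthlmRed}{1},h_{\mathrm{pub}}},d_{\textcolor{sthlmRed}{2},h_{\mathrm{pub}}}))\). Adding (i) and (ii) expresses the bracketed objective of Theorem~\ref{thm:uninformed:boe} as the nonnegatively weighted sum \(\sum_{h_{\mathrm{pub}}}\Pr(h_{\mathrm{pub}}\mid x)\,q_{*}(o_{(x,h_{\mathrm{pub}})},d_{\textcolor{sthlmRed}{1},h_{\mathrm{pub}}},d_{\textcolor{sthlmRed}{2},h_{\mathrm{pub}}})\).

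The final fact \emph{(iii)} is the interchange of the minimax with this sum, which I expect to be the main obstacle. Two ingredients make it go through. First, a decision rule at stage \(t\) is equivalent to an independent choice of local rule per public history, i.e. \(\mathcal{D}_{\textcolor{sthlmRed}{i}}\cong\prod_{h_{\mathrm{pub}}}\mathcal{D}_{\textcolor{sthlmRed}{i},h_{\mathrm{pub}}}\) over reachable histories; I would verify this bijection explicitly, as it is what decouples the optimisation variables across summands. Second, the weights \(\Pr(h_{\mathrm{pub}}\mid x)\) are nonnegative and, being a marginal of the current \(x\), are independent of \(d_{\textcolor{sthlmRed}{1},t}\) and \(d_{\textcolor{sthlmRed}{2},t}\); hence the separable-objective identity \(\max_{d_{\textcolor{sthlmRed}{1}}}\min_{d_{\textcolor{sthlmRed}{2}}}\sum_{h_{\mathrm{pub}}}w_{h_{\mathrm{pub}}}f_{h_{\mathrm{pub}}}(d_{\textcolor{sthlmRed}{1},h_{\mathrm{pub}}},d_{\textcolor{sthlmRed}{2},h_{\mathrm{pub}}})=\sum_{h_{\mathrm{pub}}}w_{h_{\mathrm{pub}}}\max_{d_{\textcolor{sthlmRed}{1},h_{\mathrm{pub}}}}\min_{d_{\textcolor{sthlmRed}{2},h_{\mathrm{pub}}}}f_{h_{\mathrm{pub}}}\) applies, since the inner minimisation distributes over a sum of independently-parametrised terms and the outer maximisation does likewise. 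The delicate point is that the coupling must genuinely vanish: this is where transition independence and the disjointness of successor public-history subtrees from (ii) are essential, since they guarantee that the per-history value \(q_{*}\) involves only that history's local rules and no others. Combining (i)--(iii) reproduces the displayed equation at stage \(t\); affinity of \(q_{*}\) in \(o\) follows from the affinity of \(\rho\) and \(\tau\) in the informed occupancy state together with the induction hypothesis, closing both hypotheses and completing the induction.
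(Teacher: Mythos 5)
Your proposal is correct and follows essentially the same route as the paper's proof: decompose the uninformed occupancy state into informed occupancy states indexed by public histories, establish the linearity of \(\rho\) and \(\tau\) over that decomposition, and then distribute the minimax over the resulting separable sum. The only difference is presentational — you make explicit the backward induction and the justification for interchanging the minimax with the sum (the product structure \(\mathcal{D}_{\textcolor{sthlmRed}{i}}\cong\prod_{h_{\textcolor{sthlmRed}{\text{pub}}}}\mathcal{D}_{\textcolor{sthlmRed}{i},h_{\textcolor{sthlmRed}{\text{pub}}}}\) over reachable public histories and the nonnegativity of the weights \(\Pr(h_{\textcolor{sthlmRed}{\text{pub}}}\mid x)\)), a step the paper asserts without comment.
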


While the informed planner leverages structure across public observations, it remains agnostic to the private action-observation histories of each player. The \emph{marginal planner} further refines this reasoning by branching on one player’s private history. Specifically, for player~\(\textcolor{sthlmRed}{i}\), the marginal planner operates over \emph{marginal occupancy states} \(c_{\textcolor{sthlmRed}{i},(x,h_{\textcolor{sthlmRed}{\text{pub}}},h_{\textcolor{sthlmRed}{i}})}\), which represent distributions over hidden states and the opponent private histories, conditioned on the uninformed occupancy states \(x\), the public observation history \(h_{\textcolor{sthlmRed}{\text{pub}}}\), and the private history \(h_{\textcolor{sthlmRed}{i}}\). That is, for hidden state \(s\) and opponent private histories \(h_{\textcolor{sthlmRed}{-i}}\), one has
\(c_{\textcolor{sthlmRed}{i},(x,h_{\textcolor{sthlmRed}{\text{pub}}},h_{\textcolor{sthlmRed}{i}})}(s, h_{\textcolor{sthlmRed}{-i}}) = \Pr(s, h_{\textcolor{sthlmRed}{-i}} | x, h_{\textcolor{sthlmRed}{\text{pub}}}, h_{\textcolor{sthlmRed}{i}})\).
Uninformed occupancy states can be expressed as convex combinations of these marginal states, indexed by public and private histories. Let \(\pmb{e}_{h_{\textcolor{sthlmRed}{\text{pub}}}}\) and \(\pmb{e}_{h_{\textcolor{sthlmRed}{i}}}\) denote the one-hot vectors for \(h_{\textcolor{sthlmRed}{\text{pub}}}\) and \(h_{\textcolor{sthlmRed}{i}}\), respectively. Then
\(x = 
\sum_{h_{\textcolor{sthlmRed}{\text{pub}}}} 
\Pr(h_{\textcolor{sthlmRed}{\text{pub}}} | x) 
\sum_{h_{\textcolor{sthlmRed}{i}}} 
\Pr(h_{\textcolor{sthlmRed}{i}} | x, h_{\textcolor{sthlmRed}{\text{pub}}}) 
\cdot 
\left( c_{\textcolor{sthlmRed}{i},(x,h_{\textcolor{sthlmRed}{\text{pub}}},h_{\textcolor{sthlmRed}{i}})} \otimes \pmb{e}_{h_{\textcolor{sthlmRed}{\text{pub}}}} \otimes \pmb{e}_{h_{\textcolor{sthlmRed}{i}}} \right)\),
where \(\otimes\) denotes the Kronecker product. This refinement allows the marginal planner to isolate the strategic impact of private information while maintaining a full representation of the game evolution. As such, marginal planners are the most informed entities in the hierarchy, incorporating both public and private signals in their reasoning. This decomposition unveils that the optimal value function \(v_*\colon \mathcal{X}\to \mathbb{R}\) is uniformly continuous across uninformed occupancy states.
\begin{restatable}[]{thm}{thmuniformcontinuity}
\label{thm:uniform:continuity}
The optimal state-value function \(v_*\colon \mathcal{X} \to \mathbb{R}\) is uniformly continuous across uninformed occupancy states. There exists a collection \(\Gamma_{\!\!\textcolor{sthlmRed}{1}}\) of finite sets \(\Gamma_{\!\!\textcolor{sthlmRed}{2}}\) of functions \(\alpha_{\textcolor{sthlmRed}{2}}\), each linear over marginal occupancy states \(c_{\textcolor{sthlmRed}{2}}\), such that for any uninformed occupancy state \(x\), we have:
\begin{align*}
v_{*}(x) &=\textstyle 
\sum_{h_{\textcolor{sthlmRed}{\text{pub}}} \in \mathcal{H}_{\textcolor{sthlmRed}{\text{pub}}}}
\Pr(h_{\textcolor{sthlmRed}{\text{pub}}} | x)
\left[
\max_{\Gamma_{\!\!\textcolor{sthlmRed}{2}} \in \Gamma_{\!\!\textcolor{sthlmRed}{1}}}
\sum_{h_{\textcolor{sthlmRed}{2}} \in \mathcal{H}_{\textcolor{sthlmRed}{2}}}
\Pr(h_{\textcolor{sthlmRed}{2}} | h_{\textcolor{sthlmRed}{\text{pub}}}, x)
\min_{\alpha_{\textcolor{sthlmRed}{2}} \in \Gamma_{\!\!\textcolor{sthlmRed}{2}}}
\alpha_{\textcolor{sthlmRed}{2}}(c_{\textcolor{sthlmRed}{2},(x,h_{\textcolor{sthlmRed}{\text{pub}}},h_{\textcolor{sthlmRed}{2}})})
\right].
\end{align*} 
\end{restatable}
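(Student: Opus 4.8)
The plan is to establish the closed-form representation by backward induction on the stage index $t$, and then to read off uniform continuity directly from that form. The induction hypothesis is that $v_*$, restricted to stage-$(t+1)$ uninformed occupancy states, admits a representation of the stated type: a public-history-weighted average of terms $\max_{\Gamma_2\in\Gamma_1}\sum_{h_2}\Pr(h_2\mid h_{\textcolor{sthlmRed}{\text{pub}}},x)\min_{\alpha_2\in\Gamma_2}\alpha_2(c_{\textcolor{sthlmRed}{2},(x,h_{\textcolor{sthlmRed}{\text{pub}}},h_2)})$, with each $\Gamma_1$ a finite collection of finite sets of functions that are linear over marginal occupancy states. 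The base case is the terminal stage, where $v_*(x)=0$ for $x\in\mathcal{F}$ by Theorem~\ref{thm:uninformed:boe}; this is represented by letting $\Gamma_1$ be a singleton whose single element is a singleton containing only the zero function, which is trivially linear over marginal occupancy states.

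For the inductive step I would start from the informed-planner backup of Theorem~\ref{thm:informed:boe}, which already supplies the outer convex combination $\sum_{h_{\textcolor{sthlmRed}{\text{pub}}}}\Pr(h_{\textcolor{sthlmRed}{\text{pub}}}\mid x)(\cdots)$ together with a per-public-history minimax $\max_{d_{\textcolor{sthlmRed}{1},h_{\textcolor{sthlmRed}{\text{pub}}}}}\min_{d_{\textcolor{sthlmRed}{2},h_{\textcolor{sthlmRed}{\text{pub}}}}} q_*(o_{(x,h_{\textcolor{sthlmRed}{\text{pub}}})},\cdot)$ whose $Q$-function is linear in the informed occupancy state. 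I would then decompose that informed state into its marginal constituents, $o_{(x,h_{\textcolor{sthlmRed}{\text{pub}}})}=\sum_{h_2}\Pr(h_2\mid x,h_{\textcolor{sthlmRed}{\text{pub}}})\,(c_{\textcolor{sthlmRed}{2},(x,h_{\textcolor{sthlmRed}{\text{pub}}},h_2)}\otimes\pmb{e}_{h_2})$, mirroring the marginal decomposition stated just before the theorem. By linearity of $q_*$, the per-public-history objective rewrites as $\sum_{h_2}\Pr(h_2\mid x,h_{\textcolor{sthlmRed}{\text{pub}}})$ times a function that is linear in $c_{\textcolor{sthlmRed}{2},(x,h_{\textcolor{sthlmRed}{\text{pub}}},h_2)}$ once $d_{\textcolor{sthlmRed}{1},h_{\textcolor{sthlmRed}{\text{pub}}}}$ and the restriction $d_{\textcolor{sthlmRed}{2},h_{\textcolor{sthlmRed}{\text{pub}}}}(\cdot\mid h_2,h_{\textcolor{sthlmRed}{\text{pub}}})$ are fixed.

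The crux is the next move: player~\(\textcolor{sthlmRed}{2}\)'s decision rule assigns an action distribution to each private history $h_2$ independently, so the inner minimisation factorises across $h_2$, turning $\min_{d_{\textcolor{sthlmRed}{2},h_{\textcolor{sthlmRed}{\text{pub}}}}}\sum_{h_2}\Pr(h_2\mid\cdot)f_{h_2}$ into $\sum_{h_2}\Pr(h_2\mid\cdot)\min_{d_2(\cdot\mid h_2)}f_{h_2}$. After fixing player~\(\textcolor{sthlmRed}{1}\)'s rule and a continuation selection from the inductive representation, each $f_{h_2}$ is linear in the per-history action distribution, so its minimum over the simplex is attained at a deterministic action; combining these finitely many deterministic choices with the finitely many continuation functions supplied by the hypothesis yields, for each $h_2$, a finite family of marginal-linear functions. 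Collecting them for fixed $d_{\textcolor{sthlmRed}{1},h_{\textcolor{sthlmRed}{\text{pub}}}}$ gives a set $\Gamma_2$, and ranging over $d_{\textcolor{sthlmRed}{1},h_{\textcolor{sthlmRed}{\text{pub}}}}$ together with the continuation choices player~\(\textcolor{sthlmRed}{1}\) controls produces the finite collection $\Gamma_1$, so player~\(\textcolor{sthlmRed}{1}\)'s maximisation becomes $\max_{\Gamma_2\in\Gamma_1}$, closing the induction. I expect this interleaving to be the main obstacle: one must check that the outer maximisation over player~\(\textcolor{sthlmRed}{1}\)'s plans and the private-history-indexed minimisation for player~\(\textcolor{sthlmRed}{2}\) compose correctly with the continuation's own $\max$–$\min$ structure, and that the generated sets of $\alpha_2$ remain finite at every stage.

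Finally I would derive uniform continuity from the representation. Since the weights are nonnegative, they may be pulled inside the $\max$ and $\min$, and the key identity is that $\Pr(h_{\textcolor{sthlmRed}{\text{pub}}}\mid x)\,\Pr(h_2\mid h_{\textcolor{sthlmRed}{\text{pub}}},x)\,c_{\textcolor{sthlmRed}{2},(x,h_{\textcolor{sthlmRed}{\text{pub}}},h_2)}$ equals the unnormalised joint marginal of $x$ over $(s,h_{\textcolor{sthlmRed}{1}},h_2,h_{\textcolor{sthlmRed}{\text{pub}}})$, hence is linear in $x$; the conditioning ratios cancel. Therefore, for any fixed selection of $\Gamma_2$ per public history and $\alpha_2$ per private history, the whole expression is a linear functional of $x$, so $v_*$ is a finite combination of pointwise maxima and minima of linear functionals of $x$, i.e. piecewise linear with finitely many pieces on the simplex of occupancy states, and thus Lipschitz. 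A uniform Lipschitz constant follows by bounding $\|\alpha_2\|_\infty$ via the geometric reward bound of order $\|r\|_\infty/(1-\gamma)$, and a Lipschitz function on a bounded domain is uniformly continuous with a modulus independent of the location in $\mathcal{X}$, establishing the claim.
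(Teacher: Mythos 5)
Your route is genuinely different from the paper's, and it has a gap at exactly the point you flag. The paper does not induct on the Bellman backup at all: it fixes the \emph{entire} remaining focal policy $\pi_{\textcolor{sthlmRed}{1},\ell-t}$, observes that the security value of that fixed policy decomposes over $(h_{\textcolor{sthlmRed}{\text{pub}}},h_{\textcolor{sthlmRed}{2}})$ as a minimum over the finitely many \emph{deterministic} continuation policy trees $\delta_{\textcolor{sthlmRed}{2},\ell-t}$ of player~\textcolor{sthlmRed}{2}, each inducing a linear function $\alpha_{\textcolor{sthlmRed}{2}}\colon(s,h_{\textcolor{sthlmRed}{1}})\mapsto v_{\pi_{\textcolor{sthlmRed}{1},\ell-t},\delta_{\textcolor{sthlmRed}{2},\ell-t}}(s,h_{\textcolor{sthlmRed}{1}})$, and then lets $\Gamma_{\!\!\textcolor{sthlmRed}{1}}$ be indexed by focal policies. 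This sidesteps the interleaving problem entirely. In your inductive step the continuation value $v_*(\tau(o,d_{\textcolor{sthlmRed}{1}},d_{\textcolor{sthlmRed}{2}}))$ carries a $\max_{\Gamma_{\!\!\textcolor{sthlmRed}{2}}}$ \emph{inside} the current-stage $\min_{d_{\textcolor{sthlmRed}{2}}}$, and that maximum selects a single $\Gamma_{\!\!\textcolor{sthlmRed}{2}}$ shared by all next-stage private histories within a public branch. This couples the components indexed by different $h_{\textcolor{sthlmRed}{2}}$, so $\min_{d_{\textcolor{sthlmRed}{2}}}$ does \emph{not} factorise across private histories as you claim; and ``fixing a continuation selection'' before minimising over $d_{\textcolor{sthlmRed}{2}}$ is precisely the unjustified exchange of the continuation's $\max$ with the opponent's $\min$. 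Over the discrete collection $\Gamma_{\!\!\textcolor{sthlmRed}{1}}$ that exchange requires convexification (this is exactly why Corollary~\ref{cor:uniform:continuity} must introduce $\max_{\phi_{\textcolor{sthlmRed}{1}}\in\Delta(\Phi_{\textcolor{sthlmRed}{1}})}$, which changes the form of the representation); justifying it at the policy level, as Theorem~\ref{thm:uninformed:boe} does, amounts to reassembling the whole focal policy and thus collapses your induction into the paper's direct argument. You name this obstacle but do not resolve it, so the induction does not close as written.

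A second, smaller error: you assert that $\Gamma_{\!\!\textcolor{sthlmRed}{1}}$ is finite and conclude that $v_*$ is piecewise linear with finitely many pieces. The theorem only requires each $\Gamma_{\!\!\textcolor{sthlmRed}{2}}$ to be finite; the paper's proof explicitly notes that $\Gamma_{\!\!\textcolor{sthlmRed}{1}}$ is indexed by the continuum of focal policies and is uncountably infinite, and restricting player~\textcolor{sthlmRed}{1} to finitely many (e.g.\ deterministic) decision rules would change the value of the game. Your uniform-continuity conclusion survives regardless, because the coefficients of every $\alpha_{\textcolor{sthlmRed}{2}}$ are uniformly bounded in terms of $\|r\|_\infty$ and a supremum of max--min combinations of uniformly bounded linear functionals of $x$ is Lipschitz in $\|\cdot\|_1$; that final step, including the cancellation $\Pr(h_{\textcolor{sthlmRed}{\text{pub}}}\mid x)\Pr(h_{\textcolor{sthlmRed}{2}}\mid h_{\textcolor{sthlmRed}{\text{pub}}},x)\,c_{\textcolor{sthlmRed}{2},(x,h_{\textcolor{sthlmRed}{\text{pub}}},h_{\textcolor{sthlmRed}{2}})}=x(\cdot,\cdot,h_{\textcolor{sthlmRed}{2}},h_{\textcolor{sthlmRed}{\text{pub}}})$, is sound and is in fact spelled out more explicitly than in the paper.
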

In practice, point-based methods approximate the optimal value function using a finite collection \(\Gamma_{\!\!\textcolor{sthlmRed}{1}}\) of finite sets \(\Gamma_{\!\!\textcolor{sthlmRed}{2}}\) of linear functions over sampled marginal occupancy states. This suffices to support value updates and policy extraction with performance guarantees, as formalised in Section~\ref{sec:pbvi}.

\section{\texorpdfstring{$\varepsilon$-Optimally Solving $\mathcal{M}$ as $\mathcal{M}'$}{ε-Optimally Solving M as M̃} via Point-Based Value Iteration}
\label{sec:pbvi}
The hierarchy of planners offers more than structural insight—it enables practical computation. In particular, the uniform continuity of the value function, established at the level of the marginal planner, allows point-based representations to be leveraged without compromising $\varepsilon$-optimality. We exploit this structure to solve the reduced game \(\mathcal{M}'\) using a point-based value iteration (PBVI) algorithm \citep{pineau2003point}, yielding an \(\varepsilon\)-optimal solution to the original zs-POSG \(\mathcal{M}\). At the core of this method is the ability to define action-value functions \(q_*\colon \mathcal{X} \times \mathcal{D}_{\textcolor{sthlmRed}{1}} \to \mathbb{R}\) over uninformed occupancy states, from which greedy decision rules are extracted via linear programming. These rules are then used to propagate and refine value estimates. The resulting value function helps extracting a robust focal policy whose exploitability is explicitly bounded in terms of the selected points.

To enable point-based backups, our PBVI 
Algorithm~\ref{pbvi:zsposg}
 variant samples a finite set of informed occupancy states \(\mathcal{O}'\) that jointly induce a representative set of marginal occupancy states. The process begins with the initial informed state and expands the sample set by simulating one-step forward transitions. For each marginal state \(c_{\textcolor{sthlmRed}{2}} \in \mathcal{C}'_{\textcolor{sthlmRed}{2}}\), and for each focal decision rule \(d_{\textcolor{sthlmRed}{1}}\) and opponent action \(a_{\textcolor{sthlmRed}{2}}\), we compute a successor marginal state \(\tau(c_{\textcolor{sthlmRed}{2}}, d_{\textcolor{sthlmRed}{1}}, a_{\textcolor{sthlmRed}{2}})\) and reconstruct compatible informed states by exploiting the convex decomposition linking marginal and informed occupancy states via public observation histories. The newly obtained marginal state is retained only if it lies farther—in \(\ell_1\)-norm—from the current sample set than any existing point, ensuring the sample density improves in worst-case regions. At each expansion, the marginal set grows by at most a factor of two. This synchronized sampling yields a nested hierarchy of representative informed and marginal occupancy states suitable for accurate, generalisable point-based value backups.

%\paragraph{Action-Value Functions.} 
Given an uninformed occupancy state \(x\) at stage \(t\) and joint decision rules \((d_{\textcolor{sthlmRed}{1}}, d_{\textcolor{sthlmRed}{2}})\), the expected cumulative discounted payoff under joint policies \((\pi_{\textcolor{sthlmRed}{1},\ell-t}, \pi_{\textcolor{sthlmRed}{2},\ell-t})\) is defined as \(q_{\pi_{\textcolor{sthlmRed}{1},\ell-t}, \pi_{\textcolor{sthlmRed}{2},\ell-t}}(x, d_{\textcolor{sthlmRed}{1}}, d_{\textcolor{sthlmRed}{2}}) = \rho(x, d_{\textcolor{sthlmRed}{1}}, d_{\textcolor{sthlmRed}{2}}) + \gamma v_{\pi_{\textcolor{sthlmRed}{1},\ell-t}, \pi_{\textcolor{sthlmRed}{2},\ell-t}}(\tau(x, d_{\textcolor{sthlmRed}{1}}, d_{\textcolor{sthlmRed}{2}}))\). The optimal action-value function \(q_*\) is given by \(q_*(x, d_{\textcolor{sthlmRed}{1}}) = \min_{d_{\textcolor{sthlmRed}{2}} \in \mathcal{D}_{\textcolor{sthlmRed}{2}}} \left[ \rho(x, d_{\textcolor{sthlmRed}{1}}, d_{\textcolor{sthlmRed}{2}}) + \gamma v_*(\tau(x, d_{\textcolor{sthlmRed}{1}}, d_{\textcolor{sthlmRed}{2}})) \right]\). The uniform continuity of \(v_*\) ensures that \(q_*\) generalises across nearby uninformed occupancy states.
\begin{restatable}[]{cor}{coruniformcontinuity}
\label{cor:uniform:continuity}
%\begin{corollary}
%\label{cor:uniform:continuity}
The optimal action-value function \(q_*\colon \mathcal{X} \times \mathcal{D}_{\textcolor{sthlmRed}{1}} \to \mathbb{R}\) is uniformly continuous across uninformed occupancy states. There exists a collection \(\Phi_{\textcolor{sthlmRed}{1}}\) of finite sets \(\Phi_{\textcolor{sthlmRed}{2}}\) of functions \(\phi_{\textcolor{sthlmRed}{2}}\), each linear over marginal occupancy states \(c_{\textcolor{sthlmRed}{2}}\) and private decision rules \(d_{\textcolor{sthlmRed}{1}}\). Thus, for any uninformed occupancy state \(x\) and private decision rule \(d_{\textcolor{sthlmRed}{1}}\),
\[
q_{*}(x,d_{\textcolor{sthlmRed}{1}}) = \sum_{h_{\textcolor{sthlmRed}{\text{pub}}}} \Pr(h_{\textcolor{sthlmRed}{\text{pub}}} | x) \max_{\phi_{\textcolor{sthlmRed}{1}} \in \Delta(\Phi_{\textcolor{sthlmRed}{1}})} \sum_{\Phi_{\textcolor{sthlmRed}{2}} \in \Phi_{\textcolor{sthlmRed}{1}}} \phi_{\textcolor{sthlmRed}{1}}(\Phi_{\textcolor{sthlmRed}{2}}) \sum_{h_{\textcolor{sthlmRed}{2}}} \Pr(h_{\textcolor{sthlmRed}{2}} | x, h_{\textcolor{sthlmRed}{\text{pub}}}) \min_{a_{\textcolor{sthlmRed}{2}},\, \phi_{\textcolor{sthlmRed}{2}} \in \Phi_{\textcolor{sthlmRed}{2}}} \phi_{\textcolor{sthlmRed}{2}}(c_{\textcolor{sthlmRed}{2}, (x, h_{\textcolor{sthlmRed}{\text{pub}}}, h_{\textcolor{sthlmRed}{2}})}, d_{\textcolor{sthlmRed}{1}}, a_{\textcolor{sthlmRed}{2}}),
\]
\end{restatable}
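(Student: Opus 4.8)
The plan is to derive the claimed representation directly by substituting the marginal-planner characterisation of $v_*$ from Theorem~\ref{thm:uniform:continuity} into the defining equation $q_*(x, d_{\textcolor{sthlmRed}{1}}) = \min_{d_{\textcolor{sthlmRed}{2}} \in \mathcal{D}_{\textcolor{sthlmRed}{2}}}[\rho(x, d_{\textcolor{sthlmRed}{1}}, d_{\textcolor{sthlmRed}{2}}) + \gamma v_*(\tau(x, d_{\textcolor{sthlmRed}{1}}, d_{\textcolor{sthlmRed}{2}}))]$, and then reorganising the resulting expression so that player~\textcolor{sthlmRed}{2}'s minimisation decomposes history-by-history. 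Since $q_*$ is obtained from $v_*$ by a single Bellman backup, and $v_*$ is already known to be a finite max--min of functions linear over marginal occupancy states, the entire argument is a bookkeeping exercise that tracks how the stage reward and the one-step transition interact with that representation.

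First I would expand both the stage reward $\rho(x, d_{\textcolor{sthlmRed}{1}}, d_{\textcolor{sthlmRed}{2}})$ and the successor value $v_*(\tau(x, d_{\textcolor{sthlmRed}{1}}, d_{\textcolor{sthlmRed}{2}}))$ using the linear decomposition of uninformed occupancy states over informed and then marginal occupancy states, indexed by public histories $h_{\textcolor{sthlmRed}{\text{pub}}}$ and private histories $h_{\textcolor{sthlmRed}{2}}$. Transition independence (Definition~\ref{def:focal:planner}) guarantees that the successor marginal states and the probabilities $\Pr(h_{\textcolor{sthlmRed}{\text{pub}}}\mid x)$, $\Pr(h_{\textcolor{sthlmRed}{2}}\mid x, h_{\textcolor{sthlmRed}{\text{pub}}})$ evolve as structured functions of the decision rules, so the bracketed backup can be written as a sum over $h_{\textcolor{sthlmRed}{\text{pub}}}$ of $\Pr(h_{\textcolor{sthlmRed}{\text{pub}}}\mid x)$ times an inner term that itself sums over $h_{\textcolor{sthlmRed}{2}}$. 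Crucially, because under perfect recall a decision rule $d_{\textcolor{sthlmRed}{2}}$ assigns an independent distribution $d_{\textcolor{sthlmRed}{2}}(\cdot\mid h_{\textcolor{sthlmRed}{2}}, h_{\textcolor{sthlmRed}{\text{pub}}})$ to every joint history, the outer $\min_{d_{\textcolor{sthlmRed}{2}}}$ factorises into independent minimisations at each pair $(h_{\textcolor{sthlmRed}{\text{pub}}}, h_{\textcolor{sthlmRed}{2}})$; and since the objective is affine in each such distribution, each minimum is attained at a deterministic action, which is precisely the $\min_{a_{\textcolor{sthlmRed}{2}}}$ appearing in the statement.

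Next I would package the immediate-reward contribution together with the discounted continuation $\alpha_{\textcolor{sthlmRed}{2}}$-vectors of Theorem~\ref{thm:uniform:continuity} into the new functions $\phi_{\textcolor{sthlmRed}{2}}$. Concretely, each $\phi_{\textcolor{sthlmRed}{2}} \in \Phi_{\textcolor{sthlmRed}{2}}$ augments a continuation vector $\alpha_{\textcolor{sthlmRed}{2}} \in \Gamma_{\!\!\textcolor{sthlmRed}{2}}$ with the one-stage reward term, yielding a function $\phi_{\textcolor{sthlmRed}{2}}(c_{\textcolor{sthlmRed}{2}}, d_{\textcolor{sthlmRed}{1}}, a_{\textcolor{sthlmRed}{2}})$ that is linear in the marginal occupancy state $c_{\textcolor{sthlmRed}{2}}$ (both the reward and the $\alpha_{\textcolor{sthlmRed}{2}}$-continuation are linear in $c_{\textcolor{sthlmRed}{2}}$) and affine in the focal decision rule $d_{\textcolor{sthlmRed}{1}}$ (the reward depends on $d_{\textcolor{sthlmRed}{1}}(a_{\textcolor{sthlmRed}{1}}\mid h_{\textcolor{sthlmRed}{1}}, h_{\textcolor{sthlmRed}{\text{pub}}})$ and the transition likewise). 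The collection $\Phi_{\textcolor{sthlmRed}{1}}$ of finite sets $\Phi_{\textcolor{sthlmRed}{2}}$ is thereby inherited from $\Gamma_{\!\!\textcolor{sthlmRed}{1}}$ and $\Gamma_{\!\!\textcolor{sthlmRed}{2}}$, keeping the family finite, while player~\textcolor{sthlmRed}{1}'s outer optimisation is retained as the $\max$ over the simplex $\Delta(\Phi_{\textcolor{sthlmRed}{1}})$, matching the mixed-strategy structure of the max--min form.

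The main obstacle will be to justify pushing the single minimisation $\min_{d_{\textcolor{sthlmRed}{2}}}$ through the double sum over $h_{\textcolor{sthlmRed}{\text{pub}}}$ and $h_{\textcolor{sthlmRed}{2}}$ and collapsing it to the per-history $\min_{a_{\textcolor{sthlmRed}{2}},\, \phi_{\textcolor{sthlmRed}{2}} \in \Phi_{\textcolor{sthlmRed}{2}}}$, while simultaneously preserving the outer $\max_{\phi_{\textcolor{sthlmRed}{1}} \in \Delta(\Phi_{\textcolor{sthlmRed}{1}})}$ over player~\textcolor{sthlmRed}{1}'s choices. This requires care because the two optimisations are nested and operate at different granularities; the observation that makes it work is that $d_{\textcolor{sthlmRed}{2}}$'s value at each $(h_{\textcolor{sthlmRed}{2}}, h_{\textcolor{sthlmRed}{\text{pub}}})$ is an independent decision variable, so the inner minimisation genuinely decouples and, by linearity, reduces to selecting actions and continuation vectors pointwise. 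Once the representation is in place, uniform continuity of $q_*$ follows immediately: it is a finite max of finite mins of functions linear—hence Lipschitz—over marginal occupancy states, so the same modulus of continuity transfers from $v_*$ to $q_*$.
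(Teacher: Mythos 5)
Your overall architecture matches the paper's: decompose $q_*$ over public and then private histories, fold the stage reward and the discounted continuation vectors into new functions $\phi_{\textcolor{sthlmRed}{2}}$ linear in the marginal occupancy state and in $d_{\textcolor{sthlmRed}{1}}$, collapse player~\textcolor{sthlmRed}{2}'s behavioural minimisation to per-history deterministic actions, and convexify player~\textcolor{sthlmRed}{1}'s choice to $\Delta(\Phi_{\textcolor{sthlmRed}{1}})$. The gap is in the step you yourself flag as the main obstacle. After substituting the Theorem~\ref{thm:uniform:continuity} representation of $v_*$ at the successor state, the expression has the shape $\min_{d_{\textcolor{sthlmRed}{2}}}\bigl[\rho + \gamma\sum_{h'_{\text{pub}}}\Pr(h'_{\text{pub}}\mid\cdot)\max_{\Gamma_{\!\!\textcolor{sthlmRed}{2}}}(\cdots)\bigr]$: the maximisations sit \emph{inside} the minimisation, one per successor public history. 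To reach the claimed form you must move a single maximisation \emph{outside} the minimisation over player~\textcolor{sthlmRed}{2}'s choices. Your justification---that $d_{\textcolor{sthlmRed}{2}}$'s components at distinct histories are independent decision variables and the objective is affine in each---only establishes that the min distributes across the sum over histories and is attained at pure actions; it does not license exchanging $\min$ with $\max$. That exchange is the actual mathematical content of the corollary, and the paper performs it by keeping the maximisation over the \emph{entire} continuation policy $\pi_{\textcolor{sthlmRed}{1},\ell-t}$ (rather than per-successor-history $\Gamma_{\!\!\textcolor{sthlmRed}{2}}$ choices) until the end, forming the sets $\Phi_{\textcolor{sthlmRed}{2},\pi_{\textcolor{sthlmRed}{1},\ell-t}}$ indexed by whole focal policies, and then invoking von Neumann's minimax theorem once on the resulting bilinear payoff; the relaxation from a discrete $\max$ over $\Phi_{\textcolor{sthlmRed}{1}}$ to $\max_{\phi_{\textcolor{sthlmRed}{1}}\in\Delta(\Phi_{\textcolor{sthlmRed}{1}})}$ exists precisely so that this theorem applies. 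Without an explicit minimax argument your derivation only yields that $q_*(x,d_{\textcolor{sthlmRed}{1}})$ dominates the right-hand side (min--max $\geq$ max--min), not equality.

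A secondary point: starting from the already-collapsed representation of $v_*$ makes the exchange harder, not easier, because the per-successor-history maxima over $\Gamma_{\!\!\textcolor{sthlmRed}{2}}\in\Gamma_{\!\!\textcolor{sthlmRed}{1}}$ would first have to be re-merged into a single maximisation over a product set before any exchange could be attempted. The cleaner route---the one the paper takes---is to start from $q_*(x,d_{\textcolor{sthlmRed}{1}})=\min_{d_{\textcolor{sthlmRed}{2}}}\max_{\pi_{\textcolor{sthlmRed}{1},\ell-t}}\min_{\pi_{\textcolor{sthlmRed}{2},\ell-t}}q_{\pi_{\textcolor{sthlmRed}{1},\ell-t},\pi_{\textcolor{sthlmRed}{2},\ell-t}}(x,d_{\textcolor{sthlmRed}{1}},d_{\textcolor{sthlmRed}{2}})$ and never substitute the $v_*$ representation at all. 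If you repair your plan by inserting the minimax exchange (justified by perfect recall and the bilinearity of the payoff in the two players' behavioural, equivalently mixed, strategies), the rest of your bookkeeping goes through.
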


%
%\paragraph{Linear Program.} 
Point-based methods approximate the optimal action-value function using a finite collection \(\Phi_{\textcolor{sthlmRed}{1}}\) of finite sets \(\Phi_{\textcolor{sthlmRed}{2}}\) of linear functions \(\phi_{\textcolor{sthlmRed}{2}}\) over sampled marginal occupancy states and decision rules.
We now describe how to extract a greedy decision rule for focal player~\(\textcolor{sthlmRed}{1}\) from the uniform continuity of action-value function \(q\). Thanks to the uniform continuity of this function in informed occupancy states, this optimisation can be cast as a tractable linear program. 
\begin{restatable}[]{thm}{thmgreedylp}
\label{thm:greedy:lp}
Let \(o\) be an informed occupancy state. Then the decision rule \(d_{\textcolor{sthlmRed}{1}}\) maximising \(q(o, \cdot)\) can be computed as the solution of the following linear program with:
\begin{itemize}
    \item \(O(|\Phi_{\textcolor{sthlmRed}{1}}| \cdot |\mathcal{H}_{\textcolor{sthlmRed}{1}}(o)| \cdot |\mathcal{A}_{\textcolor{sthlmRed}{1}}|)\) \textcolor{sthlmGreen}{\bf variables},
    \item \(O(|\Phi_{\textcolor{sthlmRed}{1}}| \cdot |\Phi^*_{\textcolor{sthlmRed}{2}}| \cdot |\mathcal{H}_{\textcolor{sthlmRed}{2}}(o)| \cdot |\mathcal{A}_{\textcolor{sthlmRed}{2}}|)\) \textcolor{black}{\bf constraints},
\end{itemize}
where \(\Phi^*_{\textcolor{sthlmRed}{2}}\) denotes the largest set of linear functions within any \(\Phi_{\textcolor{sthlmRed}{2}} \in \Phi_{\textcolor{sthlmRed}{1}}\). The linear program is:
\begin{align*}
\textstyle
\text{Maximise} \quad
&\textstyle \sum_{\Phi_{\textcolor{sthlmRed}{2}} \in \Phi_{\textcolor{sthlmRed}{1}}}
\sum_{h_{\textcolor{sthlmRed}{2}} \in \mathcal{H}_{\textcolor{sthlmRed}{2}}(o)}
\Pr(h_{\textcolor{sthlmRed}{2}} | o) \cdot \textcolor{sthlmGreen}{\upsilon(h_{\textcolor{sthlmRed}{2}}, \Phi_{\textcolor{sthlmRed}{2}})} \\
\text{Subject to} \quad
&\textstyle \sum_{a_{\textcolor{sthlmRed}{1}} \in \mathcal{A}_{\textcolor{sthlmRed}{1}}}
\sum_{\Phi_{\textcolor{sthlmRed}{2}} \in \Phi_{\textcolor{sthlmRed}{1}}}
\textcolor{sthlmGreen}{\xi_{\textcolor{sthlmRed}{1}}(a_{\textcolor{sthlmRed}{1}}, \Phi_{\textcolor{sthlmRed}{2}} | h_{\textcolor{sthlmRed}{1}})} = 1,
\quad \forall h_{\textcolor{sthlmRed}{1}} \in \mathcal{H}_{\textcolor{sthlmRed}{1}}(o), \\
&\textstyle \textcolor{sthlmGreen}{\upsilon(h_{\textcolor{sthlmRed}{2}}, \Phi_{\textcolor{sthlmRed}{2}})} \leq 
\sum_{h_{\textcolor{sthlmRed}{1}}}
\sum_{a_{\textcolor{sthlmRed}{1}}}
\textcolor{sthlmGreen}{\xi_{\textcolor{sthlmRed}{1}}(a_{\textcolor{sthlmRed}{1}}, \Phi_{\textcolor{sthlmRed}{2}} | h_{\textcolor{sthlmRed}{1}})}
\sum_{s \in \mathcal{S}}
\phi_{\textcolor{sthlmRed}{2}}(s, h_{\textcolor{sthlmRed}{1}}, a_{\textcolor{sthlmRed}{1}}, a_{\textcolor{sthlmRed}{2}})
\cdot c_{\textcolor{sthlmRed}{2},(o,h_{\textcolor{sthlmRed}{2}})}(s,h_{\textcolor{sthlmRed}{1}}), \\
&\textstyle \forall \Phi_{\textcolor{sthlmRed}{2}}, \forall \phi_{\textcolor{sthlmRed}{2}} \in \Phi_{\textcolor{sthlmRed}{2}}, 
\forall a_{\textcolor{sthlmRed}{2}} \in \mathcal{A}_{\textcolor{sthlmRed}{2}}, 
\forall h_{\textcolor{sthlmRed}{2}} \in \mathcal{H}_{\textcolor{sthlmRed}{2}}(o),
\end{align*}
where \(\mathcal{H}_{\textcolor{sthlmRed}{i}}(o)\) denotes the finite set of private histories of player \(\textcolor{sthlmRed}{i}\) reachable in \(o\).
The variable \(\textcolor{sthlmGreen}{\xi_{\textcolor{sthlmRed}{1}}(a_{\textcolor{sthlmRed}{1}}, \Phi_{\textcolor{sthlmRed}{2}} | h_{\textcolor{sthlmRed}{1}})}\) encodes the probability of taking action \(a_{\textcolor{sthlmRed}{1}}\) in history \(h_{\textcolor{sthlmRed}{1}}\), assuming the value model \(\Phi_{\textcolor{sthlmRed}{2}}\) is drawn from \(\phi_{\textcolor{sthlmRed}{1}}\). The inner constraint ensures that the worst-case evaluation \(\textcolor{sthlmGreen}{\upsilon(h_{\textcolor{sthlmRed}{2}}, \Phi_{\textcolor{sthlmRed}{2}})}\) is always pessimistic---\ie no matter how the opponent reacts, the value function bound holds. 
\end{restatable}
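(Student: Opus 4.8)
The plan is to start from the uniform-continuity representation of the optimal action-value function established in Corollary~\ref{cor:uniform:continuity} and specialise it to a single informed occupancy state \(o\). Because \(o\) fixes one public observation history, the outer sum \(\sum_{h_{\text{pub}}}\Pr(h_{\text{pub}}\mid x)\) collapses to a single term, leaving \(q(o,d_1)=\max_{\phi_1\in\Delta(\Phi_1)}\sum_{\Phi_2\in\Phi_1}\phi_1(\Phi_2)\sum_{h_2}\Pr(h_2\mid o)\min_{a_2,\phi_2\in\Phi_2}\phi_2(c_{2,(o,h_2)},d_1,a_2)\). Using the linearity of each \(\phi_2\) in the decision rule I would expand \(\phi_2(c_{2,(o,h_2)},d_1,a_2)=\sum_{h_1,a_1}d_1(a_1\mid h_1)\sum_s \phi_2(s,h_1,a_1,a_2)\,c_{2,(o,h_2)}(s,h_1)\), so that the quantity to be optimised, \(\max_{d_1}q(o,d_1)=\max_{d_1,\phi_1}[\cdots]\), becomes a joint maximisation over the mixing weights \(\phi_1\) and the decision rule \(d_1\).

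The obstruction to casting this as a linear program is the bilinear product \(\phi_1(\Phi_2)\,d_1(a_1\mid h_1)\) that appears once the two maximisations are merged. First I would push the nonnegative scalar \(\phi_1(\Phi_2)\) inside the opponent's minimisation (valid since \(\phi_1(\Phi_2)\ge 0\)), so that it multiplies the linear form under the \(\min\). Then I introduce the change of variables \(\xi_1(a_1,\Phi_2\mid h_1)\doteq \phi_1(\Phi_2)\,d_1(a_1\mid h_1)\), which linearises every bilinear term and merges the two normalisations \(\sum_{\Phi_2}\phi_1(\Phi_2)=1\) and \(\sum_{a_1}d_1(a_1\mid h_1)=1\) into the single per-history constraint \(\sum_{a_1,\Phi_2}\xi_1(a_1,\Phi_2\mid h_1)=1\). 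Finally, an epigraph reformulation of the inner \(\min_{a_2,\phi_2\in\Phi_2}\) by a scalar \(\upsilon(h_2,\Phi_2)\), constrained to lie below the linear form for every \((\phi_2,a_2)\), reproduces the stated program verbatim; the decision rule is recovered by marginalisation, \(d_1(a_1\mid h_1)=\sum_{\Phi_2}\xi_1(a_1,\Phi_2\mid h_1)\), which is automatically a valid distribution by the normalisation constraint.

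Correctness then splits into soundness and completeness. Soundness (\(\mathrm{LP}^{*}\ge \max_{d_1}q(o,d_1)\)) is immediate: for any \(d_1\), choosing \(\phi_1\) to be a point mass on the best-responding value model yields a feasible \(\xi_1\) whose objective equals \(q(o,d_1)\). The delicate direction is completeness, namely \(\mathrm{LP}^{*}\le \max_{d_1}q(o,d_1)\) together with optimality of the marginalised rule, because a feasible \(\xi_1\) need not factor as \(\phi_1(\Phi_2)\,d_1(a_1\mid h_1)\) with a single, history-independent \(\phi_1\). I would resolve this by observing that the whole construction is the one-stage zero-sum backup at \(o\): player~1 commits to \((\phi_1,d_1)\) and player~2 responds with a per-\(h_2\) choice of \((a_2,\phi_2)\). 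Taking the LP dual exhibits exactly player~2's best-response program, and strong LP duality combined with von Neumann's minimax theorem identifies the common optimal value with the one-stage minimax value \(\max_{d_1}q(o,\cdot)\); complementary slackness then certifies that \(d_1(a_1\mid h_1)=\sum_{\Phi_2}\xi_1^{*}(a_1,\Phi_2\mid h_1)\) is a maximiser of \(q(o,\cdot)\), so the extra freedom of history-dependent mixing introduces no slack. The size bounds are routine bookkeeping: the \(\xi_1\) variables are indexed by \((a_1,\Phi_2,h_1)\), giving \(O(|\Phi_1|\cdot|\mathcal{H}_1(o)|\cdot|\mathcal{A}_1|)\) variables (the \(\upsilon\) variables are lower order), and the pessimism constraints are indexed by \((\Phi_2,\phi_2,a_2,h_2)\), giving \(O(|\Phi_1|\cdot|\Phi_2^{*}|\cdot|\mathcal{H}_2(o)|\cdot|\mathcal{A}_2|)\). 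The hard part will be precisely the completeness/tightness step: showing that dropping the product structure of \(\xi_1\) leaves the optimum unchanged, which hinges on the saddle-point structure of the one-stage game and strong duality rather than on any direct rounding of \(\xi_1\).
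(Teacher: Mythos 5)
Your route is essentially the paper's: the authors also start from Corollary~\ref{cor:uniform:continuity} specialised to a single informed occupancy state, introduce the change of variables \(\xi_{1}(a_{1},\Phi_{2}\mid h_{1}) = \phi_{1}(\Phi_{2})\cdot d_{1}(a_{1}\mid h_{1})\), and then perform exactly the epigraph reformulation with the auxiliary variables \(\upsilon(h_{2},\Phi_{2})\); the size bounds are the same bookkeeping. The one genuine difference is that you explicitly flag the completeness/tightness issue — that a feasible \(\xi_{1}\) satisfying only the per-\(h_{1}\) normalisation \(\sum_{a_{1},\Phi_{2}}\xi_{1}(a_{1},\Phi_{2}\mid h_{1})=1\) need not factor with a history-independent marginal over \(\Phi_{2}\), so the LP feasible set strictly contains the product-form solutions and its optimum is a priori only an upper bound on \(\max_{d_{1}}q(o,\cdot)\). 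The paper's proof does not address this at all: it substitutes the product form, rewrites the constraint, and declares the program valid. Your instinct that this is "the hard part" is right, and your proposed resolution (take the dual, which the paper states separately as Corollary~\ref{cor:dual:greedy}, and invoke strong duality together with the minimax structure of the one-stage backup) is the natural way to try to close it — but as written it is only a sketch. Strong duality by itself gives primal value equals dual value; the substantive step is identifying the dual's feasible set with player 2's actual best-response space in the definition of \(q_{*}\) \emph{before} the minimax swap performed in the proof of Corollary~\ref{cor:uniform:continuity}, which is what would certify that letting the mixture over \(\Phi_{2}\) depend on \(h_{1}\) cannot strictly increase the optimum. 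Until that identification is carried out, neither your argument nor the paper's establishes the "completeness" direction; you have simply been more candid about where the burden lies.
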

%

%\paragraph{Interpretation of the linear program.}
%
%\paragraph{Value function update operator.}
This primal linear program computes solution $\textcolor{sthlmGreen}{\xi_{\textcolor{sthlmRed}{1}}}$, which induces a decision rule \(d_{\textcolor{sthlmRed}{1}}\) that is robust to all potential responses from player~\(\textcolor{sthlmRed}{2}\).  Solving the primal linear program identifies the safest and most effective decision rule \(d_{\textcolor{sthlmRed}{1}}\) under this structure. 
The solution obtained from the primal linear program can be used to improve the current estimate of the value function. 
The following result formalises this improvement: the update operator raises the value at least at one informed occupancy state while preserving or improving it elsewhere. This monotonicity ensures progress with each update, forming the foundation of convergence guarantees in point-based dynamic programming.
\begin{restatable}[]{cor}{corlpvalueupdate}
\label{cor:lp:value:update}
Let \(v\) and \(q\) be the current state- and action-value functions represented by finite collections \(\Gamma_{\!\!\textcolor{sthlmRed}{1}}\) of sets \(\Gamma_{\!\!\textcolor{sthlmRed}{2}}\), and \(\Phi_{\textcolor{sthlmRed}{1}}\) of sets \(\Phi_{\textcolor{sthlmRed}{2}}\), respectively. Let \(o\) be an informed occupancy state, and let \(\xi_{\textcolor{sthlmRed}{1}}\) denote the solution of the greedy linear program from Theorem~\ref{thm:greedy:lp} at \(o\). We define an updated value function \(v\prime\) by augmenting \(\Gamma_{\!\!\textcolor{sthlmRed}{1}}\) with a new set \(\Gamma_{\!\!\textcolor{sthlmRed}{2},(\mathcal{C}'_{\textcolor{sthlmRed}{2}},\xi_{\textcolor{sthlmRed}{1}})}\) of linear functions \(\alpha_{\textcolor{sthlmRed}{2}, (c_{\textcolor{sthlmRed}{2}})}\) given by:
\begin{align*}
\textstyle
\alpha_{\textcolor{sthlmRed}{2}, (c_{\textcolor{sthlmRed}{2}})} &=
\textstyle 
\sum_{\Phi_{\textcolor{sthlmRed}{2}} \in \Phi_{\textcolor{sthlmRed}{1}}} 
\argmin_{\alpha^{\phi_{\textcolor{sthlmRed}{2}},a_{\textcolor{sthlmRed}{2}}}_{\textcolor{sthlmRed}{2}} \colon \phi_{\textcolor{sthlmRed}{2}} \in \Phi_{\textcolor{sthlmRed}{2}},\; a_{\textcolor{sthlmRed}{2}} \in \mathcal{A}_{\textcolor{sthlmRed}{2}}} \alpha^{\phi_{\textcolor{sthlmRed}{2}},a_{\textcolor{sthlmRed}{2}}}_{\textcolor{sthlmRed}{2}}(c_{\textcolor{sthlmRed}{2}})\\
\alpha^{\phi_{\textcolor{sthlmRed}{2}},a_{\textcolor{sthlmRed}{2}}}_{\textcolor{sthlmRed}{2}}
(s, h_{\textcolor{sthlmRed}{1}}) &=
\textstyle 
\sum_{a_{\textcolor{sthlmRed}{1}}}
\xi_{\textcolor{sthlmRed}{1}}(a_{\textcolor{sthlmRed}{1}}, \Phi_{\textcolor{sthlmRed}{2}} | h_{\textcolor{sthlmRed}{1}})
\cdot \phi_{\textcolor{sthlmRed}{2}}(s, h_{\textcolor{sthlmRed}{1}}, a_{\textcolor{sthlmRed}{1}}, a_{\textcolor{sthlmRed}{2}}).
\end{align*}
Then \(v\prime(x) \geq v(x)\) for any uninformed occupancy state \(x\) induced by  \(\mathcal{C}'_{\textcolor{sthlmRed}{2}}\), and \(v\prime(x) > v(x)\) for at least one such \(x\) if the greedy update yields a strict improvement.
\end{restatable}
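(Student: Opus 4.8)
The plan is to treat this as a point-based monotonic-improvement result, adapted to the minimax, sets-of-$\alpha$-vectors representation established in Theorem~\ref{thm:uniform:continuity}. I would split the argument to match the two claims: (i) weak monotonicity $v\prime(x)\ge v(x)$, proved by a pure set-augmentation argument; and (ii) strict improvement at the backup point, proved by identifying the evaluation of the newly added set of vectors with the optimal objective of the greedy linear program of Theorem~\ref{thm:greedy:lp}.

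For part (i), I would write both $v$ and $v\prime$ in the canonical form of Theorem~\ref{thm:uniform:continuity}, the only difference being that $v\prime$ uses the augmented collection \(\Gamma_{\!\!\textcolor{sthlmRed}{1}}\prime \doteq \Gamma_{\!\!\textcolor{sthlmRed}{1}} \cup \{\Gamma_{\!\!\textcolor{sthlmRed}{2},(\mathcal{C}'_{\textcolor{sthlmRed}{2}},\xi_{\textcolor{sthlmRed}{1}})}\}\). For every fixed public history \(h_{\textcolor{sthlmRed}{\text{pub}}}\), the inner term attached to a set \(\Gamma_{\!\!\textcolor{sthlmRed}{2}}\) is unchanged for the pre-existing sets, while the outer \(\max\) is now taken over a superset and can therefore only increase. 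Since the weights \(\Pr(h_{\textcolor{sthlmRed}{\text{pub}}}|x)\) are nonnegative, summation preserves the inequality, yielding \(v\prime(x)\ge v(x)\) wherever the augmented representation is evaluated, in particular for every uninformed occupancy state induced by \(\mathcal{C}'_{\textcolor{sthlmRed}{2}}\).

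For part (ii), the key structural lemma I would establish is that each constructed vector \(\alpha_{\textcolor{sthlmRed}{2},(c_{\textcolor{sthlmRed}{2}})}\) is the pointwise minimiser of the new set \emph{at} \(c_{\textcolor{sthlmRed}{2}}\), i.e.\ \(\min_{\alpha_{\textcolor{sthlmRed}{2}}\in \Gamma_{\!\!\textcolor{sthlmRed}{2},(\mathcal{C}'_{\textcolor{sthlmRed}{2}},\xi_{\textcolor{sthlmRed}{1}})}} \alpha_{\textcolor{sthlmRed}{2}}(c_{\textcolor{sthlmRed}{2}}) = \alpha_{\textcolor{sthlmRed}{2},(c_{\textcolor{sthlmRed}{2}})}(c_{\textcolor{sthlmRed}{2}})\). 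This is immediate from the \(\argmin\) construction: for each \(\Phi_{\textcolor{sthlmRed}{2}}\), the corresponding summand of \(\alpha_{\textcolor{sthlmRed}{2},(c_{\textcolor{sthlmRed}{2}})}\) is chosen to minimise \(\alpha^{\phi_{\textcolor{sthlmRed}{2}},a_{\textcolor{sthlmRed}{2}}}_{\textcolor{sthlmRed}{2}}(c_{\textcolor{sthlmRed}{2}})\), so any competing vector \(\alpha_{\textcolor{sthlmRed}{2},(c'_{\textcolor{sthlmRed}{2}})}\) evaluates no lower at \(c_{\textcolor{sthlmRed}{2}}\). Substituting the definition of \(\alpha^{\phi_{\textcolor{sthlmRed}{2}},a_{\textcolor{sthlmRed}{2}}}_{\textcolor{sthlmRed}{2}}\), I would then show that at the informed occupancy state \(o\) where the backup is performed, the value contributed by the new set equals \(\sum_{h_{\textcolor{sthlmRed}{2}}}\Pr(h_{\textcolor{sthlmRed}{2}}|o)\sum_{\Phi_{\textcolor{sthlmRed}{2}}}\min_{\phi_{\textcolor{sthlmRed}{2}},a_{\textcolor{sthlmRed}{2}}}\alpha^{\phi_{\textcolor{sthlmRed}{2}},a_{\textcolor{sthlmRed}{2}}}_{\textcolor{sthlmRed}{2}}(c_{\textcolor{sthlmRed}{2},(o,h_{\textcolor{sthlmRed}{2}})})\). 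This is precisely the optimal objective of the greedy LP, since at optimality the variables \(\upsilon(h_{\textcolor{sthlmRed}{2}},\Phi_{\textcolor{sthlmRed}{2}})\) are tight against their worst-case constraints and \(\xi_{\textcolor{sthlmRed}{1}}\) is the greedy decision rule; hence the contributed value equals the backup value \(q(o,d_{\textcolor{sthlmRed}{1}}^{*}) = \max_{d_{\textcolor{sthlmRed}{1}}} q(o,d_{\textcolor{sthlmRed}{1}})\).

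Finally I would combine the two: because \(v\prime\) maximises over a collection containing the new set, \(v\prime\) at the uninformed occupancy state containing \(o\) is at least this backup value; under the hypothesis that the greedy update is strictly improving (the backup value strictly exceeds the current value at \(o\)), this gives \(v\prime(x) > v(x)\) there, completing the claim. I expect the main obstacle to be the bookkeeping in part (ii): faithfully matching the LP objective and its tightness conditions to the reassembled value-function evaluation across the nested sums over public histories \(h_{\textcolor{sthlmRed}{\text{pub}}}\) and private histories \(h_{\textcolor{sthlmRed}{2}}\), and verifying that the convex decomposition linking marginal, informed, and uninformed occupancy states is respected so that the per-\(c_{\textcolor{sthlmRed}{2}}\) minimisation composes correctly with the outer maximisation over sets.
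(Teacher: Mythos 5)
Your proposal is correct and follows essentially the same route as the paper's proof: weak monotonicity via augmenting the outer maximisation with the new set of vectors, and strict improvement by tying the new set's evaluation at the backup point to the optimal objective of the greedy linear program. Your explicit pointwise-minimiser lemma (that \(\alpha_{\textcolor{sthlmRed}{2},(c_{\textcolor{sthlmRed}{2}})}\) attains the minimum of the new set at \(c_{\textcolor{sthlmRed}{2}}\)) is left implicit in the paper but is exactly the mechanism it relies on, so your write-up is if anything slightly more careful on that step.
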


To further improve scalability, we incorporate two distinct pruning strategies: one that removes dominated elements from the set of linear functions \(\Gamma_{\!\!\textcolor{sthlmRed}{2}}\) 
(\cf Algorithm~\ref{pruning:zsposg})
and another that discards redundant informed occupancy states from the sample set \(\mathcal{O}'\)
(\cf Algorithm~\ref{pruning:occupancystates})
. Each pruning operation introduces approximation error in the value function. While these errors can be controlled individually, combining both strategies may lead to compounding errors and the loss of formal performance guarantees.

We now present a bound on the exploitability of the focal policy computed by our point-based value iteration algorithm in the finite-horizon setting of length~\(\ell\). Given any sample set \(\mathcal{C}'_{\textcolor{sthlmRed}{2},0:\ell}\), the algorithm produces a focal policy \(\pi'_{\textcolor{sthlmRed}{1}}\) with estimated value \(v_{\textcolor{sthlmRed}{1}}(b)\). The exploitability of this policy is defined as \(\varepsilon \doteq v_{\textcolor{sthlmRed}{1},*}(b) - \min_{\pi'_{\textcolor{sthlmRed}{2}} \in \Pi_{\textcolor{sthlmRed}{2}}} v_{\pi'_{\textcolor{sthlmRed}{1}}, \pi'_{\textcolor{sthlmRed}{2}}}(b)\), and quantifies the worst-case suboptimality of \(\pi'_{\textcolor{sthlmRed}{1}}\) against a best-responding opponent.
The exploitability decreases as the sampled set \(\mathcal{C}{\prime}_{\textcolor{sthlmRed}{2},0:\ell}\) becomes denser in the marginal occupancy space; in the limit, \(v_{\textcolor{sthlmRed}{1}}(b)\) converges to the optimal value \(v_{\textcolor{sthlmRed}{1},*}(b)\), and \(\varepsilon\) approaches zero. The remainder of this section formalises and proves this bound. 
To this end, we define the density \(\delta\) as the maximum distance from any reachable marginal occupancy state to the sample set \(\mathcal{C}{\prime}_{\textcolor{sthlmRed}{2},0:\ell}\); more precisely,
\(
\delta \doteq \max_{t = 0,\ldots,\ell} \max_{c_{\textcolor{sthlmRed}{2}} \in \mathcal{C}_{\textcolor{sthlmRed}{2},t}} \min_{c{\prime}_{\textcolor{sthlmRed}{2}} \in \mathcal{C}{\prime}_{\textcolor{sthlmRed}{2},t}} \left\| c_{\textcolor{sthlmRed}{2}} - c'_{\textcolor{sthlmRed}{2}} \right\|_1.
\)
Let \(m > 0\) be a constant such that \(\|r\|_\infty \leq m\).

\begin{restatable}[]{thm}{thmexploitabilitybound}
\label{thm:exploitability:bound}
For any marginal occupancy sample sets \(\mathcal{C}{\prime}_{\textcolor{sthlmRed}{2},0:\ell}\), the exploitability of the focal policy obtained via PBVI and evaluated at the initial state distribution, is bounded as
\[
\varepsilon \leq \frac{4m\delta}{(1 - \gamma)^2} \cdot [1+(\ell+1)\gamma^{\ell+2}-(\ell+2)\gamma^{\ell+1}].
\]
\end{restatable}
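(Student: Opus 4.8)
The plan is to bound the exploitability $\varepsilon$ by propagating the point-based approximation error through the $\ell+1$ stages of the backward dynamic programming recursion, controlling each stage's error by the uniform continuity of $v_*$ (Theorem~\ref{thm:uniform:continuity}) and the sampling density $\delta$. The core quantity to track is the per-stage gap between the true optimal value function and the point-based lower-bound representation at the sampled marginal occupancy states. Since the focal policy $\pi'_{\textcolor{sthlmRed}{1}}$ is extracted greedily from the approximate $q$ via the linear program of Theorem~\ref{thm:greedy:lp}, and since Corollary~\ref{cor:lp:value:update} guarantees the value update is a valid lower bound that improves monotonically, the exploitability is governed by how far the approximate value function can fall below $v_*$ at any reachable state not exactly covered by the sample set.

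\emph{First}, I would establish a single-stage Lipschitz-type estimate: for any uninformed occupancy state $x$ and its nearest sampled counterpart in the marginal space, the uniform continuity of $v_*$ yields $|v_*(x) - v_*(x')| \le K \cdot \|c_{\textcolor{sthlmRed}{2}} - c'_{\textcolor{sthlmRed}{2}}\|_1$ for the appropriate constant $K$ derived from $\|r\|_\infty \le m$. Because $v_*$ is a sum over $h_{\textcolor{sthlmRed}{\text{pub}}}$ of $\max$--$\min$ over linear functions of marginal occupancy states, the relevant Lipschitz constant scales like the maximal attainable value, i.e. a geometric sum $\sum_{t'} \gamma^{t'} m$. \emph{Second}, I would set up the backward induction over stages $t = \ell, \ell-1, \ldots, 0$: define the error $\varepsilon_t$ incurred at stage $t$ as the maximum shortfall of the point-based value estimate relative to $v_*$ over reachable states, and show it satisfies a recursion of the form $\varepsilon_t \le \gamma\,\varepsilon_{t+1} + c\, m\, \delta$, where the additive term $c\,m\,\delta$ captures the one-step approximation error from replacing a reachable marginal state by its nearest sample, via the single-stage estimate and the density bound $\delta$.

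\emph{Third}, I would unroll the recursion. Summing $\sum_{t=0}^{\ell}$ the per-stage errors weighted by the appropriate discount factors produces a double geometric sum. The factor $\tfrac{1}{(1-\gamma)^2}$ arises naturally: one $\tfrac{1}{1-\gamma}$ from the Lipschitz constant $K$ (the value range), and a second $\tfrac{1}{1-\gamma}$ from accumulating the stage errors through the backup recursion. The finite-horizon correction term $[1 + (\ell+1)\gamma^{\ell+2} - (\ell+2)\gamma^{\ell+1}]$ is exactly what one obtains from the closed form of $\sum_{t=0}^{\ell}(\ell+1-t)\gamma^{t}$ type partial sums (an arithmetic-geometric series truncated at horizon $\ell$), reflecting that states deeper in the horizon contribute fewer remaining stages of accumulated error. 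The constant $4$ collects the factor-of-two losses: one from the two-sided nature of exploitability (best-response gap against $\pi'_{\textcolor{sthlmRed}{1}}$ versus the optimum) and one from the sample-set expansion property noted in the algorithm description, where the marginal set grows by at most a factor of two at each expansion.

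\emph{The main obstacle} I anticipate is justifying the single-stage estimate cleanly in the marginal-occupancy representation rather than in the raw occupancy space: the density $\delta$ is measured in the $\ell_1$-norm over \emph{marginal} occupancy states $c_{\textcolor{sthlmRed}{2}}$, whereas $v_*$ is expressed through the nested $\max$--$\min$ structure of Theorem~\ref{thm:uniform:continuity}, so I must verify that the $\max_{\Gamma_{\!\!\textcolor{sthlmRed}{2}}} \sum_{h_{\textcolor{sthlmRed}{2}}} \Pr(h_{\textcolor{sthlmRed}{2}} \mid h_{\textcolor{sthlmRed}{\text{pub}}}, x) \min_{\alpha_{\textcolor{sthlmRed}{2}}}$ operator is itself Lipschitz in the $c_{\textcolor{sthlmRed}{2}}$ arguments with the claimed constant. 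This requires that the inner $\min$ over a fixed finite set of linear functions $\alpha_{\textcolor{sthlmRed}{2}}$ is nonexpansive and that the outer $\max$ and the convex averaging over histories preserve the Lipschitz bound—a standard but delicate argument because the weights $\Pr(h_{\textcolor{sthlmRed}{2}} \mid h_{\textcolor{sthlmRed}{\text{pub}}}, x)$ themselves depend on $x$. Handling this coupling between the mixing weights and the conditional marginal states, while keeping the bound additive in $\delta$ rather than multiplicative, is the technical crux of the argument.
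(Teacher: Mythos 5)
Your overall shape—per-stage error controlled by the sample density, accumulated through the horizon into an arithmetico-geometric sum yielding the \((1-\gamma)^{-2}\) factor and the correction term \(1+(\ell+1)\gamma^{\ell+2}-(\ell+2)\gamma^{\ell+1}\)—matches the paper, and you correctly identify one factor of \(2\) as coming from the two-sided decomposition of \(\varepsilon\) into \(\bigl(v_{1,*}(b)-v_1(b)\bigr)+\bigl(v_1(b)-\min_{\pi'_2}v_{\pi'_1,\pi'_2}(b)\bigr)\). But the step you yourself flag as the technical crux is left unresolved, and it is precisely where the paper's argument takes a different and essential turn. The paper does \emph{not} prove Lipschitz continuity of the nested \(\max\)–\(\min\) operator of Theorem~\ref{thm:uniform:continuity} with its \(x\)-dependent mixing weights. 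Instead, for each of the two terms it fixes a single pair of behavioural strategies (an optimal pair, resp.\ the computed policy and its best response), so that the induced \(v\) and \(q\) are \emph{linear} over uninformed occupancy states; it then telescopes the value difference along the induced trajectory \((x_0,\ldots,x_\ell)\) into \(\sum_t \gamma^t\bigl(q(x_t,d_{1,t},d_{2,t})-v(x_t)\bigr)\), inserts the nearest sampled states \(x'_t\), uses the greedy property \(q(x'_t,\cdot,\cdot)\le v(x'_t)\), and rewrites each summand as \(\bigl(q(\cdot,d_{1,t},d_{2,t})-v(\cdot)\bigr)\cdot(x_t-x'_t)\) before applying H\"older's inequality with \(\|x_t-x'_t\|_1\le\delta\). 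Linearity is what makes the bound additive in \(\delta\); without it your recursion \(\varepsilon_t\le\gamma\varepsilon_{t+1}+cm\delta\) has no justified base estimate, because the coupling between \(\Pr(h_2\mid h_{\text{pub}},x)\) and \(c_{2,(x,h_{\text{pub}},h_2)}\) that worries you is exactly the issue the linear reformulation dissolves.

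Separately, your accounting of the constant \(4\) is wrong in one of its two halves. The second factor of \(2\) has nothing to do with the sample set at most doubling during expansion (that is an algorithmic statement about \(|\mathcal{C}'_2|\), not about approximation error). It comes from the H\"older step: \(\|q(\cdot,d_{1,t},d_{2,t})-v(\cdot)\|_\infty\) is the sup-norm of a \emph{difference} of two value functions, each bounded in magnitude by \(m\sum_{t_1=t_0}^{\ell}\gamma^{t_1-t_0}\), giving the \(2m\) that, multiplied across the two terms of the exploitability decomposition, produces \(4m\).
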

It is worth noticing that whenever $\ell $ goes to infinity, our exploitability bound is twice the error-bound from \citet{pineau2003point} for infinite-horizon partially observable Markov decision processes.
\section{Empirical Evaluation}

We evaluate our method on a suite of established benchmarks for simultaneous-move partially observable stochastic games (POSGs): Adversarial Tiger, Competitive Tiger, Recycling, Mabc, Matching Pennies, and three Pursuit-Evasion variants. These benchmarks are among the most challenging in the POSG literature; see \url{http://masplan.org/} for detailed descriptions. Several were originally common-payoff problems and have been adapted to the competitive setting by reversing the objective for player \(\textcolor{sthlmRed}{2}\). % Additional implementation details are provided in Appendix~A
For each benchmark, we compare three variants of our PBVI algorithm: PBVI$_1$ (baseline, without pruning), PBVI$_2$, and PBVI$_3$ (both applying the bounded pruning scheme from Section~\ref{sec:pbvi}). We benchmark against the HSVI implementation of \citet{DelBufDibSaf-DGAA-23} and the CFR+ algorithm of \citet{Tammelin14}. Table~\ref{table:main_results:value_and_exploitability} summarises results for the most computationally demanding horizons, reporting the final value reached by each algorithm and the exploitability of the resulting focal policy. Full results for all tested horizons \(\ell \in \{2, 3, 4, 5, 7, 10\}\) are deferred to Table~3. To foster reproducibility, the full codebase,
including configuration files and experimental scripts, is available at
\citet{escudie_dibangoye_2025_code}. % in Appendix~A
%
%Reported values should be interpreted through the dual lens of value and robustness: higher value indicates stronger performance for the maximising agent, while lower exploitability reflects robustness to best-response deviations by the opponent.
%

\paragraph{Exploitability \& \(\varepsilon\)-Optimal Values:} 
Table~\ref{table:main_results:value_and_exploitability} presents the algorithms achieving the lowest exploitability in \textcolor{sthlmRed}{\bf magenta}. Across all benchmarks, there always exists at least one PBVI variant that significantly outperforms both HSVI and CFR+. Among these, PBVI$_3$ emerges as the most reliable, consistently yielding the lowest exploitability except for horizons \(\ell \in \{4,5\}\) on the Competitive Tiger benchmark and \(\ell=7\) on Mabc. Nonetheless, PBVI$_1$ and PBVI$_2$ also perform favourably, outperforming both baselines in nearly every instance. Notably, CFR+ fails to scale in most cases, running out of memory on many benchmarks, while HSVI frequently exceeds the time limit. This behaviour reflects fundamental limitations of both methods: HSVI's backup operator grows exponentially with the horizon \(\ell\), unless the problem exhibits strong structure—explaining its poor scalability beyond small-horizon settings, with Matching Pennies being a notable exception; CFR+, in contrast, is sensitive to the size of the history space, which explains why it performs well on compact extensive-form games such as Matching Pennies at \(\ell=10\), but fails on shallow instances like Competitive Tiger at \(\ell=4\), where the set of local histories is already large. Regarding the values achieved, we observe only minor differences between PBVI variants, with most converging to nearly identical solutions. Discrepancies occur primarily on the Competitive Tiger and Pursuit-Evasion benchmarks, where the variants exhibit slightly divergent convergence behaviours.

\begin{table}[!ht]
\centering
\caption{Snapshot of empirical results. Games are ordered by increasing planning horizon \(\ell\), and within each horizon by ascending number of local histories. For each setting, we report the value \(v(b)\) and exploitability \(\varepsilon\). \textsc{oot} indicates a timeout (2-hour limit), \textsc{oom} denotes out-of-memory runs, and ‘--’ means the exploitability budget was exceeded. Best results are highlighted in \highest{magenta}.}
\label{table:main_results:value_and_exploitability}
\scalebox{0.6}{%
\begin{tabular}{@{}l rr rr rr rr rr@{}}
\toprule
\textbf{Game ($\ell$)} & \multicolumn{2}{c}{\textbf{PBVI$_1$}} & \multicolumn{2}{c}{\textbf{PBVI$_2$}} & \multicolumn{2}{c}{\textbf{PBVI$_3$}} & \multicolumn{2}{c}{\textbf{HSVI} \citep{DelBufDibSaf-DGAA-23}} & \multicolumn{2}{c}{\textbf{CFR+} \citep{Tammelin14}} \\
& \(v(b)\) & \(\varepsilon\) & \(v(b)\) & \(\varepsilon\) & \(v(b)\) & \(\varepsilon\) & \(v(b)\) & \(\varepsilon\) & \(v(b)\) & \(\varepsilon\) \\
\cmidrule(lr){2-3}\cmidrule(lr){4-5}\cmidrule(lr){6-7}\cmidrule(lr){8-9}\cmidrule(lr){10-11}

\myrowcolour pursuit-evasion-2x2(2) & \highest{0.00} & \highest{0.00} & \highest{0.00} & \highest{0.00} & \highest{0.00} & \highest{0.00} & \highest{0.00} & \highest{0.00} & \highest{0.00} & 0.08 \\
pursuit-evasion-3x3x2(2) & \highest{0.00} & \highest{0.00} & \highest{0.00} & \highest{0.00} & \highest{0.00} & \highest{0.00} & \highest{0.00} & \highest{0.00} & \highest{0.00} & 0.01 \\
\myrowcolour pursuit-evasion-3x3x1(2) & \highest{0.00} & \highest{0.00} & \highest{0.00} & \highest{0.00} & \highest{0.00} & \highest{0.00} & \highest{0.00} & \highest{0.00} & \highest{0.00} & 0.01 \\

pursuit-evasion-2x2(3) & \highest{0.00} & \highest{0.00} & \highest{0.00} & \highest{0.00} & \highest{0.00} & \highest{0.00} & \highest{0.00} & \highest{0.00} & \highest{0.00} & 0.06 \\
\myrowcolour pursuit-evasion-3x3x2(3) & -22.00 & 0.92 & -41.00 & 3.86 & \highest{0.00} & \highest{0.00} & \textsc{oot} &  & \highest{0.00} & 0.10 \\
pursuit-evasion-3x3x1(3) & \highest{0.00} & \highest{0.00} & \highest{0.00} & \highest{0.00} & \highest{0.00} & \highest{0.00} & \textsc{oot} &   & \highest{0.00} & 0.17 \\

\myrowcolour  matching-pennies(4)  & \highest{0.60} & 0.04 & \highest{0.60} & 0.08 & \highest{0.60} & \highest{0.01} & 0.60 & 0.01 & \highest{0.60} & 0.01 \\
adversarial-tiger(4)  & \highest{-0.76} & \highest{0.00} & \highest{-0.76} & \highest{0.00} & \highest{-0.76} & \highest{0.00} & \textsc{oot} &  & -0.75 & 0.01 \\
\myrowcolour mabc(4)  & \highest{0.11} & \highest{0.00} & \highest{0.11} & 0.00 & \highest{0.11} & \highest{0.00} & \textsc{oot} &  & \highest{0.11} & 0.00 \\
recycling(4)  & \highest{0.36} & 0.01 & \highest{0.36} & 0.01 & \highest{0.36} & \highest{0.00} & \textsc{oot} &  & \highest{0.36} & 0.03 \\
\myrowcolour competitive-tiger(4)  & \highest{-0.03} & 0.03 & -0.07 & \highest{0.00} & -0.05 & 0.03 & \textsc{oot} &  & \textsc{oom} &  \\
 pursuit-evasion-2x2(4) & -22.00 & 37.00 & -26.00 & 48.00 & -28.00 & \highest{1.00} & \textsc{oot} &  & \textsc{oom} &  \\
\myrowcolour pursuit-evasion-3x3x2(4) & \highest{0.00} & 95.00 & -6.00 & 9.00 & \highest{0.00} & \highest{6.00} & \textsc{oot} &  & \textsc{oom} &  \\
pursuit-evasion-3x3x1(4) & \highest{0.00} & 6.00 & -6.00 & 6.00 & \highest{0.00} & \highest{0.00} & \textsc{oot} &  & \textsc{oom} &  \\

\myrowcolour matching-pennies(5)  & \highest{0.80} & 0.01 & 0.78 & 0.01 & 0.78 & \highest{0.00} & \highest{0.80} & 0.01 & \highest{0.80} & 0.01 \\
adversarial-tiger(5)  & \highest{-0.95} & 0.04 & \highest{-0.95} & 0.01 & \highest{-0.95} & \highest{0.00} & \textsc{oot} &  & \highest{-0.95} & 0.03 \\
\myrowcolour mabc(5)  & \highest{0.12} & \highest{0.00} & \highest{0.12} & 0.01 & \highest{0.12} & \highest{0.00} & \textsc{oot} &  & \highest{0.12} & 0.01 \\
recycling(5)  & \highest{0.40} & 0.01 & \highest{0.40} & 0.05 & \highest{0.40} & \highest{0.01} & \textsc{oot} &  & \textsc{oom} &  \\
\myrowcolour competitive-tiger(5)  & \highest{-0.06} & 0.01 & -0.08 & \highest{0.00} & -0.10 & 0.02 & \textsc{oot} &  & \textsc{oom} &  \\

matching-pennies(7)  & \highest{1.20} & 0.05 & \highest{1.20} & 0.04 & 1.19 & \highest{0.04} & \textsc{oot} &   & \highest{1.20} & 0.06 \\
\myrowcolour adversarial-tiger(7)  & \highest{-1.40} & 0.00 & \highest{-1.40} & 0.09 & \highest{-1.40} & \highest{0.00} & \textsc{oot} &  & \textsc{oom} &  \\
mabc(7)  & \highest{0.14} & \highest{0.00} & \highest{0.14} & 0.00 & \highest{0.14} & 0.00 & \textsc{oot} &  & \textsc{oom} &  \\
\myrowcolour recycling(7)  & \highest{0.51} & 0.07 & 0.50 & 0.04 & 0.49 & \highest{0.02} & \textsc{oot} &  & \textsc{oom} &   \\
competitive-tiger(7)  & \highest{-0.15} & 0.03 & -0.17 & 0.04 & \highest{-0.15} & \highest{0.02} & \textsc{oot} &  & \textsc{oom} &  \\

\myrowcolour matching-pennies(10) & \highest{1.80} & -- & \highest{1.80} & -- & \highest{1.80} & -- & \textsc{oot} &  & \highest{1.80} & 0.06 \\adversarial-tiger(10) & \highest{-2.00} & -- & \highest{-2.00} & -- & \highest{-2.00} & -- & \textsc{oot} &  & \textsc{oom} &  \\
\myrowcolour mabc(10) & \highest{0.17} & -- & 0.18 & -- & 0.20 & -- & \textsc{oot} &  & \textsc{oom} &  \\
recycling(10) & \highest{0.60} & -- & \highest{0.60} & -- & \highest{0.60} & -- & \textsc{oot} &  & \textsc{oom} &  \\
\myrowcolour competitive-tiger(10) & \textsc{oot} &  & -0.29 & -- & \highest{-0.20} & -- & \textsc{oot} &  & \textsc{oom} &  \\

\bottomrule
\end{tabular}
}
\end{table}
\begin{figure}[htbp]
  \centering
  \begin{subfigure}[b]{0.22\textwidth}
    \centering
    \begin{tikzpicture}
      \begin{axis}[
        title={Adversarial Tiger},
        title style={yshift=-2ex, fill=white},
        xlabel=Iterations,
        ylabel=Exploitability,
        width=1.15\linewidth,
        height=1.05\linewidth,
        xmin=1, xmax=5,
        xtick={1, 3, 5},
        ymin=0, ymax=1,
        ytick={0, 0.5, 1},
        tick label style={font=\tiny},
        ylabel style={yshift=-6pt, font=\tiny},
        xlabel style={font=\tiny},
        grid=both,
        style={very thick, font=\tiny},
        legend style={draw=none}
      ]
        \addplot+[smooth,mark=o,very thick,sthlmYellow] table [col sep=comma, x=iter, y=exp_value] {./paper/results/PBVI_new_results_with_exp/H7/adversarialtiger/adversarialtiger_no_pruning_H7.csv};
        \addplot+[smooth,mark=x,very thick,sthlmBlue] table [col sep=comma, x=iter, y=exp_value] {./paper/results/PBVI_new_results_with_exp/H7/adversarialtiger/adversarialtiger_pruning_H7.csv};
        \addplot+[smooth,mark=triangle*,very thick,sthlmRed] table [col sep=comma, x=iter, y=exp_value] {./paper/results/PBVI_new_results_with_exp/H7/adversarialtiger/adversarialtiger_pts_pruning_H7.csv};
      \end{axis}
    \end{tikzpicture}
  \end{subfigure}
  \hfill
  \begin{subfigure}[b]{0.22\textwidth}
    \centering
    \begin{tikzpicture}
      \begin{axis}[
        title={Mabc},
        title style={yshift=-2ex, fill=white},
        xlabel=Iterations,
        ylabel=Exploitability,
        width=1.15\linewidth,
        height=1.05\linewidth,
        xmin=1, xmax=5,
        xtick={1, 3, 5},
        ymin=0, ymax=0.2,
        ytick={0, 0.1, 0.2},
        tick label style={font=\tiny},
        ylabel style={yshift=-6pt, font=\tiny},
        xlabel style={font=\tiny},
        grid=both,
        style={very thick, font=\tiny},
        legend style={draw=none}
      ]
        \addplot+[smooth,mark=o,very thick,sthlmYellow] table [col sep=comma, x=iter, y=exp_value] {./paper/results/PBVI_new_results_with_exp/H7/mabc/mabc_no_pruning_H7.csv};
        \addplot+[smooth,mark=x,very thick,sthlmBlue] table [col sep=comma, x=iter, y=exp_value] {./paper/results/PBVI_new_results_with_exp/H7/mabc/mabc_pruning_H7.csv};
        \addplot+[smooth,mark=triangle*,very thick,sthlmRed] table [col sep=comma, x=iter, y=exp_value] {./paper/results/PBVI_new_results_with_exp/H7/mabc/mabc_pts_pruning_H7.csv};
      \end{axis}
    \end{tikzpicture}
  \end{subfigure}
  \hfill
  \begin{subfigure}[b]{0.22\textwidth}
    \centering
    \begin{tikzpicture}
      \begin{axis}[
        title={Adversarial Tiger},
        title style={yshift=-2ex, fill=white},
        xlabel=Time (s),
        ylabel=Exploitability,
        width=1.15\linewidth,
        height=1.05\linewidth,
        xmin=0, xmax=500,
        xtick={0, 250, 500},
        ymin=0, ymax=0.1,
        ytick={0, 0.05, 0.1},
        tick label style={font=\tiny},
        ylabel style={yshift=-6pt, font=\tiny},
        xlabel style={font=\tiny},
        grid=both,
        scaled y ticks=false,
        yticklabel style={
          /pgf/number format/fixed,
          /pgf/number format/fixed zerofill,
          /pgf/number format/precision=2,
          font=\tiny
        },
        style={very thick, font=\tiny},
        legend style={draw=none}
      ]
        \addplot+[const plot,mark=none,very thick,sthlmYellow] table [col sep=comma, x=cumul_time, y=exp_value] {./paper/results/PBVI_new_results_with_exp/H5/adversarialtiger/adversarialtiger_no_pruning_H5.csv};
        \addplot+[smooth,mark=none,very thick,sthlmGreen] table [col sep=comma, x=time, y=exploitability] {./paper/results/CFRpp/H5/adversarialtiger/adversarialtiger_H5.csv};
      \end{axis}
    \end{tikzpicture}
  \end{subfigure}
  \hfill
  \begin{subfigure}[b]{0.22\textwidth}
    \centering
    \begin{tikzpicture}
      \begin{axis}[
        title={Mabc},
        title style={yshift=-2ex, fill=white},
        xlabel=PBVI$_k$,
        ylabel=Time (min),
        width=1.15\linewidth,
        height=1.05\linewidth,
        ybar,
        bar width=0.12cm,
        ymin=0, ymax=25,
        xtick={1,2,3},
        enlarge x limits=0.3,
        tick label style={font=\tiny},
        ylabel style={yshift=-6pt, font=\tiny},
        xlabel style={font=\tiny},
        xtick style={draw=none},
        ytick style={draw=none},
        style={very thick, font=\tiny},
        legend style={draw=none}
      ]
        \addplot[fill=sthlmYellow] coordinates {(1,22.5)};
        \addplot[fill=sthlmBlue] coordinates {(2,3.33)};
        \addplot[fill=sthlmRed] coordinates {(3,1.73)};
      \end{axis}
    \end{tikzpicture}
  \end{subfigure}
  \hfill
  \begin{subfigure}[b]{0.07\textwidth}
    \begin{tikzpicture}
      \begin{axis}[
        hide axis,
        xmin=0, xmax=1,
        ymin=0, ymax=1,
        legend style={
          draw=black, fill=white, at={(0.5,0.5)},
          anchor=west, font=\tiny, legend cell align=left
        }
      ]
        \addlegendimage{smooth,mark=o,very thick,sthlmYellow}
        \addlegendentry{PBVI$_1$}
        \addlegendimage{smooth,mark=x,very thick,sthlmBlue}
        \addlegendentry{PBVI$_2$}
        \addlegendimage{smooth,mark=triangle*,very thick,sthlmRed}
        \addlegendentry{PBVI$_3$}
        \addlegendimage{smooth,mark=none,very thick,sthlmGreen}
        \addlegendentry{CFR+}
      \end{axis}
    \end{tikzpicture}
  \end{subfigure}

  \caption{Exploitability of PBVI$_k$ across iterations and runtime on Adversarial Tiger and Mabc (\(\ell=5\)), with CFR+ for comparison. Rightmost plot shows time-to-convergence for PBVI$_k$ on Mabc.}
  \label{fig:main_plots}
\end{figure}
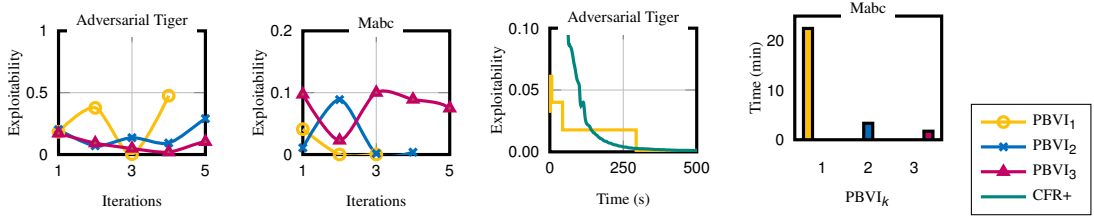

\paragraph{Additional Results \& Insights:} 
We now discuss complementary insights that characterise the empirical behaviour of our PBVI variants, see Apprendix \ref{sec:appendix:additional:plots}. In terms of scalability with respect to the planning horizon~\(\ell\), many PBVI versions remain tractable across all tested instances, including large horizons up to \(\ell=10\), without running \textsc{oot} or \textsc{oom}. This is enabled by the pruning heuristics introduced in Section~\ref{sec:pbvi}, which become essential as the naive PBVI$_1$ variant grows increasingly expensive. The bounded pruning mechanisms in PBVI$_2$ and PBVI$_3$ allow deeper backups and longer runs, as illustrated in the first two panels of Figure~\ref{fig:main_plots} for the Adversarial Tiger and Mabc benchmarks with \(\ell=5\).
In addition, our PBVI variants converge significantly faster than HSVI and CFR+ across most benchmarks. The third panel of Figure~\ref{fig:main_plots} shows PBVI$_1$ achieving a low-exploitability solution substantially earlier than CFR+ on Adversarial Tiger with \(\ell=5\). For further convergence statistics and runtime comparisons, we refer the reader to Table~3 in the supplemental material.
We conclude by highlighting the trade-off introduced by pruning. Although it weakens theoretical guarantees, the empirical gains in efficiency are substantial. As shown in the final panel of Figure~\ref{fig:main_plots}, PBVI$_2$ and PBVI$_3$ solve Mabc with \(\ell=5\) in a fraction of the time required by PBVI$_1$, while maintaining comparably low exploitability as confirmed in Table~3.

\section{Conclusion}

We introduced the first principled and lossless reduction from zs-POSGs to transition-independent zs-SGs. While transition independence has been applied in common-payoff POSGs~\citep{becker2003transition,becker2004solving,dibangoye2012scaling,dibangoye2013producing,dibangoye2014exploiting}, this work is the first to extend it to adversarial games, providing both theoretical guarantees and scalable planning methods. By exploiting a hierarchy of planners and the uniform continuity of the value function, we developed a point-based value iteration algorithm that operates over a structured sample of marginal and informed occupancy states. The method supports value-function improvement via linear programming, admits an explicit exploitability bound, and scales to challenging benchmarks previously beyond reach for dynamic programming theory and algorithms. Our results demonstrate both the theoretical soundness and practical viability of planning with occupancy-set dependent policies in adversarial settings with imperfect information. Together, these contributions establish a general pathway for transferring dynamic programming theory and algorithms from stochastic games to partially observable settings, and offer a promising foundation for unifying the solution of cooperative, competitive, and mixed-motive POSGs under a common framework. Such unification is groundbreaking because it dissolves the long-standing divide between algorithmic principles across multi-agent problems, enabling a shared planning infrastructure that can adapt flexibly to diverse strategic interactions and uncertainty structures.

\textbf{Limitations.} While the linear programs generated by our method are significantly smaller than those used in HSVI-based approaches, their size still grows with the planning horizon in the worst case, potentially limiting scalability. Future work could explore compact representations or local update schemes, such as regret minimisation, to mitigate this bottleneck.

\section*{Acknowledgments}
This work was supported by the French National Research Agency (ANR) under projects ANR-19-CE23-0018 (Planning and Learning to Act in Systems of Multiple Agents), ANR-19-CE23-0006 (Data and Prior: Machine Learning and Control), and ANR-21-CE23-0016 (Multi-Agent Trust Decision Process for the Internet of Things). The authors also acknowledge financial support for Erwan C. Escudie through a PhD scholarship from the University of Groningen.

%%%%%%%%%%%%%%%%%%%%%%%%%%%%%%%%%%%%%%%%%%%%%%%%%%%%%%%%%%%%
\bibliographystyle{plainnat}
\bibliography{neurips_2025}

%%%%%%%%%%%%%%%%%%%%%%%%%%%%%%%%%%%%%%%%%%%%%%%%%%%%%%%%%%%%

% \section{Technical Appendices and Supplementary Material}
% Technical appendices with additional results, figures, graphs and proofs may be submitted with the paper submission before the full submission deadline (see above), or as a separate PDF in the ZIP file below before the supplementary material deadline. There is no page limit for the technical appendices.
% \appendix

% \input{paper/appendix}

%%%%%%%%%%%%%%%%%%%%%%%%%%%%%%%%%%%%%%%%%%%%%%%%%%%%%%%%%%%%

\newpage

\newpage
\appendix 
%\section{\(\varepsilon\)-Optimally Solving Two-Player Zero-Sum POSGs Supplementary Material}
\section{Preliminaries}

\begin{figure}[!ht]
\centering
\begin{tikzpicture}[->,>=triangle 45,shorten >=2pt,auto,node distance=4cm,semithick]
  \tikzstyle{every state}=[draw=black,text=black,,inner color= white,outer color= white,draw= black,text=black, drop shadow]
 \tikzstyle{place}=[circle,thick,draw=sthlmBlue,fill=blue!20,minimum size=6mm]
 \tikzstyle{placesplit}=[circle split,thick,draw=sthlmBlue,fill=sthlmBlue!20,minimum size=6mm]
  \tikzstyle{red place}=[place,draw=sthlmRed,fill=sthlmRed!20]
  \tikzstyle{green place}=[place,draw=sthlmGreen,fill=sthlmGreen!20]
  \tikzstyle{green placesplit}=[placesplit,draw=sthlmGreen,fill=sthlmGreen!20]
 
  \draw[rounded corners, sthlmRed, fill=sthlmRed!10] (-2,-1) rectangle (12,1);
  
  \node[fill=white,text=black,draw=none,scale=.7]  at (-.75,1.35) {$p^{az}_{s\prime}$};
 \node[fill=white,text=black,draw=none,scale=.7]  at (-.75,-1.35) {$p^{az}_{s\prime}$};
 \node[fill=white,text=black,draw=none,scale=.7]  at (3.5,-1.35) {$p^{az}_{s\prime}$};
 \node[fill=white,text=black,draw=none,scale=.7]  at (3.5,1.35) {$p^{az}_{s\prime}$};
 \node[fill=white,text=black,draw=none,scale=.7]  at (7.5,-1.35) {$p^{az}_{s\prime}$};
 \node[fill=white,text=black,draw=none,scale=.7]  at (7.5,1.35) {$p^{az}_{s\prime}$};

 \node[initial,state,place] (S0)                    {State};
  \node[state,place]         (S1) [right of=S0] {State};
  \node[state,place]         (S2) [ right of=S1] {State};
  \node[]         (S3) [ right of=S2] {};

  \node[state,green placesplit, scale=.61]         (O0) [below  of=S0,node distance=3.75cm] { Observation$_1$ \nodepart{lower} $+$Payoff};
  \node[state,green placesplit, scale=.61]         (O1) [below  of=S1,node distance=3.75cm] { Observation$_1$ \nodepart{lower} $+$Payoff};
  \node[state,green placesplit, scale=.61]         (O2) [below  of=S2,node distance=3.75cm] { Observation$_1$ \nodepart{lower} $+$Payoff};
  \node[state,green placesplit, scale=.61]         (O3) [above  of=S0,node distance=3.75cm] { Observation$_2$ \nodepart{lower} $-$Payoff};
  \node[state,green placesplit, scale=.61]         (O4) [above  of=S1,node distance=3.75cm] { Observation$_2$ \nodepart{lower} $-$Payoff};
  \node[state,green placesplit, scale=.61]         (O5) [above  of=S2,node distance=3.75cm] { Observation$_2$ \nodepart{lower} $-$Payoff};

 \node[state,red place]         (A0) [below right of=O0,node distance=2cm] {\footnotesize Action$_1$};
 \node[state,red place]         (A1) [right of=A0]       {\footnotesize Action$_1$};
 \node[]         (A2) [right of=A1]       {};
 \node[state,red place]         (A3) [above right of=O3,node distance=2cm] {\footnotesize Action$_2$};
 \node[state,red place]         (A4) [right of=A3]       {\footnotesize Action$_2$};
 \node[]         (A5) [right of=A4]       {};

 \node[]         (Time) at (-1,-5) {Time};
 \node[]         (T0) [below  of=A0,node distance=1.25cm] {$t$};
 \node[]         (T1) [below  of=A1,node distance=1.25cm] {$t$+$1$};
 \node[]         (T2) [below  of=A2,node distance=1.25cm] {$t$+$2$};

 \node[]         (N0) at (2.8,4.15) {};
 \node[]         (N1) at (2.8,-4.75) {};
 \draw[-,dotted] (N0)-- (N1);
 \draw[-,dotted] (-2,-1.1)--(12,-1.1);
 \draw[-,dotted] (-2,1.1)--(12,1.1);
 \node[]         (N2) at (6.8,4.15) {};
 \node[]         (N3) at (6.8,-4.75) {};
 \draw[-,dotted] (N2)--(N3);
 \node[fill=sthlmRed!10,text=black,draw=none,scale=.7]  at (11,-.75) {Hidden};
 \node[fill=white,text=black,draw=none,scale=.7]  at (10.75,-1.35) {player $1$'s viewpoint};
 \node[fill=white,text=black,draw=none,scale=.7]  at (10.75,1.35) {player $2$'s viewpoint};
  
  \path (S0) edge              node[midway, fill=sthlmRed!10,text=black,draw=none,scale=.7, above=-5pt]  {$p^a_{ss\prime}$} (S1)
            edge    node {} (O0)
            edge    node {} (O3)
        (S1) edge node[midway, fill=sthlmRed!10,text=black,draw=none,scale=.7, above=-5pt] {$p^a_{ss\prime}$} (S2)
            edge    node {} (O1)
            edge    node {} (O4)
        (S2) edge              node[midway, fill=sthlmRed!10,text=black,draw=none,scale=.7, above=-5pt]  {$p^a_{ss\prime}$} (S3)
            edge    node {} (O2)
            edge    node {} (O5)
        (A0) edge [out=90, in=-180]  node {} (O1)
            edge  [out=90, in=-155] node {} (S1)
        (A1) edge [out=90, in=-180]  node {} (O2)
            edge  [out=90, in=-155] node {} (S2)
        (A3) edge [out=-90, in=-180]  node {} (O4)
            edge  [out=-90, in=-205] node {} (S1)
        (A4) edge [out=-90, in=-180]  node {} (O5)
            edge  [out=-90, in=-205] node {} (S2);
\end{tikzpicture}
\caption{
A graphical model of a two-player zero-sum partially observable stochastic game. Each triple \(z \doteq (z_1, z_2, w)\) comprises private and public observations. The diagram illustrates an influence process over three stages: central nodes represent the hidden states \((s_t)\); the top and bottom rows show the private observations and actions of players~2 and~1, respectively. Observation nodes also include the local payoff: “$+$” denotes a gain for player~1, and “$-$” a loss for player~2.
Directed edges indicate probabilistic dependencies: actions influence transitions and observations, while observations inform future actions. The shaded region highlights the hidden environment state from each player’s viewpoint, emphasising the decentralised and asymmetric information structure.
This diagram captures the sequential, partially observable, and adversarial nature of zs-POSGs. The underlying dynamics decompose into two functions, the state transition matrices \(\{p^{a}_{ss'}\}\) and the observation matrices \(\{p^{az}_{s'}\}\), where \(p(s',z|s,a) = p^{a}_{ss'}\cdot p^{az}_{s'}\).
}
\label{graphical:model:zs:posg}
\end{figure}
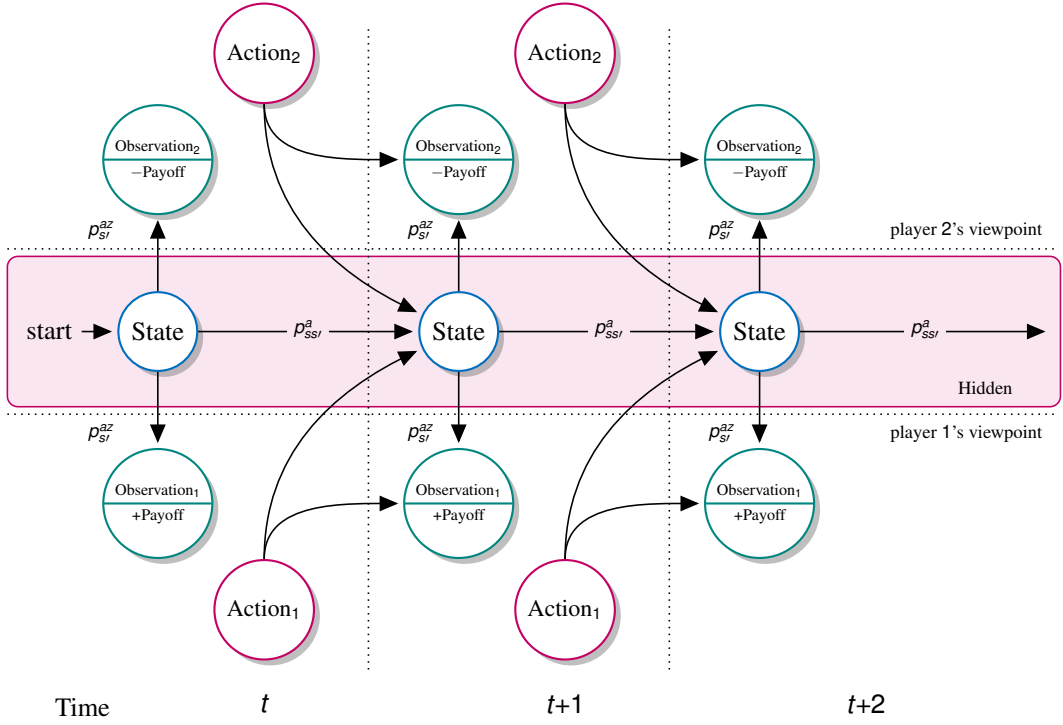

\section{Reducing zs-POSGs to Transition-Independent zs-SGs}

\subsection{Proof of Lemma \ref{lem:behavioral:equivalence}}
\label{prooflembehavioralequivalence}
\lembehavioralequivalence*
\begin{proof}
The proof shows that any occupancy-set dependent policy induces a valid behavioural strategy, thereby enabling the application of the minimax theorem. Let \(\psi_{\textcolor{sthlmRed}{i}} \colon x_{\textcolor{sthlmRed}{i},t} \mapsto d_{\textcolor{sthlmRed}{i},t}\) be an occupancy-set dependent policy for player~\(\textcolor{sthlmRed}{i}\), assigning a decision rule at each occupancy set \(x_{\textcolor{sthlmRed}{i},t}\). This induces a behavioural strategy \(\pi_{\textcolor{sthlmRed}{i}}\) in the extensive-form game associated with the zs-POSG.
The construction is recursive. Let \(x_{\textcolor{sthlmRed}{i},0}\) denote the initial occupancy set induced by the prior \(\{b\}\). At each stage \(t = 0, \ldots, \ell\), define:
\(
d_{\textcolor{sthlmRed}{i},t} \doteq \psi_{\textcolor{sthlmRed}{i}}(x_{\textcolor{sthlmRed}{i},t}),
\)
and for all \((h_{\textcolor{sthlmRed}{i},t}, h_{\textcolor{sthlmRed}{\text{pub}},t}) \in \mathcal{H}_{\textcolor{sthlmRed}{i},t} \times \mathcal{H}_{\textcolor{sthlmRed}{\text{pub}},t}\), define:
\(
\pi_{\textcolor{sthlmRed}{i}}(\cdot | h_{\textcolor{sthlmRed}{i},t}, h_{\textcolor{sthlmRed}{\text{pub}},t}) 
\doteq d_{\textcolor{sthlmRed}{i},t}(\cdot | h_{\textcolor{sthlmRed}{i},t}, h_{\textcolor{sthlmRed}{\text{pub}},t}).
\)
The next occupancy set is obtained by the deterministic transition \(x_{\textcolor{sthlmRed}{i},t+1} = \tau_{\textcolor{sthlmRed}{i}}(x_{\textcolor{sthlmRed}{i},t}, d_{\textcolor{sthlmRed}{i},t})\).
This construction yields a behavioural strategy \(\pi_{\textcolor{sthlmRed}{i}}\) that is fully defined across all stages, consistent with the player’s local information, and realisable in the extensive-form game associated with the zs-POSG. Since the game has perfect recall, the equivalence between behavioural and mixed strategies holds~\citep{kuhn1953}, and thus the minimax theorem~\citep{Neumann1928, Sion1958} applies to the reduced game \(\mathcal{M}'\) by turning it into a normal-form game where actions are pure strategies--also known as deterministic history-dependent policies \citep{DelBufDibSaf-DGAA-23}. This concludes the proof.
\end{proof}

\subsection{Proof of Theorem \ref{thm:reduced:game}}
\thmreducedgame*
\begin{proof}
The proof verifies that the reduced game \(\mathcal{M}'\) satisfies the three criteria of a lossless reduction from the original zs-POSG \(\mathcal{M}\). Let \((\psi_{\textcolor{sthlmRed}{1}},\psi_{\textcolor{sthlmRed}{2}})\) be a joint policy in the reduced game. The value of \(\mathcal{M}'\) under this policy, starting from the initial uninformed occupancy state \(x_0 = b\), is given by:
\begin{align}
v'_{\psi_{\textcolor{sthlmRed}{1}},\psi_{\textcolor{sthlmRed}{2}}}(b)
&\doteq \textstyle \sum_{t=0}^{\ell} \gamma^t \cdot \rho(x_t, d_{\textcolor{sthlmRed}{1},t}, d_{\textcolor{sthlmRed}{2},t}) 
| x_t = \varphi(x_{\textcolor{sthlmRed}{1},t}, x_{\textcolor{sthlmRed}{2},t}), d_{\textcolor{sthlmRed}{i},t} = \psi_{\textcolor{sthlmRed}{i}}(x_{\textcolor{sthlmRed}{i},t}),
x_{\textcolor{sthlmRed}{i},t} = \tau_{\textcolor{sthlmRed}{i}}(x_{\textcolor{sthlmRed}{i},t-1},d_{\textcolor{sthlmRed}{i},t-1}) \nonumber \\
&= \textstyle \sum_{t=0}^{\ell} \gamma^t \cdot \mathbb{E}_{(s_t,a_{\textcolor{sthlmRed}{1},t},a_{\textcolor{sthlmRed}{2},t})\sim\Pr(\cdot| x_t, d_{\textcolor{sthlmRed}{1},t}, d_{\textcolor{sthlmRed}{2},t})}
[ r(s_t,a_{\textcolor{sthlmRed}{1},t},a_{\textcolor{sthlmRed}{2},t}) ]
\label{eq:proof:lossless:reduction:0} \\
&= \textstyle \sum_{t=0}^{\ell} \mathbb{E}_{(s_0,\ldots,s_\ell,a_{\textcolor{sthlmRed}{1},0:\ell},a_{\textcolor{sthlmRed}{2},0:\ell})
\sim \Pr(\cdot | b, d_{\textcolor{sthlmRed}{1},0:\ell}, d_{\textcolor{sthlmRed}{2},0:\ell})}
[\gamma^t \cdot r(s_t,a_{\textcolor{sthlmRed}{1},t},a_{\textcolor{sthlmRed}{2},t})]
\label{eq:proof:lossless:reduction:1} \\
&= \textstyle \sum_{t=0}^{\ell} \mathbb{E}_{(s_0,\ldots,s_\ell,a_{\textcolor{sthlmRed}{1},0:\ell},a_{\textcolor{sthlmRed}{2},0:\ell})
\sim \Pr(\cdot | b, \pi_{\textcolor{sthlmRed}{1}}, \pi_{\textcolor{sthlmRed}{2}})}
[\gamma^t \cdot r(s_t,a_{\textcolor{sthlmRed}{1},t},a_{\textcolor{sthlmRed}{2},t})]
\doteq v_{\pi_{\textcolor{sthlmRed}{1}},\pi_{\textcolor{sthlmRed}{2}}}(b)
\label{eq:proof:lossless:reduction:2}
\end{align}
Equations \eqref{eq:proof:lossless:reduction:0}--\eqref{eq:proof:lossless:reduction:2} follow from the definition of \(\rho\), linearity of expectation, and the mapping \(\pi_{\textcolor{sthlmRed}{i}} = (d_{\textcolor{sthlmRed}{i},0}, \ldots, d_{\textcolor{sthlmRed}{i},\ell})\) induced by \(\psi_{\textcolor{sthlmRed}{i}}\). This establishes value preservation.

If we let \((\psi^*_{\textcolor{sthlmRed}{1}},\psi^*_{\textcolor{sthlmRed}{2}})\) be an optimal strategy in \(\mathcal{M}'\), then for any \(\psi_{\textcolor{sthlmRed}{i}}\in \Psi_{\textcolor{sthlmRed}{i}}\),
\begin{align}
v'_{\psi_{\textcolor{sthlmRed}{1}}, \psi_{\textcolor{sthlmRed}{2}}^*}(b) \le v'_{\psi_{\textcolor{sthlmRed}{1}}^*, \psi_{\textcolor{sthlmRed}{2}}^*}(b) \le v'_{\psi_{\textcolor{sthlmRed}{1}}^*, \psi_{\textcolor{sthlmRed}{2}}}(b).
\label{eq:equilibrium:reduced}
\end{align}
Let \(\psi_{\textcolor{sthlmRed}{i}}(x_{\textcolor{sthlmRed}{i},t}) = d_{\textcolor{sthlmRed}{i},t}\), with \(d_{\textcolor{sthlmRed}{i},t}(\cdot | h_{\textcolor{sthlmRed}{i},t}) = \pi_{\textcolor{sthlmRed}{i}}(\cdot | h_{\textcolor{sthlmRed}{i},t})\), where \(x_{\textcolor{sthlmRed}{i},t}\) is the occupancy set summarising \((d_{\textcolor{sthlmRed}{i},0},\ldots,d_{\textcolor{sthlmRed}{i},t-1})\). Then, by \eqref{eq:proof:lossless:reduction:2} and \eqref{eq:equilibrium:reduced}, for any \(\pi_{\textcolor{sthlmRed}{i}}\in\Pi_{\textcolor{sthlmRed}{i}}\),
\begin{align}
v_{\pi_{\textcolor{sthlmRed}{1}}, \pi_{\textcolor{sthlmRed}{2}}^*}(b) \le v_{\pi_{\textcolor{sthlmRed}{1}}^*, \pi_{\textcolor{sthlmRed}{2}}^*}(b) \le v_{\pi_{\textcolor{sthlmRed}{1}}^*, \pi_{\textcolor{sthlmRed}{2}}}(b),
\label{eq:equilibrium:original}
\end{align}
which confirms equilibrium correspondence.

Conversely, any joint policy \((\pi_{\textcolor{sthlmRed}{1}}, \pi_{\textcolor{sthlmRed}{2}})\) in the original game induces decision-rule sequences \((d_{\textcolor{sthlmRed}{i},0}, \ldots, d_{\textcolor{sthlmRed}{i},\ell})\), which define an occupancy-set dependent policy \(\psi_{\textcolor{sthlmRed}{i}}\) such that \(\psi_{\textcolor{sthlmRed}{i}}(x_{\textcolor{sthlmRed}{i},t}) = d_{\textcolor{sthlmRed}{i},t}\). By the same construction, \(v'_{\psi_{\textcolor{sthlmRed}{1}},\psi_{\textcolor{sthlmRed}{2}}}(b) = v_{\pi_{\textcolor{sthlmRed}{1}},\pi_{\textcolor{sthlmRed}{2}}}(b)\), which ensures that all equilibria and values in the original game are preserved in the reduced game.

Finally, the definition of admissible decision rules \(\widetilde{\mathcal{D}}_{\textcolor{sthlmRed}{i},t}\) for \(\mathcal{M}'\) at each stage \(t\) and player \(\textcolor{sthlmRed}{i}\) remains unchanged with respect to admissible decision rules for the original game  \(\mathcal{M}\) :
\begin{align}
\widetilde{\mathcal{D}}_{\textcolor{sthlmRed}{i},t} = \mathcal{D}_{\textcolor{sthlmRed}{i},t}.
\label{eq:decision:rules:equal}
\end{align}
The reduction preserves the original observation structure and introduces no informational asymmetry, ensuring that the information structure is preserved.

Combining \eqref{eq:proof:lossless:reduction:2}, \eqref{eq:equilibrium:original}, and \eqref{eq:decision:rules:equal}, we conclude that \(\mathcal{M}'\) is a lossless reduction of \(\mathcal{M}\).
\end{proof}

\section{Solving Transition-Independent zs-SGs via A Hierarchy of Planners}

\subsection{Proof of Theorem \ref{thm:bellman:optimality}}
\thmbellmansoptimality*
\begin{proof}
We prove the theorem by induction on the number of remaining stages until the planning horizon \(\ell+1\), starting from a focal occupancy set \(x_{\textcolor{sthlmRed}{i},t} \in \mathcal{X}_{\textcolor{sthlmRed}{i}}\).

\textbf{Base case (stage \(t = \ell+1\)):} At the final decision stage, the focal planner reaches a terminal occupancy set \(x_{\textcolor{sthlmRed}{i},\ell+1} \in \mathcal{F}_{\textcolor{sthlmRed}{i}}\). By definition, the expected return from this state is the terminal reward:
\[
v_{\textcolor{sthlmRed}{i},*}(x_{\textcolor{sthlmRed}{i},\ell+1}) = \rho_{\textcolor{sthlmRed}{i}}(x_{\textcolor{sthlmRed}{i},\ell+1}),
\]
which matches the first part of the optimality equations.

\textbf{Inductive step:} Suppose  \citeauthor{bellman}'s optimality equation holds at stage \(t+1\) for all \(x_{\textcolor{sthlmRed}{i},t+1} \in \mathcal{X}_{\textcolor{sthlmRed}{i}}\), i.e.,
\[
v_{\textcolor{sthlmRed}{i},*}(x_{\textcolor{sthlmRed}{i},t+1}) = \begin{cases}
\rho_{\textcolor{sthlmRed}{i}}(x_{\textcolor{sthlmRed}{i},t+1}) & \text{if } x_{\textcolor{sthlmRed}{i},t+1} \in \mathcal{F}_{\textcolor{sthlmRed}{i}}, \\
\operatorname{\mathtt{opt}}_{d_{\textcolor{sthlmRed}{i},t+1} \in \mathcal{D}_{\textcolor{sthlmRed}{i}}} v_{\textcolor{sthlmRed}{i},*}(\tau_{\textcolor{sthlmRed}{i}}(x_{\textcolor{sthlmRed}{i},t+1}, d_{\textcolor{sthlmRed}{i},t+1})) & \text{otherwise}.
\end{cases}
\]

Now consider stage \(t\). The planner must choose a decision rule \(d_{\textcolor{sthlmRed}{i},t} \in \mathcal{D}_{\textcolor{sthlmRed}{i}}\), transitioning to \(x_{\textcolor{sthlmRed}{i},t+1} = \tau_{\textcolor{sthlmRed}{i}}(x_{\textcolor{sthlmRed}{i},t}, d_{\textcolor{sthlmRed}{i},t})\), and then continuing optimally from there. The value of applying the optimal continuation policy in policy space \(\Pi_{\textcolor{sthlmRed}{i},t}\) from occupancy set \(x_{\textcolor{sthlmRed}{i},t}\) is:
\begin{align*}
v_{\textcolor{sthlmRed}{i},*}(x_{\textcolor{sthlmRed}{i},t}) 
&=\textstyle
\operatorname{\mathtt{opt}}_{\psi_{\textcolor{sthlmRed}{i},t}\in \Psi_{\textcolor{sthlmRed}{i},t}} v_{\textcolor{sthlmRed}{i},\psi_{\textcolor{sthlmRed}{i},t}}(x_{\textcolor{sthlmRed}{i},t})\\
&=\textstyle
\operatorname{\mathtt{opt}}_{d_{\textcolor{sthlmRed}{i},t} \in \mathcal{D}_{\textcolor{sthlmRed}{i},t}} 
\operatorname{\mathtt{opt}}_{\psi_{\textcolor{sthlmRed}{i},t+1}\in \Psi_{\textcolor{sthlmRed}{i},t+1}} v_{\textcolor{sthlmRed}{i},\psi_{\textcolor{sthlmRed}{i},t+1}}(\tau_{\textcolor{sthlmRed}{i}}(x_{\textcolor{sthlmRed}{i},t}, d_{\textcolor{sthlmRed}{i},t}))
\\
&=\textstyle
\operatorname{\mathtt{opt}}_{d_{\textcolor{sthlmRed}{i},t} \in \mathcal{D}_{\textcolor{sthlmRed}{i},t}} 
v_{\textcolor{sthlmRed}{i},*}(\tau_{\textcolor{sthlmRed}{i}}(x_{\textcolor{sthlmRed}{i},t}, d_{\textcolor{sthlmRed}{i},t})),
\end{align*}
which proves the recursive part of \citeauthor{bellman}'s optimality equation. 

\textbf{Conclusion:} By induction, \citeauthor{bellman}'s optimality equations hold at all stages \(t = \ell, \ell-1, \ldots, 0\). The greedy policy
\(
\psi^*_{\textcolor{sthlmRed}{i}}(x_{\textcolor{sthlmRed}{i},t}) \in \arg\operatorname{\mathtt{opt}}_{d_{\textcolor{sthlmRed}{i},t} \in \mathcal{D}_{\textcolor{sthlmRed}{i},t}} v_{\textcolor{sthlmRed}{i},*}(\tau_{\textcolor{sthlmRed}{i}}(x_{\textcolor{sthlmRed}{i},t}, d_{\textcolor{sthlmRed}{i},t}))
\) 
selects the optimising decision rule at each stage and is therefore optimal.
\end{proof}

\subsection{Proof of Theorem \ref{thm:uninformed:boe}}
\thmuninformedboe*
\begin{proof}
We proceed by induction on the number of remaining stages until the horizon~\(\ell+1\), starting from any uninformed occupancy state \(x \in \mathcal{X}\).

\textbf{Base case (\(t = \ell+1\)):} By construction, all terminal states \(x \in \mathcal{F}\) yield no further payoff. Thus, the expected cumulative reward is zero: \(v_{*}(x) = 0\), for all \(x \in \mathcal{F}\).

\textbf{Inductive step:} Assume that the optimal value satisfies \citeauthor{bellman}'s equation at stage \(t+1\), i.e.,
\[
v_{*}(x) = \textstyle
\max_{d_{\textcolor{sthlmRed}{1}} \in \mathcal{D}_{\textcolor{sthlmRed}{1}}} 
\min_{d_{\textcolor{sthlmRed}{2}} \in \mathcal{D}_{\textcolor{sthlmRed}{2}}} 
[\rho(x, d_{\textcolor{sthlmRed}{1}}, d_{\textcolor{sthlmRed}{2}}) + \gamma v_{*}(\tau(x, d_{\textcolor{sthlmRed}{1}}, d_{\textcolor{sthlmRed}{2}}))] 
\quad \text{for all } x \text{ with } \ell - t - 1 \text{ stages to go}.
\]

Now consider an uninformed occupancy state \(x\) at stage \(t\). The planner selects a joint decision rule \((d_{\textcolor{sthlmRed}{1}}, d_{\textcolor{sthlmRed}{2}})\), leading deterministically to the next state \( \tau(x, d_{\textcolor{sthlmRed}{1}}, d_{\textcolor{sthlmRed}{2}})\). The expected cumulative reward is the sum of the immediate stage return and the discounted future value:
\begin{align*}
v_{*}(x) 
&=
\max_{\pi_{\textcolor{sthlmRed}{1},t} \in \Pi_{\textcolor{sthlmRed}{1},t}} 
\min_{\pi_{\textcolor{sthlmRed}{2},t} \in \Pi_{\textcolor{sthlmRed}{2},t}} 
 v_{\pi_{\textcolor{sthlmRed}{1},t},\pi_{\textcolor{sthlmRed}{2},t}}(x),\quad \text{(by definition)}\\
&= 
\max_{d_{\textcolor{sthlmRed}{1},t} \in \mathcal{D}_{\textcolor{sthlmRed}{1},t}}
\max_{\pi_{\textcolor{sthlmRed}{1},t+1} \in \Pi_{\textcolor{sthlmRed}{1},t+1}} 
\min_{d_{\textcolor{sthlmRed}{2},t} \in \mathcal{D}_{\textcolor{sthlmRed}{2},t}} 
\min_{\pi_{\textcolor{sthlmRed}{2},t+1} \in \Pi_{\textcolor{sthlmRed}{2},t+1}} 
 v_{\pi_{\textcolor{sthlmRed}{1},t},\pi_{\textcolor{sthlmRed}{2},t}}(x),\quad \text{(split \(\pi_{\textcolor{sthlmRed}{i},t} = (d_{\textcolor{sthlmRed}{1},t},\pi_{\textcolor{sthlmRed}{1},t+1})\))}\\
&\overset{\text{\citet[Thm. 3.2]{DelBufDibSaf-DGAA-23}}}{=}
\max_{d_{\textcolor{sthlmRed}{1},t} \in \mathcal{D}_{\textcolor{sthlmRed}{1},t}}
\min_{d_{\textcolor{sthlmRed}{2},t} \in \mathcal{D}_{\textcolor{sthlmRed}{2},t}} 
\min_{\pi_{\textcolor{sthlmRed}{2},t+1} \in \Pi_{\textcolor{sthlmRed}{2},t+1}} 
\max_{\pi_{\textcolor{sthlmRed}{1},t+1} \in \Pi_{\textcolor{sthlmRed}{1},t+1}} 
 v_{\pi_{\textcolor{sthlmRed}{1},t},\pi_{\textcolor{sthlmRed}{2},t}}(x),\quad \text{(swap \(\min\) and \(\max\))}\\
&\overset{\text{\citet[Thm. 3.2]{DelBufDibSaf-DGAA-23}}}{=}
\max_{d_{\textcolor{sthlmRed}{1},t} \in \mathcal{D}_{\textcolor{sthlmRed}{1},t}}
\min_{d_{\textcolor{sthlmRed}{2},t} \in \mathcal{D}_{\textcolor{sthlmRed}{2},t}} 
\max_{\pi_{\textcolor{sthlmRed}{1},t+1} \in \Pi_{\textcolor{sthlmRed}{1},t+1}} 
\min_{\pi_{\textcolor{sthlmRed}{2},t+1} \in \Pi_{\textcolor{sthlmRed}{2},t+1}} 
 v_{\pi_{\textcolor{sthlmRed}{1},t},\pi_{\textcolor{sthlmRed}{2},t}}(x),\quad \text{(swap \(\min\) and \(\max\))}\\
&=
\max_{d_{\textcolor{sthlmRed}{1},t} \in \mathcal{D}_{\textcolor{sthlmRed}{1},t}}
\min_{d_{\textcolor{sthlmRed}{2},t} \in \mathcal{D}_{\textcolor{sthlmRed}{2},t}}
\max_{\pi_{\textcolor{sthlmRed}{1},t+1} \in \Pi_{\textcolor{sthlmRed}{1},t+1}} 
\min_{\pi_{\textcolor{sthlmRed}{2},t+1} \in \Pi_{\textcolor{sthlmRed}{2},t+1}} 
\left(
 \rho(x, d_{\textcolor{sthlmRed}{1}}, d_{\textcolor{sthlmRed}{2}})
 +
\gamma v_{\pi_{\textcolor{sthlmRed}{1},t+1},\pi_{\textcolor{sthlmRed}{2},t+1}}( \tau(x, d_{\textcolor{sthlmRed}{1}}, d_{\textcolor{sthlmRed}{2}}) )
\right)
\\
&=
\max_{d_{\textcolor{sthlmRed}{1},t} \in \mathcal{D}_{\textcolor{sthlmRed}{1},t}}
\min_{d_{\textcolor{sthlmRed}{2},t} \in \mathcal{D}_{\textcolor{sthlmRed}{2},t}}
\left(
 \rho(x, d_{\textcolor{sthlmRed}{1}}, d_{\textcolor{sthlmRed}{2}})
 +
\max_{\pi_{\textcolor{sthlmRed}{1},t+1} \in \Pi_{\textcolor{sthlmRed}{1},t+1}} 
\min_{\pi_{\textcolor{sthlmRed}{2},t+1} \in \Pi_{\textcolor{sthlmRed}{2},t+1}} 
\gamma v_{\pi_{\textcolor{sthlmRed}{1},t+1},\pi_{\textcolor{sthlmRed}{2},t+1}}( \tau(x, d_{\textcolor{sthlmRed}{1}}, d_{\textcolor{sthlmRed}{2}}) )
\right)
\\
&=
\max_{d_{\textcolor{sthlmRed}{1},t} \in \mathcal{D}_{\textcolor{sthlmRed}{1},t}}
\min_{d_{\textcolor{sthlmRed}{2},t} \in \mathcal{D}_{\textcolor{sthlmRed}{2},t}} 
 \rho(x, d_{\textcolor{sthlmRed}{1}}, d_{\textcolor{sthlmRed}{2}})
 +
 \gamma
 v_*( \tau(x, d_{\textcolor{sthlmRed}{1}}, d_{\textcolor{sthlmRed}{2}}) ),
\end{align*}
which confirms the recursive \citeauthor{bellman}'s optimality equations for the uninformed planner.

For each player \(\textcolor{sthlmRed}{i}\), recall that the value of \(\mathcal{M}_{\textcolor{sthlmRed}{i}}\) at occupancy set \(x_{\textcolor{sthlmRed}{i}}\) at stage \(t \) is defined as the worst-case expected return under any possible realisation of the opponent's strategy:
\begin{align*}
v_{\textcolor{sthlmRed}{i},*}(x_{\textcolor{sthlmRed}{i}}) 
&= \operatorname{\mathtt{opt}}_{\pi_{\textcolor{sthlmRed}{i},t} \in \Pi_{\textcolor{sthlmRed}{i},t}}
v_{\textcolor{sthlmRed}{i},\pi_{\textcolor{sthlmRed}{i},t}}(x_{\textcolor{sthlmRed}{i}}),\quad \text{(by definition)}\\
&= \operatorname{\mathtt{opt}}_{\pi_{\textcolor{sthlmRed}{i},t} \in \Pi_{\textcolor{sthlmRed}{i},t}}
\rho_{\textcolor{sthlmRed}{i}}(x_{\textcolor{sthlmRed}{i},\ell+1})
| 
x_{\textcolor{sthlmRed}{i},\ell+1} = 
\tau_{\textcolor{sthlmRed}{i}}(\tau_{\textcolor{sthlmRed}{i}}(\ldots \tau_{\textcolor{sthlmRed}{i}}
(\tau_{\textcolor{sthlmRed}{i}}(x_{\textcolor{sthlmRed}{i}},d_{\textcolor{sthlmRed}{i},t} ),d_{\textcolor{sthlmRed}{i},t+1})\ldots), d_{\textcolor{sthlmRed}{i},\ell})
\\
&=
\operatorname{\mathtt{opt}}_{\pi_{\textcolor{sthlmRed}{i},t} \in \Pi_{\textcolor{sthlmRed}{i},t}}
\operatorname{\overline{\mathtt{opt}}}_{x_{\ell+1}\in x_{\textcolor{sthlmRed}{i},\ell+1}}
g(x_{\ell+1})
\\
&=
\operatorname{\mathtt{opt}}_{\pi_{\textcolor{sthlmRed}{i},t} \in \Pi_{\textcolor{sthlmRed}{i},t}}
\operatorname{\overline{\mathtt{opt}}}_{x_{\ell+1}\in \tau_{\textcolor{sthlmRed}{i}}(x_{\textcolor{sthlmRed}{i},\ell}, d_{\textcolor{sthlmRed}{i},\ell})}
g(x_{\ell+1})
\\
&=
\operatorname{\mathtt{opt}}_{\pi_{\textcolor{sthlmRed}{i},t} \in \Pi_{\textcolor{sthlmRed}{i},t}}
\operatorname{\overline{\mathtt{opt}}}_{x_{\ell}\in x_{\textcolor{sthlmRed}{i},\ell}}
\operatorname{\overline{\mathtt{opt}}}_{d_{\textcolor{sthlmRed}{-i},\ell}\in \mathcal{D}_{\textcolor{sthlmRed}{-i},\ell}}%
g(\tau(x_\ell, d_{\textcolor{sthlmRed}{i},\ell}, d_{\textcolor{sthlmRed}{-i},\ell}))
\\
&=
\operatorname{\mathtt{opt}}_{\pi_{\textcolor{sthlmRed}{i},t} \in \Pi_{\textcolor{sthlmRed}{i},t}}
\operatorname{\overline{\mathtt{opt}}}_{x_{\ell}\in x_{\textcolor{sthlmRed}{i},\ell}}
\operatorname{\overline{\mathtt{opt}}}_{d_{\textcolor{sthlmRed}{-i},\ell}\in \mathcal{D}_{\textcolor{sthlmRed}{-i},\ell}}%
[g(x_\ell)
+
\gamma^\ell \cdot \rho(x_\ell, d_{\textcolor{sthlmRed}{i},\ell}, d_{\textcolor{sthlmRed}{-i},\ell})]
\\
&=
\operatorname{\mathtt{opt}}_{\pi_{\textcolor{sthlmRed}{i},t} \in \Pi_{\textcolor{sthlmRed}{i},t}}
\operatorname{\overline{\mathtt{opt}}}_{x_{\ell}\in x_{\textcolor{sthlmRed}{i},\ell}}
\operatorname{\overline{\mathtt{opt}}}_{d_{\textcolor{sthlmRed}{-i},\ell}\in \mathcal{D}_{\textcolor{sthlmRed}{-i},\ell}}%
[g(x_\ell)
+
\gamma^\ell \cdot v'_{d_{\textcolor{sthlmRed}{i},\ell}, d_{\textcolor{sthlmRed}{-i},\ell}}(x_\ell)]
\\
&=
\operatorname{\mathtt{opt}}_{\pi_{\textcolor{sthlmRed}{i},t} \in \Pi_{\textcolor{sthlmRed}{i},t}}
\operatorname{\overline{\mathtt{opt}}}_{x\in x_{\textcolor{sthlmRed}{i}}}
\operatorname{\overline{\mathtt{opt}}}_{\pi_{\textcolor{sthlmRed}{-i},t} \in \Pi_{\textcolor{sthlmRed}{-i},t}}
[ g(x) + \gamma^t \cdot v'_{\pi_{\textcolor{sthlmRed}{i},t},\pi_{\textcolor{sthlmRed}{-i},t}}(x) ].\quad \text{(re-arranging terms)}
\end{align*}
Observe that any two \(\min\) or two \(\max\) operators can be swapped freely. Moreover, a \(\min\) and a \(\max\) operator can be interchanged when both appear in the innermost (rightmost) positions. With appropriate ordering of operations, this flexibility allows arranging the optimisation operators in any desired sequence. It then follows that:
\begin{align*}
v_{\textcolor{sthlmRed}{i},*}(x_{\textcolor{sthlmRed}{i}})
&=
\operatorname{\overline{\mathtt{opt}}}_{x\in x_{\textcolor{sthlmRed}{i}}}
g(x) + \gamma^t \operatorname{\mathtt{opt}}_{\pi_{\textcolor{sthlmRed}{i},t} \in \Pi_{\textcolor{sthlmRed}{i},t}}
\operatorname{\overline{\mathtt{opt}}}_{\pi_{\textcolor{sthlmRed}{-i},t} \in \Pi_{\textcolor{sthlmRed}{-i},t}}
v'_{\pi_{\textcolor{sthlmRed}{i},t},\pi_{\textcolor{sthlmRed}{-i},t}}(x),\quad \text{(swap \(\operatorname{\mathtt{opt}}\) and \(\operatorname{\overline{\mathtt{opt}}}\))}
\\
&=
\operatorname{\overline{\mathtt{opt}}}_{x \in x_{\textcolor{sthlmRed}{i}}} \left[ g(x) + \gamma^t v_*(x) \right],
\end{align*}
where \(\operatorname{\overline{\mathtt{opt}}}\) is \(\min\) for \(\textcolor{sthlmRed}{i} = 1\) and \(\max\) for \(\textcolor{sthlmRed}{i} = 2\). This concludes the connection between the focal value function and that of the uninformed planner.

\textbf{Conclusion:} By induction, \(v_{*}\) satisfies \citeauthor{bellman}’s optimality equations across \(\mathcal{X}\), and induces the focal planners’ values as minimax aggregations over their respective occupancy sets.
\end{proof}

\subsection{Proof of Theorem \ref{thm:informed:boe}}
\thminformedboe*
\begin{proof}
The result follows from Theorem~\ref{thm:uninformed:boe}, leveraging the linearity of \(\rho\) and \(\tau\), and the fact that uninformed occupancy states are convex combinations of informed occupancy states.

\paragraph{Linearity of \(\rho\).} Suppose
\(
x = 
\textstyle
\sum_{h_{\textcolor{sthlmRed}{\text{pub}}} \in \mathcal{H}_{\textcolor{sthlmRed}{\text{pub}}}} 
\Pr(h_{\textcolor{sthlmRed}{\text{pub}}} | x) \cdot 
\left( o_{(x,h_{\textcolor{sthlmRed}{\text{pub}}})} \otimes \pmb{e}_{h_{\textcolor{sthlmRed}{\text{pub}}}} \right)
\).
Then,
\begin{align*}
\rho(x, d_{\textcolor{sthlmRed}{1}}, d_{\textcolor{sthlmRed}{2}}) 
&\doteq \textstyle
\sum_s
\sum_{h_{\textcolor{sthlmRed}{1}}, h_{\textcolor{sthlmRed}{2}}}
\sum_{h_{\textcolor{sthlmRed}{\text{pub}}}}
x(s,h_{\textcolor{sthlmRed}{1}},h_{\textcolor{sthlmRed}{2}},h_{\textcolor{sthlmRed}{\text{pub}}})
\sum_{a_{\textcolor{sthlmRed}{1}}, a_{\textcolor{sthlmRed}{2}}}
d_{\textcolor{sthlmRed}{1}}(a_{\textcolor{sthlmRed}{1}}|h_{\textcolor{sthlmRed}{1}},h_{\textcolor{sthlmRed}{\text{pub}}})
d_{\textcolor{sthlmRed}{2}}(a_{\textcolor{sthlmRed}{2}}|h_{\textcolor{sthlmRed}{2}},h_{\textcolor{sthlmRed}{\text{pub}}})
r(s,a_{\textcolor{sthlmRed}{1}},a_{\textcolor{sthlmRed}{2}}) \\
&= \textstyle
\sum_{h_{\textcolor{sthlmRed}{\text{pub}}}} 
\Pr(h_{\textcolor{sthlmRed}{\text{pub}}} | x)
\cdot \rho(o_{(x,h_{\textcolor{sthlmRed}{\text{pub}}})}, d_{\textcolor{sthlmRed}{1},h_{\textcolor{sthlmRed}{\text{pub}}}}, d_{\textcolor{sthlmRed}{2},h_{\textcolor{sthlmRed}{\text{pub}}}}),
\end{align*}
where \(d_{\textcolor{sthlmRed}{i},h_{\textcolor{sthlmRed}{\text{pub}}}}(a_{\textcolor{sthlmRed}{i}}|h_{\textcolor{sthlmRed}{i}}) = d_{\textcolor{sthlmRed}{i}}(a_{\textcolor{sthlmRed}{i}}|h_{\textcolor{sthlmRed}{i}}, h_{\textcolor{sthlmRed}{\text{pub}}})\).

\paragraph{Linearity of \(\tau\).} Let \(x\prime = \tau(x, d_{\textcolor{sthlmRed}{1}}, d_{\textcolor{sthlmRed}{2}})\), and use the same decomposition of \(x\) as above. Then:
\begin{align*}
x\prime(s\prime, (h_{\textcolor{sthlmRed}{i}}, a_{\textcolor{sthlmRed}{i}}, z_{\textcolor{sthlmRed}{i}})_{\textcolor{sthlmRed}{i}}, (h_{\textcolor{sthlmRed}{\text{pub}}}, w)) 
&\doteq \textstyle
\sum_s x(s,h_{\textcolor{sthlmRed}{1}},h_{\textcolor{sthlmRed}{2}},h_{\textcolor{sthlmRed}{\text{pub}}}) 
p(s\prime,z_{\textcolor{sthlmRed}{1}},z_{\textcolor{sthlmRed}{2}},w|s,a_{\textcolor{sthlmRed}{1}},a_{\textcolor{sthlmRed}{2}})
\prod_{\textcolor{sthlmRed}{i}} d_{\textcolor{sthlmRed}{i}}(a_{\textcolor{sthlmRed}{i}}|h_{\textcolor{sthlmRed}{i}},h_{\textcolor{sthlmRed}{\text{pub}}}) \\
&= \textstyle
\Pr(h_{\textcolor{sthlmRed}{\text{pub}}} | x)
\cdot \tau(o_{(x,h_{\textcolor{sthlmRed}{\text{pub}}})}, d_{\textcolor{sthlmRed}{1},h_{\textcolor{sthlmRed}{\text{pub}}}}, d_{\textcolor{sthlmRed}{2},h_{\textcolor{sthlmRed}{\text{pub}}}})(s\prime, (h_{\textcolor{sthlmRed}{i}}, a_{\textcolor{sthlmRed}{i}}, z_{\textcolor{sthlmRed}{i}})_{\textcolor{sthlmRed}{i}},w).
\end{align*}
Hence, \(\tau(x, d_{\textcolor{sthlmRed}{1}}, d_{\textcolor{sthlmRed}{2}})\) is a convex combination of next informed states.

\paragraph{Linearity of \(v_*\).} From Theorem~\ref{thm:uninformed:boe}, we have:
\begin{align*}
v_*(x)
&= \textstyle \max_{d_{\textcolor{sthlmRed}{1}}} \min_{d_{\textcolor{sthlmRed}{2}}}
\left[ \rho(x, d_{\textcolor{sthlmRed}{1}}, d_{\textcolor{sthlmRed}{2}}) + \gamma v_*(\tau(x, d_{\textcolor{sthlmRed}{1}}, d_{\textcolor{sthlmRed}{2}})) \right] \\
&= \textstyle \sum_{h_{\textcolor{sthlmRed}{\text{pub}}}} \Pr(h_{\textcolor{sthlmRed}{\text{pub}}} | x)
\cdot \max_{d_{\textcolor{sthlmRed}{1},h_{\textcolor{sthlmRed}{\text{pub}}}}} \min_{d_{\textcolor{sthlmRed}{2},h_{\textcolor{sthlmRed}{\text{pub}}}}}
\left[
\rho(o_{(x,h_{\textcolor{sthlmRed}{\text{pub}}})}, d_{\textcolor{sthlmRed}{1},h_{\textcolor{sthlmRed}{\text{pub}}}}, d_{\textcolor{sthlmRed}{2},h_{\textcolor{sthlmRed}{\text{pub}}}})
+ \gamma v_*(\tau(o_{(x,h_{\textcolor{sthlmRed}{\text{pub}}})}, d_{\textcolor{sthlmRed}{1},h_{\textcolor{sthlmRed}{\text{pub}}}}, d_{\textcolor{sthlmRed}{2},h_{\textcolor{sthlmRed}{\text{pub}}}}))
\right] \\
&= \textstyle \sum_{h_{\textcolor{sthlmRed}{\text{pub}}}} \Pr(h_{\textcolor{sthlmRed}{\text{pub}}} | x) \cdot v_*(o_{(x,h_{\textcolor{sthlmRed}{\text{pub}}})}),
\end{align*}
where the final equality holds by definition of the informed value \(v_*(o_{(x,h_{\textcolor{sthlmRed}{\text{pub}}})})\), by an abuse of notation, completing the proof.
\end{proof}

\subsection{Proof of Theorem \ref{thm:uniform:continuity}}
\thmuniformcontinuity*
\begin{proof}
The result follows from the definition of the optimal value function of the focal planner and the convex decomposition of uninformed occupancy states \(x\) at stage \(t\):
\begin{align}
v_{*}(x) &= \textstyle \max_{\pi_{\textcolor{sthlmRed}{1},\ell-t} \in \Pi_{\textcolor{sthlmRed}{1},\ell-t}} \min_{\pi_{\textcolor{sthlmRed}{2},\ell-t} \in \Pi_{\textcolor{sthlmRed}{2},\ell-t}} v_{\pi_{\textcolor{sthlmRed}{1},\ell-t},\pi_{\textcolor{sthlmRed}{2},\ell-t}}(x) \label{eq:thm:uniform:continuity:1} \\
&= \textstyle \sum_{h_{\textcolor{sthlmRed}{\text{pub}}}} \Pr(h_{\textcolor{sthlmRed}{\text{pub}}} | x)
\max_{\pi_{\textcolor{sthlmRed}{1},\ell-t}} \min_{\pi_{\textcolor{sthlmRed}{2},\ell-t}} v_{\pi_{\textcolor{sthlmRed}{1},\ell-t},\pi_{\textcolor{sthlmRed}{2},\ell-t}}(o_{(x,h_{\textcolor{sthlmRed}{\text{pub}}})}) \label{eq:thm:uniform:continuity:2} \\
&= \textstyle \sum_{h_{\textcolor{sthlmRed}{\text{pub}}}} \Pr(h_{\textcolor{sthlmRed}{\text{pub}}} | x)
\max_{\pi_{\textcolor{sthlmRed}{1},\ell-t}} 
\sum_{h_{\textcolor{sthlmRed}{2}}}
\Pr(h_{\textcolor{sthlmRed}{2}} | x, h_{\textcolor{sthlmRed}{\text{pub}}}) 
\min_{\alpha_{\textcolor{sthlmRed}{2}} \in \Gamma_{\!\!\textcolor{sthlmRed}{2},\pi_{\textcolor{sthlmRed}{1},\ell-t}}}
\alpha_{\textcolor{sthlmRed}{2}}(c_{\textcolor{sthlmRed}{2},(x,h_{\textcolor{sthlmRed}{\text{pub}}},h_{\textcolor{sthlmRed}{2}})}) \label{eq:thm:uniform:continuity:3} \\
&= \textstyle \sum_{h_{\textcolor{sthlmRed}{\text{pub}}}} \Pr(h_{\textcolor{sthlmRed}{\text{pub}}} | x)
\max_{\Gamma_{\!\!\textcolor{sthlmRed}{2}} \in \Gamma_{\!\!\textcolor{sthlmRed}{1}}}
\sum_{h_{\textcolor{sthlmRed}{2}}}
\Pr(h_{\textcolor{sthlmRed}{2}} | x, h_{\textcolor{sthlmRed}{\text{pub}}}) 
\min_{\alpha_{\textcolor{sthlmRed}{2}} \in \Gamma_{\!\!\textcolor{sthlmRed}{2}}}
\alpha_{\textcolor{sthlmRed}{2}}(c_{\textcolor{sthlmRed}{2},(x,h_{\textcolor{sthlmRed}{\text{pub}}},h_{\textcolor{sthlmRed}{2}})}). \label{eq:thm:uniform:continuity:4}
\end{align}
Equation \eqref{eq:thm:uniform:continuity:1} follows from the definition of the optimal value rooted at \(x\). Equation \eqref{eq:thm:uniform:continuity:2} uses the convex decomposition of \(x\) across informed occupancy states, that is,
\(x = \sum_{h_{\textcolor{sthlmRed}{\text{pub}}}} \Pr(h_{\textcolor{sthlmRed}{\text{pub}}} | x) \cdot ( o_{(x,h_{\textcolor{sthlmRed}{\text{pub}}})} \otimes \pmb{e}_{h_{\textcolor{sthlmRed}{\text{pub}}}} )\). Equation \eqref{eq:thm:uniform:continuity:3} follows by expressing \(x\) as a convex combination over marginal occupancy states indexed by public and private histories. Equation~\eqref{eq:thm:uniform:continuity:4} introduces a collection \(\Gamma_{\!\!\textcolor{sthlmRed}{1}}\) of sets \(\Gamma_{\!\!\textcolor{sthlmRed}{2},\pi_{\textcolor{sthlmRed}{1},\ell-t}}\), one for each focal policy \(\pi_{\textcolor{sthlmRed}{1},\ell-t}\). Although the space of such policies is uncountably infinite—since policies lie in a continuum—each set \(\Gamma_{\!\!\textcolor{sthlmRed}{2},\pi_{\textcolor{sthlmRed}{1},\ell-t}}\) in Equation~\eqref{eq:thm:uniform:continuity:3} is finite, as it contains only deterministic policy trees \(\delta_{\textcolor{sthlmRed}{2},\ell-t} \in \Delta_{\textcolor{sthlmRed}{2},\ell-t}\). Each element \(\alpha_{\textcolor{sthlmRed}{2}} \in \Gamma_{\!\!\textcolor{sthlmRed}{2},\pi_{\textcolor{sthlmRed}{1},\ell-t}}\) is a linear function over marginal occupancy states, defined as \(\alpha_{\textcolor{sthlmRed}{2}}\colon (s,h_{\textcolor{sthlmRed}{1}}) \mapsto v_{\pi_{\textcolor{sthlmRed}{1},\ell-t}, \delta_{\textcolor{sthlmRed}{2},\ell-t}}(s,h_{\textcolor{sthlmRed}{1}})\)---we draw inspiration from the literature on partially observable Markov decision processes \citep{pineau2003point}. 
\end{proof}

\section{\texorpdfstring{$\varepsilon$-Optimally Solving $\mathcal{M}$ as $\mathcal{M}'$}{ε-Optimally Solving M as M̃} via Point-Based Value Iteration}

\paragraph{Existing uniform continuity properties are weaker.} 
Recent work has established various uniform continuity properties of optimal value functions to support the design of efficient point-based operators~\citep{Wiggers16,DelBufDibSaf-DGAA-23,cunha2023convex}. To formulate these properties precisely, we introduce two notions associated with an uninformed occupancy state: \emph{marginals} and \emph{conditionals}.
For any uninformed occupancy state~$x$, the \emph{marginal} $m_{\textcolor{sthlmRed}{2}}$ of player~\textcolor{sthlmRed}{2} is defined as the marginal distribution of~$x$ over private histories $h_{\textcolor{sthlmRed}{2}} \in \mathcal{H}_{\textcolor{sthlmRed}{2}}$ and public histories $h_{\textcolor{sthlmRed}{pub}} \in \mathcal{H}_{\textcolor{sthlmRed}{pub}}$:
\begin{align*}
m_{\textcolor{sthlmRed}{2}}(h_{\textcolor{sthlmRed}{2}},h_{\textcolor{sthlmRed}{pub}}) 
&= \textstyle\sum_{s\in\mathcal{S}} \sum_{h_{\textcolor{sthlmRed}{1}}\in \mathcal{H}_{\textcolor{sthlmRed}{1}}} x(s,(h_{\textcolor{sthlmRed}{1}},h_{\textcolor{sthlmRed}{2}},h_{\textcolor{sthlmRed}{pub}})).
\end{align*}
Moreover, for any $x$, and any pair $(h_{\textcolor{sthlmRed}{2}}, h_{\textcolor{sthlmRed}{pub}})$, the \emph{conditional occupancy state} $c_{\textcolor{sthlmRed}{2}, (x,h_{\textcolor{sthlmRed}{2}},h_{\textcolor{sthlmRed}{pub}})}$ is the marginal\footnote{Strictly speaking, we should have referred to \emph{marginal occupancy states} as \textbf{conditional occupancy states}, in line with their formal definition. Likewise, the \emph{marginal planner} would be more appropriately named the \textbf{conditional planner}. We will revise this terminology in the final version of the paper.} distribution over $(s, h_{\textcolor{sthlmRed}{1}})$ given $(h_{\textcolor{sthlmRed}{2}}, h_{\textcolor{sthlmRed}{pub}})$, such that:
\begin{align*} 
c_{\textcolor{sthlmRed}{2}, (x,h_{\textcolor{sthlmRed}{2}},h_{\textcolor{sthlmRed}{pub}})}(s,h_{\textcolor{sthlmRed}{1}})  &= \frac{x(s,(h_{\textcolor{sthlmRed}{1}},h_{\textcolor{sthlmRed}{2}},h_{\textcolor{sthlmRed}{pub}}))}{ m_{\textcolor{sthlmRed}{2}}(h_{\textcolor{sthlmRed}{2}},h_{\textcolor{sthlmRed}{pub}})}.
\end{align*}
We write $\pmb{c}_{\textcolor{sthlmRed}{2}}$ to denote the family of such conditionals:
\(
\{c_{\textcolor{sthlmRed}{2}, (x,h_{\textcolor{sthlmRed}{2}},h_{\textcolor{sthlmRed}{pub}})} \mid h_{\textcolor{sthlmRed}{2}} \in \mathcal{H}_{\textcolor{sthlmRed}{2}},\; h_{\textcolor{sthlmRed}{pub}} \in \mathcal{H}_{\textcolor{sthlmRed}{pub}} \}
\),
and use $\pmb{c}_{\textcolor{sthlmRed}{2}} \odot m_{\textcolor{sthlmRed}{2}}$ to denote a unique uninformed occupancy state~$x$ reconstructed from this decomposition:
\begin{align*}
x(s,(h_{\textcolor{sthlmRed}{1}},h_{\textcolor{sthlmRed}{2}}, h_{\textcolor{sthlmRed}{pub}})) 
&= c_{\textcolor{sthlmRed}{2}, (x,h_{\textcolor{sthlmRed}{2}},h_{\textcolor{sthlmRed}{pub}})}(s,h_{\textcolor{sthlmRed}{1}}) \cdot m_{\textcolor{sthlmRed}{2}}(h_{\textcolor{sthlmRed}{2}}, h_{\textcolor{sthlmRed}{pub}}),
\end{align*}
for all $s \in \mathcal{S}$ and $(h_{\textcolor{sthlmRed}{1}}, h_{\textcolor{sthlmRed}{2}}, h_{\textcolor{sthlmRed}{pub}}) \in \mathcal{H}$.
We are now ready to formally present the known uniform continuity properties.
\begin{figure}[H]
\centering
\begin{tikzpicture}[
        scale=1.3,
        IS/.style={sthlmRed, thick},
        LM/.style={sthlmRed, thick},
        axis/.style={very thick, ->, >=stealth', line join=miter},
        important line/.style={thick}, dashed line/.style={dashed, thin},
        every node/.style={color=black},
        dot/.style={circle,fill=red,minimum size=8pt,inner sep=0pt, outer sep=-1pt},
    ]

    %%%%%%%%%%%%%%%%%%%%%%%%%%%%%%%%%%%%%%%%%%%%%%%%%%% 
    %%%%%%%%%%%%%%%%%%%%%%%%%%%%%%%%%%%%%%%%%%%%%%%%%%% 
    % FIRST CHART 
    %%%%%%%%%%%%%%%%%%%%%%%%%%%%%%%%%%%%%%%%%%%%%%%%%%% 
    %%%%%%%%%%%%%%%%%%%%%%%%%%%%%%%%%%%%%%%%%%%%%%%%%%% 
    %%% The axis %%% 
    %%%%%%%%%%%%%%%%%%%%%%%%%%%%%%%%%%%%%%%%%%%%%%%%%%% 
    \draw[axis, -, line width=.3em] (-6,2.75) -- (-6,0) -- (-2.25,0) -- (-2.25,2.75);
    \node at (-6.2,-0.3) {\textbf{A}};
    \node at (-0.2,-0.3) {\textbf{B}};
    
    %%%%%%%%%%%%%%%%%%%%%%%%%%%%%%%%%%%%%%%%%%%%%%%%%%% 
    %%% The hyperplane with color sthlmRed and .1em %%% 
    %%%%%%%%%%%%%%%%%%%%%%%%%%%%%%%%%%%%%%%%%%%%%%%%%%% 
    \draw[-, sthlmRed, line width=.1em] (-2.25,2)--(-6,0);

    %%%%%%%%%%%%%%%%%%%%%%%%%%%%%%%%%%%%%%%%%%%%%%%%%%%% 
    %%% The hyperplane with color sthlmGreen and .1em%%% 
    %%%%%%%%%%%%%%%%%%%%%%%%%%%%%%%%%%%%%%%%%%%%%%%%%%%% 
    \draw[-, sthlmGreen, line width=.1em] (-6,1.65) -- (-2.25,1.25);     
    % y=-.106666666666x+1.01

    %%%%%%%%%%%%%%%%%%%%%%%%%%%%%%%%%%%%%%%%%%%%%%%%%%%% 
    %%% The hyperplane with color sthlmBlue and .1em %%% 
    %%%%%%%%%%%%%%%%%%%%%%%%%%%%%%%%%%%%%%%%%%%%%%%%%%%% 
    \draw[-, sthlmBlue, line width=.1em] (-6,2.25)--(-2.25,0);
 
    %%%%%%%%%%%%%%%%%%%%%%%%%%%%%%%%%%%%%%%%%%%%%%%%%%% 
    %%% The points on the marginal occupancy states %%% 
    %%%%%%%%%%%%%%%%%%%%%%%%%%%%%%%%%%%%%%%%%%%%%%%%%%% 
    \node[scale=1, sthlmBlue] at (-2.75,-0.3) {$m_{\textcolor{sthlmRed}{2}}$};
    \node[scale=1, sthlmRed] at (-5.07,-0.3) {$m_{\textcolor{sthlmRed}{2}}$};

    %%%%%%%%%%%%%%%%%%%%%%%%%%%%%%%%%%%%%%%%%%%%%%%%%%% 
    %%% The projection of a point on the value fct. %%% 
    %%%%%%%%%%%%%%%%%%%%%%%%%%%%%%%%%%%%%%%%%%%%%%%%%%% 
    \draw[-, dotted, sthlmOrange] (-3.67,-0.1) -- (-3.67,2.75);    
    \node[scale=1, sthlmOrange] at (-3.67,-0.3) {$m_{\textcolor{sthlmRed}{2}}$};

    %%%%%%%%%%%%%%%%%%%%%%%%%%%%%%%%%%%%%%%%%%%%%%%%%%% 
    %%% The projection of a point on the value fct. %%% 
    %%%%%%%%%%%%%%%%%%%%%%%%%%%%%%%%%%%%%%%%%%%%%%%%%%% 
    \draw[-, dotted, sthlmRed] (-5.07,2.75) -- (-5.07,-0.1);
    \node[scale=.5, sthlmRed] at (-6.5,.5) {$g_{\pmb{c}_{\textcolor{sthlmRed}{2}}}(m_{\textcolor{sthlmRed}{2}})$};    
    \draw[-, dotted, sthlmRed] (-6.1,.5) -- (-2.25,.5);    
    \node[rotate=90, scale=2, sthlmRed] at (-5.07,.5) {$\bullet$};
    \node[rotate=90, scale=1, sthlmRed] at (-6,.5) {$\bullet$};
    \node[rotate=90, scale=1, sthlmRed] at (-5.07,0) {$\bullet$};

    %%%%%%%%%%%%%%%%%%%%%%%%%%%%%%%%%%%%%%%%%%%%%%%%%%% 
    %%% The projection of a point on the value fct. %%% 
    %%%%%%%%%%%%%%%%%%%%%%%%%%%%%%%%%%%%%%%%%%%%%%%%%%% 
    \draw[-, dotted, sthlmBlue] (-2.75,-0.1) -- (-2.75,2.75);    
    \node[scale=.5, sthlmBlue] at (-6.5,.3) {$g_{\pmb{c}_{\textcolor{sthlmRed}{2}}}(m_{\textcolor{sthlmRed}{2}})$};    
    \draw[-, dotted, sthlmBlue] (-6.1,.3) -- (-2.25,.3);    
    \node[scale=2, sthlmBlue] at (-2.75,.3) {$\bullet$};
    \node[scale=1, sthlmBlue] at (-6,.3) {$\bullet$};
    \node[scale=1, sthlmBlue] at (-2.75,0) {$\bullet$};

    %%%%%%%%%%%%%%%%%%%%%%%%%%%%%%%%%%%%%%%%%%%%%%%%%%% 
    %%%%%%%%%%%%%%%%%%%%%%%%%%%%%%%%%%%%%%%%%%%%%%%%%%% 
    % SECOND CHART 
    %%%%%%%%%%%%%%%%%%%%%%%%%%%%%%%%%%%%%%%%%%%%%%%%%%% 
    %%%%%%%%%%%%%%%%%%%%%%%%%%%%%%%%%%%%%%%%%%%%%%%%%%% 
    %%% The axis %%% 
    %%%%%%%%%%%%%%%%%%%%%%%%%%%%%%%%%%%%%%%%%%%%%%%%%%% 
    \draw[axis, -, line width=.3em] (0,2.75) -- (0,0) -- (3.75,0) -- (3.75,2.75);

    %%%%%%%%%%%%%%%%%%%%%%%%%%%%%%%%%%%%%%%%%%%%%%%%%%% 
    %%% The hyperplane with color sthlmRed and .1em %%% 
    %%%%%%%%%%%%%%%%%%%%%%%%%%%%%%%%%%%%%%%%%%%%%%%%%%% 
    \draw[-, sthlmRed, line width=.1em] (3.75,2)--(0,0);

    %%%%%%%%%%%%%%%%%%%%%%%%%%%%%%%%%%%%%%%%%%%%%%%%%%%% 
    %%% The hyperplane with color sthlmGreen and .1em%%% 
    %%%%%%%%%%%%%%%%%%%%%%%%%%%%%%%%%%%%%%%%%%%%%%%%%%%% 
    \draw[-, sthlmGreen, line width=.1em] (0,1.65) -- (3.75,1.25);     

    %%%%%%%%%%%%%%%%%%%%%%%%%%%%%%%%%%%%%%%%%%%%%%%%%%%% 
    %%% The hyperplane with color sthlmBlue and .1em %%% 
    %%%%%%%%%%%%%%%%%%%%%%%%%%%%%%%%%%%%%%%%%%%%%%%%%%%% 
    \draw[-, sthlmBlue, line width=.1em] (0,2.25)--(3.75,0);

    %%%%%%%%%%%%%%%%%%%%%%%%%%%%%%%%%%%%%%%%%%%%%%%%%%% 
    %%% The projection of a point on the value fct. %%% 
    %%%%%%%%%%%%%%%%%%%%%%%%%%%%%%%%%%%%%%%%%%%%%%%%%%% 
    \draw[-, dotted, sthlmOrange] (2.33,-0.1) -- (2.33,2.75);    
    \node[scale=1, sthlmOrange] at (2.33,-0.3) {$m_{\textcolor{sthlmRed}{2}}$};
    
    \node[scale=.5] at (-.5,1.5) {$\bar{v}(\textcolor{sthlmBlue}{\pmb{c}_{\textcolor{sthlmRed}{2}}}  \odot  \textcolor{sthlmOrange}{m_{\textcolor{sthlmRed}{2}}})$};    

    \node[scale=.5] at (.75,1.18) {$\kappa\|\textcolor{sthlmOrange}{x}-\textcolor{sthlmBlue}{\pmb{c}_{\textcolor{sthlmRed}{2}}}  \odot  \textcolor{sthlmOrange}{m_{\textcolor{sthlmRed}{2}}}\|_1$};   
    
    \node[scale=.5, sthlmBlue] at (-.5,.85) {$g_{\pmb{c}_{\textcolor{sthlmRed}{2}}}(\textcolor{sthlmOrange}{m_{\textcolor{sthlmRed}{2}}})$};    
    \draw[-, dotted, sthlmOrange] (-.1,.85) -- (3.75,.85);    
    \node[rotate=90, scale=2, sthlmOrange] at (2.33,.85) {$\bullet$};
    \node[rotate=90, scale=1, sthlmOrange] at (2.33,0) {$\bullet$};
    \node[rotate=90, scale=1, sthlmOrange] at (0,.85) {$\bullet$};
    \node[rotate=90, scale=1, sthlmOrange] at (0,1.5) {$\bullet$};
    \draw[dotted, sthlmOrange] (0,.85) .. controls (.25,.85) and (.25,1.5) .. (0,1.5);
\end{tikzpicture}
\caption{Generalization across marginals of the value function given by a collection $G = \{\textcolor{sthlmBlue}{g_{\pmb{c}_{\textcolor{sthlmRed}{2}}}}, \textcolor{sthlmRed}{g_{\pmb{c}_{\textcolor{sthlmRed}{2}}}}, \textcolor{sthlmGreen}{g_{\pmb{c}_{\textcolor{sthlmRed}{2}}}}\}$ of linear functions over unknown marginals. Figure \textbf{A} shows no generalization on marginal $\textcolor{sthlmOrange}{m_{\textcolor{sthlmRed}{2}}}$ because $\textcolor{sthlmOrange}{m_{\textcolor{sthlmRed}{2}}} \notin \{\textcolor{sthlmBlue}{m_{\textcolor{sthlmRed}{2}}}, \textcolor{sthlmRed}{m_{\textcolor{sthlmRed}{2}}}, \textcolor{sthlmGreen}{m_{\textcolor{sthlmRed}{2}}}\}$, \cf Theorem \ref{thm:wiggers}. Figure \textbf{B} shows generalization over unknown marginal occupancy state $\textcolor{sthlmOrange}{m_{\textcolor{sthlmRed}{2}}}$ from known marginal $\textcolor{sthlmBlue}{m_{\textcolor{sthlmRed}{2}}}$ with offset $\kappa\|\textcolor{sthlmOrange}{x}-  \textcolor{sthlmBlue}{\pmb{c}_{\textcolor{sthlmRed}{2}}} \odot \textcolor{sthlmOrange}{m_{\textcolor{sthlmRed}{2}}}  \|_1$, \cf Theorem \ref{thm:delage}. \textbf{Best viewed in color}.}
\label{figure:weak:convex:function:representations}
\end{figure}

\begin{theorem}[Adapted from \citet{Wiggers16}]
\label{thm:wiggers}
For any arbitrary $\mathcal{M}'$, the optimal value functions $v_*$ defined in Theorem~\ref{thm:informed:boe} are convex over marginals, conditioned on a fixed conditional family. That is, there exists a collection~$G$ of linear functions over marginals such that for any stage~$t$ and any uninformed occupancy state $x = \pmb{c}_{\textcolor{sthlmRed}{2}} \odot m_{\textcolor{sthlmRed}{2}}$,
\(v_*(x) = \textstyle\max_{g_{\pmb{c}_{\textcolor{sthlmRed}{2}}} \in G} g_{\pmb{c}_{\textcolor{sthlmRed}{2}}}(m_{\textcolor{sthlmRed}{2}})\),
where each $g_{\pmb{c}_{\textcolor{sthlmRed}{2}}} \colon \mathcal{H}_{\textcolor{sthlmRed}{2}} \times \mathcal{H}_{\textcolor{sthlmRed}{pub}} \to \mathbb{R}$ is associated with the conditional family $\pmb{c}_{\textcolor{sthlmRed}{2}}$.
\end{theorem}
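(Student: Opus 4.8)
The plan is to derive the result directly from the marginal-planner representation already established in Theorem~\ref{thm:uniform:continuity}, exploiting the observation that fixing the conditional family $\pmb{c}_{\textcolor{sthlmRed}{2}}$ freezes every ingredient of that representation except for an explicit \emph{linear} dependence on the marginal $m_{\textcolor{sthlmRed}{2}}$. Concretely, Theorem~\ref{thm:uniform:continuity} yields
\[
\textstyle v_{*}(x) = \sum_{h_{\textcolor{sthlmRed}{\text{pub}}}} \Pr(h_{\textcolor{sthlmRed}{\text{pub}}}\mid x)\,\max_{\Gamma_{\!\!\textcolor{sthlmRed}{2}}\in\Gamma_{\!\!\textcolor{sthlmRed}{1}}} \sum_{h_{\textcolor{sthlmRed}{2}}} \Pr(h_{\textcolor{sthlmRed}{2}}\mid h_{\textcolor{sthlmRed}{\text{pub}}},x)\,\min_{\alpha_{\textcolor{sthlmRed}{2}}\in\Gamma_{\!\!\textcolor{sthlmRed}{2}}}\alpha_{\textcolor{sthlmRed}{2}}\big(c_{\textcolor{sthlmRed}{2},(x,h_{\textcolor{sthlmRed}{\text{pub}}},h_{\textcolor{sthlmRed}{2}})}\big),
\]
where the finite collection $\Gamma_{\!\!\textcolor{sthlmRed}{1}}$ of finite sets $\Gamma_{\!\!\textcolor{sthlmRed}{2}}$ of linear functions $\alpha_{\textcolor{sthlmRed}{2}}$ is fixed in advance (stage-dependent but independent of $x$), and the $c_{\textcolor{sthlmRed}{2},(x,h_{\textcolor{sthlmRed}{\text{pub}}},h_{\textcolor{sthlmRed}{2}})}$ are exactly the members of the conditional family $\pmb{c}_{\textcolor{sthlmRed}{2}}$.

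First I would rewrite the probability weights in terms of the marginal: since $\Pr(h_{\textcolor{sthlmRed}{\text{pub}}}\mid x)\,\Pr(h_{\textcolor{sthlmRed}{2}}\mid h_{\textcolor{sthlmRed}{\text{pub}}},x)=\Pr(h_{\textcolor{sthlmRed}{2}},h_{\textcolor{sthlmRed}{\text{pub}}}\mid x)=m_{\textcolor{sthlmRed}{2}}(h_{\textcolor{sthlmRed}{2}},h_{\textcolor{sthlmRed}{\text{pub}}})$, and since $\Pr(h_{\textcolor{sthlmRed}{\text{pub}}}\mid x)\ge 0$ may be pushed through the inner $\max$, the value becomes $\sum_{h_{\textcolor{sthlmRed}{\text{pub}}}}\max_{\Gamma_{\!\!\textcolor{sthlmRed}{2}}} \sum_{h_{\textcolor{sthlmRed}{2}}} m_{\textcolor{sthlmRed}{2}}(h_{\textcolor{sthlmRed}{2}},h_{\textcolor{sthlmRed}{\text{pub}}})\,\beta_{\Gamma_{\!\!\textcolor{sthlmRed}{2}}}(h_{\textcolor{sthlmRed}{\text{pub}}},h_{\textcolor{sthlmRed}{2}})$, where $\beta_{\Gamma_{\!\!\textcolor{sthlmRed}{2}}}(h_{\textcolor{sthlmRed}{\text{pub}}},h_{\textcolor{sthlmRed}{2}}) \doteq \min_{\alpha_{\textcolor{sthlmRed}{2}}\in\Gamma_{\!\!\textcolor{sthlmRed}{2}}}\alpha_{\textcolor{sthlmRed}{2}}(c_{\textcolor{sthlmRed}{2},(x,h_{\textcolor{sthlmRed}{\text{pub}}},h_{\textcolor{sthlmRed}{2}})})$. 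The crucial point is that once $\pmb{c}_{\textcolor{sthlmRed}{2}}$ is held fixed, each $c_{\textcolor{sthlmRed}{2},(x,h_{\textcolor{sthlmRed}{\text{pub}}},h_{\textcolor{sthlmRed}{2}})}$ is determined by $\pmb{c}_{\textcolor{sthlmRed}{2}}$ and \emph{does not depend} on $m_{\textcolor{sthlmRed}{2}}$; hence every $\beta_{\Gamma_{\!\!\textcolor{sthlmRed}{2}}}(h_{\textcolor{sthlmRed}{\text{pub}}},h_{\textcolor{sthlmRed}{2}})$ is a constant, and the whole expression is a function of $m_{\textcolor{sthlmRed}{2}}$ alone that is linear inside each maximand.

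Next I would collapse the per-$h_{\textcolor{sthlmRed}{\text{pub}}}$ maximizations into a single one. Because the summand indexed by $h_{\textcolor{sthlmRed}{\text{pub}}}$ depends only on the slice $m_{\textcolor{sthlmRed}{2}}(\cdot,h_{\textcolor{sthlmRed}{\text{pub}}})$ and the maximizing $\Gamma_{\!\!\textcolor{sthlmRed}{2}}$ may be chosen independently for each $h_{\textcolor{sthlmRed}{\text{pub}}}$, a sum of maxima equals the maximum over the product index set $\Gamma_{\!\!\textcolor{sthlmRed}{1}}^{\mathcal{H}_{\textcolor{sthlmRed}{\text{pub}}}}$ of the corresponding single sum. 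Each selection $\sigma = (\Gamma_{\!\!\textcolor{sthlmRed}{2}}^{h_{\textcolor{sthlmRed}{\text{pub}}}})_{h_{\textcolor{sthlmRed}{\text{pub}}}}$ then defines a linear map over marginals, $g_{\pmb{c}_{\textcolor{sthlmRed}{2}},\sigma}(m_{\textcolor{sthlmRed}{2}}) \doteq \sum_{h_{\textcolor{sthlmRed}{\text{pub}}},h_{\textcolor{sthlmRed}{2}}} m_{\textcolor{sthlmRed}{2}}(h_{\textcolor{sthlmRed}{2}},h_{\textcolor{sthlmRed}{\text{pub}}})\,\beta_{\Gamma_{\!\!\textcolor{sthlmRed}{2}}^{h_{\textcolor{sthlmRed}{\text{pub}}}}}(h_{\textcolor{sthlmRed}{\text{pub}}},h_{\textcolor{sthlmRed}{2}})$. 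Taking $G$ to be the finite collection $\{g_{\pmb{c}_{\textcolor{sthlmRed}{2}},\sigma}\}_{\sigma}$ gives $v_{*}(\pmb{c}_{\textcolor{sthlmRed}{2}} \odot m_{\textcolor{sthlmRed}{2}}) = \max_{g_{\pmb{c}_{\textcolor{sthlmRed}{2}}}\in G} g_{\pmb{c}_{\textcolor{sthlmRed}{2}}}(m_{\textcolor{sthlmRed}{2}})$, a pointwise maximum of finitely many linear functions of $m_{\textcolor{sthlmRed}{2}}$, hence piecewise linear and convex over marginals, as claimed.

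I expect the main obstacle to be careful bookkeeping rather than conceptual depth. One must verify that the \emph{only} surviving dependence on $m_{\textcolor{sthlmRed}{2}}$ after fixing $\pmb{c}_{\textcolor{sthlmRed}{2}}$ is the explicit linear prefactor — in particular that the weights $\Pr(h_{\textcolor{sthlmRed}{\text{pub}}}\mid x)$ and $\Pr(h_{\textcolor{sthlmRed}{2}}\mid h_{\textcolor{sthlmRed}{\text{pub}}},x)$ recombine \emph{exactly} into $m_{\textcolor{sthlmRed}{2}}$, and that the sets $\Gamma_{\!\!\textcolor{sthlmRed}{1}},\Gamma_{\!\!\textcolor{sthlmRed}{2}}$ and functions $\alpha_{\textcolor{sthlmRed}{2}}$ are independent of the particular $x$. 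The second delicate step is justifying the exchange of $\sum_{h_{\textcolor{sthlmRed}{\text{pub}}}}$ and $\max_{\Gamma_{\!\!\textcolor{sthlmRed}{2}}}$, which is valid precisely because the maximizations over distinct public histories are mutually unconstrained; finally, noting that $G$ is indexed by (and depends on) the fixed conditional family — reflected in the subscript $g_{\pmb{c}_{\textcolor{sthlmRed}{2}}}$ — completes the correspondence with the stated form and with the geometry in Figure~\ref{figure:weak:convex:function:representations}A.
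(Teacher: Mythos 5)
Your derivation is correct, but it takes a genuinely different route from the paper: the paper does not prove Theorem~\ref{thm:wiggers} at all, instead deferring to the original proof in \citet{Wiggers16}, which establishes the convexity over marginals directly (by backward induction over best responses of deterministic opponent policies to fixed focal policies, yielding generalisation only between occupancy states sharing the same conditional family). You instead obtain the statement as a corollary of the paper's own Theorem~\ref{thm:uniform:continuity}: recombining \(\Pr(h_{\textcolor{sthlmRed}{\text{pub}}}\mid x)\Pr(h_{\textcolor{sthlmRed}{2}}\mid h_{\textcolor{sthlmRed}{\text{pub}}},x)\) into \(m_{\textcolor{sthlmRed}{2}}(h_{\textcolor{sthlmRed}{2}},h_{\textcolor{sthlmRed}{\text{pub}}})\), freezing the conditionals so the inner \(\min_{\alpha_{\textcolor{sthlmRed}{2}}}\alpha_{\textcolor{sthlmRed}{2}}(c_{\textcolor{sthlmRed}{2},(x,h_{\textcolor{sthlmRed}{\text{pub}}},h_{\textcolor{sthlmRed}{2}})})\) becomes a constant, and converting the sum of independent maxima into a single maximum over selections \(\sigma\in\Gamma_{\!\!\textcolor{sthlmRed}{1}}^{\mathcal{H}_{\textcolor{sthlmRed}{\text{pub}}}}\). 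Each step is valid, and this is arguably the natural route \emph{within this paper}, since the paper explicitly positions Theorem~\ref{thm:uniform:continuity} as strictly stronger than the Wiggers-style property; there is no circularity because Theorem~\ref{thm:uniform:continuity} is proved independently. What your route buys is economy and a clean exhibition of why generalisation fails across differing conditional families (the \(\beta\)-constants, hence the linear functions \(g_{\pmb{c}_{\textcolor{sthlmRed}{2}},\sigma}\), are tied to the fixed \(\pmb{c}_{\textcolor{sthlmRed}{2}}\)); what the original route buys is independence from the newer, stronger machinery. One small overstatement: you call \(G\) finite, but the proof of Theorem~\ref{thm:uniform:continuity} indexes \(\Gamma_{\!\!\textcolor{sthlmRed}{1}}\) by an uncountable family of focal policies, so your product index set need not be finite; this is harmless because Theorem~\ref{thm:wiggers} does not claim finiteness and a pointwise maximum of arbitrarily many linear functions is still convex, but the phrase ``finitely many'' should be dropped or justified separately.
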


\citet{Wiggers16} provides a detailed proof of Theorem~\ref{thm:wiggers}, showing that if two uninformed occupancy states share the same conditional family, value generalisation from one to the other is possible. However, this conditional uniform continuity property does not support generalisation across uninformed occupancy states with differing conditionals. Figure~\ref{figure:weak:convex:function:representations} (\textbf{A}) visualises this limitation.
To address this, \citet{DelBufDibSaf-DGAA-23} combine the conditional property with Lipschitz continuity, thereby enabling generalisation to previously unseen uninformed occupancy states.
\begin{theorem}[Adapted from \citet{DelBufDibSaf-DGAA-23}]
\label{thm:delage}
For any arbitrary $\mathcal{M}'$, the optimal value functions $v_*$ defined in Theorem~\ref{thm:informed:boe} are Lipschitz continuous over uninformed occupancy states. That is, there exists a collection $G$ of linear functions over marginals such that, for any stage and any uninformed occupancy state $x = \bar{\pmb{c}}_{\textcolor{sthlmRed}{2}} \odot m_{\textcolor{sthlmRed}{2}}$,
\(
v_*(x) \leq \textstyle g_{\pmb{c}_{\textcolor{sthlmRed}{2}}}(m_{\textcolor{sthlmRed}{2}}) + \kappa \|x - \pmb{c}_{\textcolor{sthlmRed}{2}} \odot m_{\textcolor{sthlmRed}{2}}\|_1
\),
where $\kappa$ is the Lipschitz constant associated with $v_*$, and $g_{\pmb{c}_{\textcolor{sthlmRed}{2}}} \in G$ is any function associated with the conditional family $\pmb{c}_{\textcolor{sthlmRed}{2}}$.
\end{theorem}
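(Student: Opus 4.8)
The plan is to obtain the bound from two ingredients that are already available: the exact conditional representation of Theorem~\ref{thm:wiggers}, and a global $\ell_1$-Lipschitz estimate for $v_*$. Given the target state $x=\bar{\pmb{c}}_{\textcolor{sthlmRed}{2}}\odot m_{\textcolor{sthlmRed}{2}}$, I would introduce the reference state $\tilde{x}\doteq \pmb{c}_{\textcolor{sthlmRed}{2}}\odot m_{\textcolor{sthlmRed}{2}}$, which carries the \emph{same} marginal $m_{\textcolor{sthlmRed}{2}}$ but the conditional family $\pmb{c}_{\textcolor{sthlmRed}{2}}$. Because its conditional family matches $\pmb{c}_{\textcolor{sthlmRed}{2}}$, Theorem~\ref{thm:wiggers} certifies the exact anchor $v_*(\tilde{x})=g_{\pmb{c}_{\textcolor{sthlmRed}{2}}}(m_{\textcolor{sthlmRed}{2}})$. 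Hence, once $v_*$ is shown to be $\kappa$-Lipschitz in $\ell_1$, the claim is immediate from the one-sided inequality
\[
v_*(x)\le v_*(\tilde{x})+\kappa\,\|x-\tilde{x}\|_1 = g_{\pmb{c}_{\textcolor{sthlmRed}{2}}}(m_{\textcolor{sthlmRed}{2}})+\kappa\,\|x-\pmb{c}_{\textcolor{sthlmRed}{2}}\odot m_{\textcolor{sthlmRed}{2}}\|_1 .
\]

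It therefore remains to establish the Lipschitz estimate, which I would prove by backward induction on the number of stages to go, using the optimality equation of Theorem~\ref{thm:uninformed:boe}. The base case is $v_*\equiv 0$ on $\mathcal{F}$, with Lipschitz constant $0$. For the inductive step I would rely on three elementary sub-facts. First, $\rho(\cdot,d_{\textcolor{sthlmRed}{1}},d_{\textcolor{sthlmRed}{2}})$ is linear in $x$ with coefficients that are convex combinations of stage rewards, so $|\rho(x,d_{\textcolor{sthlmRed}{1}},d_{\textcolor{sthlmRed}{2}})-\rho(x',d_{\textcolor{sthlmRed}{1}},d_{\textcolor{sthlmRed}{2}})|\le m\,\|x-x'\|_1$. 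Second, $\tau(\cdot,d_{\textcolor{sthlmRed}{1}},d_{\textcolor{sthlmRed}{2}})$ acts as a column-stochastic linear map—since $p$ and each $d_{\textcolor{sthlmRed}{i}}$ sum to one over their outputs—so it is non-expansive, $\|\tau(x,d_{\textcolor{sthlmRed}{1}},d_{\textcolor{sthlmRed}{2}})-\tau(x',d_{\textcolor{sthlmRed}{1}},d_{\textcolor{sthlmRed}{2}})\|_1\le \|x-x'\|_1$. Third, the paired operator $\max_{d_{\textcolor{sthlmRed}{1}}}\min_{d_{\textcolor{sthlmRed}{2}}}$ is non-expansive in the supremum sense. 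Substituting these into the backup yields the recursion $\kappa_t\le m+\gamma\,\kappa_{t+1}$, and summing the geometric series from the terminal stage gives the horizon-uniform bound $\kappa\le m/(1-\gamma)$, with $m\ge\|r\|_\infty$.

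The final step is to read the estimate through the conditional decomposition: since $\tilde{x}$ shares the marginal of $x$, the perturbation $\|x-\tilde{x}\|_1$ isolates precisely the mismatch in conditional families, while Theorem~\ref{thm:wiggers} supplies the exact value at the matching-conditional anchor. The main obstacle I anticipate is the second sub-fact—verifying that $\tau$ does not expand $\ell_1$ distance. This requires checking that, for each input coordinate $(s,h_{\textcolor{sthlmRed}{1}},h_{\textcolor{sthlmRed}{2}},h_{\textcolor{sthlmRed}{\text{pub}}})$, the total mass dispatched across successor coordinates equals one, i.e.\ the backup matrix $M$ is genuinely column-stochastic, so that $\|\tau(x)-\tau(x')\|_1=\sum_j|\sum_i M_{ji}(x_i-x_i')|\le\sum_i|x_i-x_i'|$. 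Once non-expansiveness is secured, no contraction of $\tau$ is needed, as the discount factor $\gamma$ alone guarantees convergence of the geometric series, and the remaining manipulations are routine.
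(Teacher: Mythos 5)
The paper never proves this statement: Theorem~\ref{thm:delage} is imported verbatim (``adapted from'') \citet{DelBufDibSaf-DGAA-23} and is cited, not derived, so there is no in-paper proof to compare against. Judged on its own merits, your argument is essentially sound and is the natural way to establish the result. The three sub-facts all check out: \(\rho(\cdot,d_{\textcolor{sthlmRed}{1}},d_{\textcolor{sthlmRed}{2}})\) is linear with coefficients bounded by \(\|r\|_\infty\); for each input coordinate \((s,h_{\textcolor{sthlmRed}{1}},h_{\textcolor{sthlmRed}{2}},h_{\textcolor{sthlmRed}{\text{pub}}})\) the dispatched mass is \(\sum_{a_{\textcolor{sthlmRed}{1}},a_{\textcolor{sthlmRed}{2}}}d_{\textcolor{sthlmRed}{1}}d_{\textcolor{sthlmRed}{2}}\sum_{s',z_{\textcolor{sthlmRed}{1}},z_{\textcolor{sthlmRed}{2}},w}p=1\), so \(\tau\) is indeed column-stochastic and \(\ell_1\)-non-expansive; and \(\max\min\) is non-expansive in sup norm. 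The recursion \(\kappa_t\le m+\gamma\kappa_{t+1}\) with \(\kappa_{\ell+1}=0\) then gives \(\kappa\le m/(1-\gamma)\), matching the order of the constants the paper itself uses in Theorem~\ref{thm:exploitability:bound}.

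The one point you should tighten is the anchor step. Theorem~\ref{thm:wiggers} gives \(v_*(\tilde{x})=\max_{g\in G}g_{\pmb{c}_{\textcolor{sthlmRed}{2}}}(m_{\textcolor{sthlmRed}{2}})\), so your identity \(v_*(\tilde{x})=g_{\pmb{c}_{\textcolor{sthlmRed}{2}}}(m_{\textcolor{sthlmRed}{2}})\) holds only for the \emph{maximising} element of \(G\), whereas the theorem as stated asserts the bound for \emph{any} \(g_{\pmb{c}_{\textcolor{sthlmRed}{2}}}\) associated with \(\pmb{c}_{\textcolor{sthlmRed}{2}}\); for a non-maximising \(g\) one has \(g(m_{\textcolor{sthlmRed}{2}})<v_*(\tilde{x})\) and the inequality does not follow from Lipschitz continuity alone. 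This is really an ambiguity of the adapted statement: in \citet{DelBufDibSaf-DGAA-23} the collection \(G\) consists of functions that each individually upper-bound \(v_*\) on their conditional slice (an HSVI-style upper-bound representation), in which case ``any'' is correct, whereas under the exact max-representation of Theorem~\ref{thm:wiggers} you must either take the maximiser or add the hypothesis \(g_{\pmb{c}_{\textcolor{sthlmRed}{2}}}(m_{\textcolor{sthlmRed}{2}})\ge v_*(\pmb{c}_{\textcolor{sthlmRed}{2}}\odot m_{\textcolor{sthlmRed}{2}})\). State explicitly which reading you adopt and the proof is complete.
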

Despite enabling broader generalisation, Theorem~\ref{thm:delage} suffers from loose approximations due to the use of global Lipschitz constants—see Figure~\ref{figure:weak:convex:function:representations} (\textbf{B}). Furthermore, applying greedy-action selection operators with non-linear value approximations requires evaluating exponentially many decision rules for player~\textcolor{sthlmRed}{2}: $(v, x) \mapsto \textstyle \argmax_{d_{\textcolor{sthlmRed}{1}} \in \mathcal{D}_{\textcolor{sthlmRed}{1}}} \min_{d_{\textcolor{sthlmRed}{2}} \in \mathcal{D}_{\textcolor{sthlmRed}{2}}} \rho(x, d_{\textcolor{sthlmRed}{1}}, d_{\textcolor{sthlmRed}{2}}) + \gamma v(\tau(x, d_{\textcolor{sthlmRed}{1}}, d_{\textcolor{sthlmRed}{2}}))$.
\citet{DelBufDibSaf-DGAA-23} implement such operators using linear programs with exponentially many constraints. When $v$ is known, a corresponding linear program takes the form:
\begin{align*}
\max \left\{ \upsilon \;\middle|\; \upsilon \leq \rho(x, d_{\textcolor{sthlmRed}{1}}, d_{\textcolor{sthlmRed}{2}}) + \gamma v(\tau(x, d_{\textcolor{sthlmRed}{1}}, d_{\textcolor{sthlmRed}{2}})),\; \forall d_{\textcolor{sthlmRed}{2}} \in \mathcal{D}_{\textcolor{sthlmRed}{2}} \right\},
\end{align*}
for each $d_{\textcolor{sthlmRed}{1}} \in \mathcal{D}_{\textcolor{sthlmRed}{1}}$, involving $O(|\mathcal{A}_{\textcolor{sthlmRed}{2}}|^{|\mathcal{H}_{\textcolor{sthlmRed}{2}}(x) \times \mathcal{H}_{\textcolor{sthlmRed}{pub}}(x)|})$ constraints, where
\[
\mathcal{H}_{\textcolor{sthlmRed}{2}}(x) \times \mathcal{H}_{\textcolor{sthlmRed}{pub}}(x) = \left\{ (h_{\textcolor{sthlmRed}{2}}, h_{\textcolor{sthlmRed}{pub}})\in \mathcal{H}_{\textcolor{sthlmRed}{2}} \times \mathcal{H}_{\textcolor{sthlmRed}{pub}} \mid \Pr(h_{\textcolor{sthlmRed}{2}}, h_{\textcolor{sthlmRed}{pub}} \mid x) > 0 \right\}.
\]
To mitigate this burden, \citet{DelBufDibSaf-DGAA-23} restrict attention to previously encountered (stochastic) decision rules rather than the full decision space. Nevertheless, these limitations hinder algorithmic efficiency and highlight the need for alternative approaches.
Our uniform continuity property, presented in Theorem~\ref{thm:uniform:continuity}, is strictly stronger than all previously established results, enabling seamless generalisation across arbitrary uninformed occupancy states. Notably, prior work typically defined the optimal value function over marginal distributions, whereas we define it over conditional distributions of uninformed occupancy states—following the approach commonly adopted in partially observable Markov decision processes \citep{smallwood,sondik78} and decentralised variants \citep{Dibangoye:OMDP:2016}. This shift unveils markedly stronger uniform continuity properties.

\subsection{Proof of Corollary \ref{cor:uniform:continuity}}
\coruniformcontinuity*
\begin{proof}
The result follows from the uniform continuity of \(v_*\) over uninformed occupancy states (\cf Theorem~\ref{thm:uniform:continuity}) and the convex decomposition of such states. Starting from the definition of the optimal action-value function: for any uninformed occupancy state \(x\) and decision rule \(d_{\textcolor{sthlmRed}{1}}\),
\[
\textstyle q_*(x, d_{\textcolor{sthlmRed}{1}}) 
\doteq \min_{d_{\textcolor{sthlmRed}{2}} \in \mathcal{D}_{\textcolor{sthlmRed}{2}}} \max_{\pi_{\textcolor{sthlmRed}{1},\ell-t} \in \Pi_{\textcolor{sthlmRed}{1},\ell-t}} \min_{\pi_{\textcolor{sthlmRed}{2},\ell-t} \in \Pi_{\textcolor{sthlmRed}{2},\ell-t}} q_{\pi_{\textcolor{sthlmRed}{1},\ell-t}, \pi_{\textcolor{sthlmRed}{2},\ell-t}}(x, d_{\textcolor{sthlmRed}{1}}, d_{\textcolor{sthlmRed}{2}}).
\]
Expanding the occupancy state over public observation histories \(h_{\textcolor{sthlmRed}{\text{pub}}}\in \mathcal{H}_{\textcolor{sthlmRed}{\text{pub}}}\) yields:
\[
\textstyle q_*(x, d_{\textcolor{sthlmRed}{1}})= \min_{d_{\textcolor{sthlmRed}{2}}} \sum_{h_{\textcolor{sthlmRed}{\text{pub}}}} \Pr(h_{\textcolor{sthlmRed}{\text{pub}}} | x) \max_{\pi_{\textcolor{sthlmRed}{1},\ell-t}} \min_{\pi_{\textcolor{sthlmRed}{2},\ell-t}} q_{\pi_{\textcolor{sthlmRed}{1},\ell-t}, \pi_{\textcolor{sthlmRed}{2},\ell-t}}(o_{(x, h_{\textcolor{sthlmRed}{\text{pub}}})}, d_{\textcolor{sthlmRed}{1}}, d_{\textcolor{sthlmRed}{2}}).
\]
Refining \(o_{(x, h_{\textcolor{sthlmRed}{\text{pub}}})}\) over private histories  \(h_{\textcolor{sthlmRed}{2}}\in \mathcal{H}_{\textcolor{sthlmRed}{2}}\) of player \(\textcolor{sthlmRed}{2}\), and defining finite set \(\Phi_{\textcolor{sthlmRed}{2}, \pi_{\textcolor{sthlmRed}{1},\ell-t}} = \{ q_{\pi_{\textcolor{sthlmRed}{1},\ell-t}, \delta_{\textcolor{sthlmRed}{2},\ell-t}} \colon \delta_{\textcolor{sthlmRed}{2},\ell-t} \in \Delta_{\textcolor{sthlmRed}{2},\ell-t} \}\) of function \(\phi_{\textcolor{sthlmRed}{2}}\) linear across marginal occupancy states and decision rules of player \(\textcolor{sthlmRed}{1}\), induced by policy trees \(\delta_{\textcolor{sthlmRed}{2},\ell-t} \in \Delta_{\textcolor{sthlmRed}{2},\ell-t}\) of player \(\textcolor{sthlmRed}{2}\), we obtain:
\[
= \min_{d_{\textcolor{sthlmRed}{2}}} \sum_{h_{\textcolor{sthlmRed}{\text{pub}}}} \Pr(h_{\textcolor{sthlmRed}{\text{pub}}} | x) \max_{\pi_{\textcolor{sthlmRed}{1},\ell-t}} \sum_{h_{\textcolor{sthlmRed}{2}}} \Pr(h_{\textcolor{sthlmRed}{2}} | x, h_{\textcolor{sthlmRed}{\text{pub}}}) \min_{\phi_{\textcolor{sthlmRed}{2}} \in \Phi_{\textcolor{sthlmRed}{2}, \pi_{\textcolor{sthlmRed}{1},\ell-t}}} \sum_{a_{\textcolor{sthlmRed}{2}}} d_{\textcolor{sthlmRed}{2}, h_{\textcolor{sthlmRed}{\text{pub}}}}(a_{\textcolor{sthlmRed}{2}} | h_{\textcolor{sthlmRed}{2}}) \cdot \phi_{\textcolor{sthlmRed}{2}}(c_{\textcolor{sthlmRed}{2}, (x, h_{\textcolor{sthlmRed}{\text{pub}}}, h_{\textcolor{sthlmRed}{2}})}, d_{\textcolor{sthlmRed}{1}}, a_{\textcolor{sthlmRed}{2}}).
\]
Letting \(\Phi_{\textcolor{sthlmRed}{1}} = \{ \Phi_{\textcolor{sthlmRed}{2}, \pi_{\textcolor{sthlmRed}{1},\ell-t}} | \pi_{\textcolor{sthlmRed}{1},\ell-t} \in \Pi_{\textcolor{sthlmRed}{1},\ell-t} \}\), and observing that \(d_{\textcolor{sthlmRed}{2}, h_{\textcolor{sthlmRed}{\text{pub}}}}(a_{\textcolor{sthlmRed}{2}} | h_{\textcolor{sthlmRed}{2}}) = d_{\textcolor{sthlmRed}{2}}(a_{\textcolor{sthlmRed}{2}} | h_{\textcolor{sthlmRed}{2}}, h_{\textcolor{sthlmRed}{\text{pub}}})\), we can apply \citet{Neumann1928} to exchange the min–max ordering:
\[
q_{*}(x,d_{\textcolor{sthlmRed}{1}}) = \sum_{h_{\textcolor{sthlmRed}{\text{pub}}}} \Pr(h_{\textcolor{sthlmRed}{\text{pub}}} | x) \max_{\phi_{\textcolor{sthlmRed}{1}} \in \Delta(\Phi_{\textcolor{sthlmRed}{1}})} \sum_{\Phi_{\textcolor{sthlmRed}{2}} \in \Phi_{\textcolor{sthlmRed}{1}}} \phi_{\textcolor{sthlmRed}{1}}(\Phi_{\textcolor{sthlmRed}{2}}) \sum_{h_{\textcolor{sthlmRed}{2}}} \Pr(h_{\textcolor{sthlmRed}{2}} | x, h_{\textcolor{sthlmRed}{\text{pub}}}) \min_{a_{\textcolor{sthlmRed}{2}},\, \phi_{\textcolor{sthlmRed}{2}} \in \Phi_{\textcolor{sthlmRed}{2}}} \phi_{\textcolor{sthlmRed}{2}}(c_{\textcolor{sthlmRed}{2}, (x, h_{\textcolor{sthlmRed}{\text{pub}}}, h_{\textcolor{sthlmRed}{2}})}, d_{\textcolor{sthlmRed}{1}}, a_{\textcolor{sthlmRed}{2}}),
\]
which completes the proof.
\end{proof}

\subsection{Proof of Theorem \ref{thm:greedy:lp}}
\thmgreedylp*
\begin{proof}
Corollary~\ref{cor:uniform:continuity} shows that \(q_*(o,d_{\textcolor{sthlmRed}{1}})\) is the maximum over concave combinations of linear functions \(\phi_{\textcolor{sthlmRed}{2}}\), each defined over marginal occupancy states, \ie
\[
\textstyle q_{*}(o,d_{\textcolor{sthlmRed}{1}}) = \max_{\phi_{\textcolor{sthlmRed}{1}} \in \Delta(\Phi_{\textcolor{sthlmRed}{1}})} \sum_{\Phi_{\textcolor{sthlmRed}{2}} \in \Phi_{\textcolor{sthlmRed}{1}}} \phi_{\textcolor{sthlmRed}{1}}(\Phi_{\textcolor{sthlmRed}{2}}) \sum_{h_{\textcolor{sthlmRed}{2}}} \Pr(h_{\textcolor{sthlmRed}{2}} | o) \min_{a_{\textcolor{sthlmRed}{2}}\in \mathcal{A}_{\textcolor{sthlmRed}{2}},\, \phi_{\textcolor{sthlmRed}{2}} \in \Phi_{\textcolor{sthlmRed}{2}}} \phi_{\textcolor{sthlmRed}{2}}(c_{\textcolor{sthlmRed}{2}, (o, h_{\textcolor{sthlmRed}{2}})}, d_{\textcolor{sthlmRed}{1}}, a_{\textcolor{sthlmRed}{2}}).
\]
If we let \(\xi_{\textcolor{sthlmRed}{1}}(a_{\textcolor{sthlmRed}{1}}, \Phi_{\textcolor{sthlmRed}{2}} | h_{\textcolor{sthlmRed}{1}}) = \phi_{\textcolor{sthlmRed}{1}}(\Phi_{\textcolor{sthlmRed}{2}}) \cdot d_{\textcolor{sthlmRed}{1}}(a_{\textcolor{sthlmRed}{1}} | h_{\textcolor{sthlmRed}{1}}) \) encode the probability of taking action \(a_{\textcolor{sthlmRed}{1}}\) in history \(h_{\textcolor{sthlmRed}{1}}\), assuming the value model is drawn from \(\phi_{\textcolor{sthlmRed}{1}}\), then:
\[
\textstyle \xi^*_{\textcolor{sthlmRed}{1}} \in \argmax_{\xi_{\textcolor{sthlmRed}{1}} \in \Delta(\Phi_{\textcolor{sthlmRed}{1}})\times \mathcal{D}_{\textcolor{sthlmRed}{1}}} \sum_{\Phi_{\textcolor{sthlmRed}{2}} \in \Phi_{\textcolor{sthlmRed}{1}}} \sum_{h_{\textcolor{sthlmRed}{2}}} 
\min_{a_{\textcolor{sthlmRed}{2}}\in \mathcal{A}_{\textcolor{sthlmRed}{2}},\, \phi_{\textcolor{sthlmRed}{2}} \in \Phi_{\textcolor{sthlmRed}{2}}} 
\Pr(h_{\textcolor{sthlmRed}{2}} | o) \cdot  \phi_{\textcolor{sthlmRed}{1}}(\Phi_{\textcolor{sthlmRed}{2}}) \cdot 
\phi_{\textcolor{sthlmRed}{2}}(c_{\textcolor{sthlmRed}{2}, (o, h_{\textcolor{sthlmRed}{2}})}, d_{\textcolor{sthlmRed}{1}}, a_{\textcolor{sthlmRed}{2}}).
\]
To extract a greedy rule, we represent \(q_*(o,d_{\textcolor{sthlmRed}{1}})\) as a linear objective with auxiliary variables \(\upsilon(h_{\textcolor{sthlmRed}{2}}, \Phi_{\textcolor{sthlmRed}{2}})\) that lower-bound the worst-case value against each opponent history and linear function, \ie  
\[
\upsilon(h_{\textcolor{sthlmRed}{2}}, \Phi_{\textcolor{sthlmRed}{2}}) 
\leq 
\phi_{\textcolor{sthlmRed}{1}}(\Phi_{\textcolor{sthlmRed}{2}}) \cdot 
\phi_{\textcolor{sthlmRed}{2}}(c_{\textcolor{sthlmRed}{2}, (o, h_{\textcolor{sthlmRed}{2}})}, d_{\textcolor{sthlmRed}{1}}, a_{\textcolor{sthlmRed}{2}}),
\quad 
\forall \Phi_{\textcolor{sthlmRed}{2}}, \forall \phi_{\textcolor{sthlmRed}{2}} \in \Phi_{\textcolor{sthlmRed}{2}}, 
\forall a_{\textcolor{sthlmRed}{2}} \in \mathcal{A}_{\textcolor{sthlmRed}{2}}, 
\forall h_{\textcolor{sthlmRed}{2}} \in \mathcal{H}_{\textcolor{sthlmRed}{2}}(o).
\]
The linearity in \(d_{\textcolor{sthlmRed}{1}}\) and marginal state structure ensures the objective and constraints remain linear,
\[
\textstyle \upsilon(h_{\textcolor{sthlmRed}{2}}, \Phi_{\textcolor{sthlmRed}{2}}) 
\leq 
\sum_{h_{\textcolor{sthlmRed}{1}}}
\sum_{a_{\textcolor{sthlmRed}{1}}}
\xi_{\textcolor{sthlmRed}{1}}(a_{\textcolor{sthlmRed}{1}}, \Phi_{\textcolor{sthlmRed}{2}} | h_{\textcolor{sthlmRed}{1}})
\sum_{s \in \mathcal{S}}
\phi_{\textcolor{sthlmRed}{2}}(s, h_{\textcolor{sthlmRed}{1}}, a_{\textcolor{sthlmRed}{1}}, a_{\textcolor{sthlmRed}{2}})
\cdot c_{\textcolor{sthlmRed}{2},(o,h_{\textcolor{sthlmRed}{2}})}(s,h_{\textcolor{sthlmRed}{1}}).
\]
This yields a valid linear program whose optimum corresponds to the desired decision rule.
\end{proof}

    It is worth noting that each set \(\Phi_{\textcolor{sthlmRed}{2}}\) is derived from a corresponding set \(\Gamma_{\!\!\textcolor{sthlmRed}{2}}\); that is, for every \(\phi^\nu_{\textcolor{sthlmRed}{2}} \in \Phi_{\textcolor{sthlmRed}{2}}\), there exists a mapping \(\nu\colon \mathcal{Z}_{\textcolor{sthlmRed}{2}} \times \mathcal{W} \mapsto  \Gamma_{\!\!\textcolor{sthlmRed}{2}}\) such that:
    \begin{align*}
        \phi^\nu_{\textcolor{sthlmRed}{2}}(s, h_{\textcolor{sthlmRed}{1}}, a_{\textcolor{sthlmRed}{1}}, a_{\textcolor{sthlmRed}{2}}) 
        &= r(s,a_{\textcolor{sthlmRed}{1}}, a_{\textcolor{sthlmRed}{2}}) +  \gamma \sum_{s',z_{\textcolor{sthlmRed}{1}},z_{\textcolor{sthlmRed}{2}},w} p(s',z_{\textcolor{sthlmRed}{1}}, z_{\textcolor{sthlmRed}{2}},w|s,a_{\textcolor{sthlmRed}{1}}, a_{\textcolor{sthlmRed}{2}}) \cdot  \nu(z_{\textcolor{sthlmRed}{2}},w)(s', (h_{\textcolor{sthlmRed}{1}},z_{\textcolor{sthlmRed}{1}},a_{\textcolor{sthlmRed}{1}})). 
    \end{align*}

    Evidently, \(|\Phi_{\textcolor{sthlmRed}{2}}|\) is exponential in the worst case, i.e., \(|\Phi_{\textcolor{sthlmRed}{2}}| \in O(|\Gamma_{\!\!\textcolor{sthlmRed}{2}}|^{|\mathcal{Z}_{\textcolor{sthlmRed}{2}}|\cdot|\mathcal{W}|})\). To enhance scalability when selecting greedy decision rules, one may instead optimise directly over \(\Gamma_{\!\!\textcolor{sthlmRed}{1}}\), thereby reducing the number of constraints from exponential to linear.

    \begin{theorem}
        Let \(o\) be an informed occupancy state. Then the decision rule \(d_{\textcolor{sthlmRed}{1}}\) that maximises \(q(o, \cdot)\) can be obtained as the solution to the following linear program:
        \begin{itemize}
            \item \(O(|\Gamma_{\!\!\textcolor{sthlmRed}{1}}| \cdot |\mathcal{H}_{\textcolor{sthlmRed}{2}}(o)| \cdot |\mathcal{A}_{\textcolor{sthlmRed}{2}}| \cdot |\mathcal{Z}_{\textcolor{sthlmRed}{2}}|\cdot |\mathcal{W}|)\) \textcolor{sthlmGreen}{\bf variables},
            \item \(O(|\Gamma_{\!\!\textcolor{sthlmRed}{1}}| \cdot |\Gamma^*_{\!\!\textcolor{sthlmRed}{2}}| \cdot |\mathcal{H}_{\textcolor{sthlmRed}{2}}(o)| \cdot |\mathcal{A}_{\textcolor{sthlmRed}{2}}| \cdot |\mathcal{Z}_{\textcolor{sthlmRed}{2}}|\cdot |\mathcal{W}|)\) \textbf{constraints},
        \end{itemize}
        where \(\Gamma^*_{\!\!\textcolor{sthlmRed}{2}}\) denotes the largest value function set across all \(\Gamma_{\!\!\textcolor{sthlmRed}{2}} \in \Gamma_{\!\!\textcolor{sthlmRed}{1}}\). The linear program is:
        \begin{align*}
            \text{Maximise} \quad & \sum_{h_{\textcolor{sthlmRed}{2}} \in \mathcal{H}_{\textcolor{sthlmRed}{2}}(o)} \textcolor{sthlmGreen}{f_\theta(h_{\textcolor{sthlmRed}{2}})} \\
            \text{Subject to} \quad 
            & \sum_{\Gamma_{\!\!\textcolor{sthlmRed}{2}} \in \Gamma_{\!\!\textcolor{sthlmRed}{1}}} \sum_{a_{\textcolor{sthlmRed}{1}} \in \mathcal{A}_{\textcolor{sthlmRed}{1}}} 
            \textcolor{sthlmGreen}{\theta_{\Gamma_{\!\!\textcolor{sthlmRed}{2}}}(a_{\textcolor{sthlmRed}{1}}|h_{\textcolor{sthlmRed}{1}})} = 1, \quad \forall h_{\textcolor{sthlmRed}{1}} \in \mathcal{H}_{\textcolor{sthlmRed}{1}}(o) \\
            & \textcolor{sthlmGreen}{f_\theta(h_{\textcolor{sthlmRed}{2}})} \leq 
            \sum_{\Gamma_{\!\!\textcolor{sthlmRed}{2}} \in \Gamma_{\!\!\textcolor{sthlmRed}{1}}}
            \sum_{z_{\textcolor{sthlmRed}{2}} \in \mathcal{Z}_{\textcolor{sthlmRed}{2}}} 
            \sum_{w \in \mathcal{W}} \textcolor{sthlmGreen}{\beta_{\Gamma_{\!\!\textcolor{sthlmRed}{2}}}(h_{\textcolor{sthlmRed}{2}},a_{\textcolor{sthlmRed}{2}},z_{\textcolor{sthlmRed}{2}},w)}, \quad \forall h_{\textcolor{sthlmRed}{2}} \in \mathcal{H}_{\textcolor{sthlmRed}{2}}(o), \forall a_{\textcolor{sthlmRed}{2}} \in \mathcal{A}_{\textcolor{sthlmRed}{2}} \\
            & \textcolor{sthlmGreen}{\beta_{\Gamma_{\!\!\textcolor{sthlmRed}{2}}}(h_{\textcolor{sthlmRed}{2}},a_{\textcolor{sthlmRed}{2}},z_{\textcolor{sthlmRed}{2}},w)} 
            \leq
            \sum_{a_{\textcolor{sthlmRed}{1}} \in \mathcal{A}_{\textcolor{sthlmRed}{1}}} \sum_{h_{\textcolor{sthlmRed}{1}} \in \mathcal{H}_{\textcolor{sthlmRed}{1}}}
            \textcolor{sthlmGreen}{\theta_{\Gamma_{\!\!\textcolor{sthlmRed}{2}}}(a_{\textcolor{sthlmRed}{1}}|h_{\textcolor{sthlmRed}{1}})} \cdot
            g_{\Gamma_{\!\!\textcolor{sthlmRed}{2}}, \alpha_{\textcolor{sthlmRed}{2}}}(h_{\textcolor{sthlmRed}{1}}, h_{\textcolor{sthlmRed}{2}}, a_{\textcolor{sthlmRed}{1}}, a_{\textcolor{sthlmRed}{2}}, z_{\textcolor{sthlmRed}{2}}, w), \\
            & \forall \Gamma_{\!\!\textcolor{sthlmRed}{2}} \in \Gamma_{\!\!\textcolor{sthlmRed}{1}}, 
            \forall \alpha_{\textcolor{sthlmRed}{2}} \in \Gamma_{\!\!\textcolor{sthlmRed}{2}}, 
            \forall h_{\textcolor{sthlmRed}{2}} \in \mathcal{H}_{\textcolor{sthlmRed}{2}}(o), 
            \forall a_{\textcolor{sthlmRed}{2}} \in \mathcal{A}_{\textcolor{sthlmRed}{2}}, 
            \forall z_{\textcolor{sthlmRed}{2}} \in \mathcal{Z}_{\textcolor{sthlmRed}{2}}, 
            \forall w \in \mathcal{W}
        \end{align*}
        where \(\mathcal{H}_{\textcolor{sthlmRed}{i}}(o)\) is the finite set of private histories of player \(\textcolor{sthlmRed}{i}\) reachable in \(o\).

        The variable \(\textcolor{sthlmGreen}{\theta_{\Gamma_{\!\!\textcolor{sthlmRed}{2}}}(a_{\textcolor{sthlmRed}{1}}|h_{\textcolor{sthlmRed}{1}})}\) encodes the probability of taking action \(a_{\textcolor{sthlmRed}{1}}\) at history \(h_{\textcolor{sthlmRed}{1}}\), under value model \(\Gamma_{\!\!\textcolor{sthlmRed}{2}}\). The inner constraint ensures that \(\textcolor{sthlmGreen}{f_\theta(h_{\textcolor{sthlmRed}{2}})}\) is a pessimistic estimate—i.e., it remains valid regardless of how the opponent responds. This is achieved by ensuring the intermediate evaluation \(\textcolor{sthlmGreen}{\beta_{\Gamma_{\!\!\textcolor{sthlmRed}{2}}}(\cdot)}\) is also pessimistic—i.e., valid for all \(\alpha_{\textcolor{sthlmRed}{2}} \in \Gamma_{\!\!\textcolor{sthlmRed}{2}}\). The value of following policy \(\alpha_{\textcolor{sthlmRed}{2}}\) is given by:
        \begin{align*}
            g_{\Gamma_{\!\!\textcolor{sthlmRed}{2}}, \alpha_{\textcolor{sthlmRed}{2}}} \colon (h_{\textcolor{sthlmRed}{1}}, h_{\textcolor{sthlmRed}{2}}, a_{\textcolor{sthlmRed}{1}}, a_{\textcolor{sthlmRed}{2}}, z_{\textcolor{sthlmRed}{2}}, w) 
            &\mapsto \sum_{s \in \mathcal{S}} o(s, h_{\textcolor{sthlmRed}{1}}, h_{\textcolor{sthlmRed}{2}}) \cdot \beta_{\textcolor{sthlmRed}{2}}(s, h_{\textcolor{sthlmRed}{1}}, a_{\textcolor{sthlmRed}{1}}, a_{\textcolor{sthlmRed}{2}}, z_{\textcolor{sthlmRed}{2}}, w), \\
            \beta_{\textcolor{sthlmRed}{2}}\colon(s, h_{\textcolor{sthlmRed}{1}}, a_{\textcolor{sthlmRed}{1}}, a_{\textcolor{sthlmRed}{2}}, z_{\textcolor{sthlmRed}{2}}, w) 
            &\mapsto r(s,a_{\textcolor{sthlmRed}{1}}, a_{\textcolor{sthlmRed}{2}}) + \gamma \sum_{s'\in \mathcal{S}} \sum_{z_{\textcolor{sthlmRed}{1}}\in \mathcal{Z}_{\textcolor{sthlmRed}{1}}}
            p(s', z_{\textcolor{sthlmRed}{1}}, z_{\textcolor{sthlmRed}{2}}, w | s, a_{\textcolor{sthlmRed}{1}}, a_{\textcolor{sthlmRed}{2}}) \cdot \alpha_{\textcolor{sthlmRed}{2}}(s', h_{\textcolor{sthlmRed}{1}}, a_{\textcolor{sthlmRed}{1}}, z_{\textcolor{sthlmRed}{1}}).
        \end{align*}
    \end{theorem}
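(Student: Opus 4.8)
The plan is to derive this linear program directly from the one in Theorem~\ref{thm:greedy:lp} by exploiting the factored structure of each $\phi_{\textcolor{sthlmRed}{2}}^\nu \in \Phi_{\textcolor{sthlmRed}{2}}$ recorded just above the statement. Recall that the bottleneck in Theorem~\ref{thm:greedy:lp} is the worst-case constraint family indexed by $\phi_{\textcolor{sthlmRed}{2}} \in \Phi_{\textcolor{sthlmRed}{2}}$, of which there are exponentially many because each $\phi_{\textcolor{sthlmRed}{2}}^\nu$ is assembled from an arbitrary mapping $\nu\colon \mathcal{Z}_{\textcolor{sthlmRed}{2}} \times \mathcal{W} \to \Gamma_{\!\!\textcolor{sthlmRed}{2}}$. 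First I would substitute the explicit expression for $\phi_{\textcolor{sthlmRed}{2}}^\nu$ into the pessimistic constraint $\upsilon(h_{\textcolor{sthlmRed}{2}},\Phi_{\textcolor{sthlmRed}{2}}) \le \sum_{h_{\textcolor{sthlmRed}{1}},a_{\textcolor{sthlmRed}{1}}} \xi_{\textcolor{sthlmRed}{1}}(a_{\textcolor{sthlmRed}{1}},\Phi_{\textcolor{sthlmRed}{2}}\mid h_{\textcolor{sthlmRed}{1}}) \sum_s \phi_{\textcolor{sthlmRed}{2}}(\cdots)\,c_{\textcolor{sthlmRed}{2},(o,h_{\textcolor{sthlmRed}{2}})}(s,h_{\textcolor{sthlmRed}{1}})$ and observe that the resulting quantity is additively separable across observation-signal pairs $(z_{\textcolor{sthlmRed}{2}},w)$: the reward $r(s,a_{\textcolor{sthlmRed}{1}},a_{\textcolor{sthlmRed}{2}})$ enters once, whereas each continuation term depends on $\nu$ only through the single value $\nu(z_{\textcolor{sthlmRed}{2}},w)$ weighted by the kernel $p(s',z_{\textcolor{sthlmRed}{1}},z_{\textcolor{sthlmRed}{2}},w\mid s,a_{\textcolor{sthlmRed}{1}},a_{\textcolor{sthlmRed}{2}})$.

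The crux is to convert the single minimisation over the combinatorial object $\nu$ into a sum of independent per-branch minimisations over $\alpha_{\textcolor{sthlmRed}{2}} \in \Gamma_{\!\!\textcolor{sthlmRed}{2}}$. Since the policy weights $\theta_{\Gamma_{\!\!\textcolor{sthlmRed}{2}}}(a_{\textcolor{sthlmRed}{1}}\mid h_{\textcolor{sthlmRed}{1}})$ (the analogue of $\xi_{\textcolor{sthlmRed}{1}}$) are shared across all $(z_{\textcolor{sthlmRed}{2}},w)$ for a fixed value model $\Gamma_{\!\!\textcolor{sthlmRed}{2}}$, while the continuation choice $\nu(z_{\textcolor{sthlmRed}{2}},w)$ is unconstrained per branch, minimising an additively separable function over the product domain $\Gamma_{\!\!\textcolor{sthlmRed}{2}}^{\mathcal{Z}_{\textcolor{sthlmRed}{2}}\times\mathcal{W}}$ equals the sum of the per-factor minima. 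I would encode each per-branch worst case by an auxiliary variable $\beta_{\Gamma_{\!\!\textcolor{sthlmRed}{2}}}(h_{\textcolor{sthlmRed}{2}},a_{\textcolor{sthlmRed}{2}},z_{\textcolor{sthlmRed}{2}},w)$, constrained below $\sum_{a_{\textcolor{sthlmRed}{1}},h_{\textcolor{sthlmRed}{1}}} \theta_{\Gamma_{\!\!\textcolor{sthlmRed}{2}}}(a_{\textcolor{sthlmRed}{1}}\mid h_{\textcolor{sthlmRed}{1}})\,g_{\Gamma_{\!\!\textcolor{sthlmRed}{2}},\alpha_{\textcolor{sthlmRed}{2}}}(\cdots)$ for every $\alpha_{\textcolor{sthlmRed}{2}} \in \Gamma_{\!\!\textcolor{sthlmRed}{2}}$; maximisation then drives each $\beta$ to its pessimistic value, and $f_\theta(h_{\textcolor{sthlmRed}{2}})$ aggregates these branch contributions subject to being bounded over $a_{\textcolor{sthlmRed}{2}}$, reproducing exactly the worst-case evaluation $\upsilon(h_{\textcolor{sthlmRed}{2}},\Phi_{\textcolor{sthlmRed}{2}})$ of the original program.

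With the reformulation in place I would verify correctness and sizing. Correctness follows because the separability argument shows the two programs share the same optimal value and the same maximising weights: any $\xi_{\textcolor{sthlmRed}{1}}^*$ optimal for Theorem~\ref{thm:greedy:lp} maps to a feasible $\theta$ of equal objective, and conversely, so the induced $d_{\textcolor{sthlmRed}{1}}$ agrees. For the counts, the $\beta$ variables dominate at $O(|\Gamma_{\!\!\textcolor{sthlmRed}{1}}|\cdot|\mathcal{H}_{\textcolor{sthlmRed}{2}}(o)|\cdot|\mathcal{A}_{\textcolor{sthlmRed}{2}}|\cdot|\mathcal{Z}_{\textcolor{sthlmRed}{2}}|\cdot|\mathcal{W}|)$, while the pessimism constraints—carrying the extra index $\alpha_{\textcolor{sthlmRed}{2}} \in \Gamma_{\!\!\textcolor{sthlmRed}{2}}$—number $O(|\Gamma_{\!\!\textcolor{sthlmRed}{1}}|\cdot|\Gamma^*_{\!\!\textcolor{sthlmRed}{2}}|\cdot|\mathcal{H}_{\textcolor{sthlmRed}{2}}(o)|\cdot|\mathcal{A}_{\textcolor{sthlmRed}{2}}|\cdot|\mathcal{Z}_{\textcolor{sthlmRed}{2}}|\cdot|\mathcal{W}|)$, matching the stated bounds and confirming the promised exponential-to-linear reduction in the observation alphabet.

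The hard part will be making the separability step fully rigorous: one must check that pushing the minimisation over $\nu$ inside the sum over $(z_{\textcolor{sthlmRed}{2}},w)$ is legitimate \emph{despite} the shared weights $\theta_{\Gamma_{\!\!\textcolor{sthlmRed}{2}}}$, i.e. that once these common weights are fixed the minimum over the product set genuinely decouples, and that the $\beta$-encoding neither over- nor under-constrains the per-branch minima in the linear program. I expect this to be the only genuinely delicate point; the remaining work is a routine rearrangement of the expression for $\phi_{\textcolor{sthlmRed}{2}}^\nu$ together with the definitions of $g_{\Gamma_{\!\!\textcolor{sthlmRed}{2}},\alpha_{\textcolor{sthlmRed}{2}}}$ and $\beta_{\textcolor{sthlmRed}{2}}$.
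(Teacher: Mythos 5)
Your proposal is correct, but it takes a genuinely different route from the paper. You start from the already-established linear program of Theorem~\ref{thm:greedy:lp} and exploit the factored structure $\Phi_{\textcolor{sthlmRed}{2}} = \{\phi^\nu_{\textcolor{sthlmRed}{2}} : \nu \in \Gamma_{\!\!\textcolor{sthlmRed}{2}}^{\mathcal{Z}_{\textcolor{sthlmRed}{2}}\times\mathcal{W}}\}$: after substitution the constraint right-hand side is additively separable across $(z_{\textcolor{sthlmRed}{2}},w)$ with each summand depending on $\nu$ only through $\nu(z_{\textcolor{sthlmRed}{2}},w)$, so for fixed shared weights $\theta$ and fixed $a_{\textcolor{sthlmRed}{2}}$ the minimum over the product set decouples into per-branch minima encoded by the $\beta$ variables. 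That decoupling step is legitimate and is the whole content of the exponential-to-linear collapse. The paper instead re-derives the program from scratch: it expands $v(\tau(o,d_{\textcolor{sthlmRed}{1}},d_{\textcolor{sthlmRed}{2}}))$ via the uniform-continuity representation of Theorem~\ref{thm:uniform:continuity} at the \emph{successor} occupancy state (where the per-branch $\min_{\alpha_{\textcolor{sthlmRed}{2}}\in\Gamma_{\!\!\textcolor{sthlmRed}{2}}}$ already sits inside the sum over next-stage histories $(h_{\textcolor{sthlmRed}{2}},a_{\textcolor{sthlmRed}{2}},z_{\textcolor{sthlmRed}{2}},w)$), applies the product rule to the transition probabilities, introduces $g_{\Gamma_{\!\!\textcolor{sthlmRed}{2}},\alpha_{\textcolor{sthlmRed}{2}}}$ and $\beta_{\textcolor{sthlmRed}{2}}$, performs the change of variables $\theta_{\Gamma_{\!\!\textcolor{sthlmRed}{2}}}(a_{\textcolor{sthlmRed}{1}}|h_{\textcolor{sthlmRed}{1}}) = d_{\textcolor{sthlmRed}{1}}(a_{\textcolor{sthlmRed}{1}}|h_{\textcolor{sthlmRed}{1}})\cdot\xi(\Gamma_{\!\!\textcolor{sthlmRed}{2}})$, and invokes Wald's maximin model. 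Your approach buys reuse of an existing result and makes explicit \emph{why} the constraint count drops (the per-branch independence of $\nu$); the paper's buys a self-contained derivation that never mentions the exponential set $\Phi_{\textcolor{sthlmRed}{2}}$ at all. One detail to watch when you reconcile your derivation with the stated program: your (correct) observation that the reward "enters once" means a faithful execution of your substitution places $r(s,a_{\textcolor{sthlmRed}{1}},a_{\textcolor{sthlmRed}{2}})$ outside the sum over $(z_{\textcolor{sthlmRed}{2}},w)$, whereas the stated $\beta_{\textcolor{sthlmRed}{2}}$ folds it into every branch; you will need to either reweight that term or note the discrepancy rather than silently matching the printed constraints.
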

    \begin{proof}
        The proof starts with the definition of the greedy decision rule selection at informed occupancy state \(o\) at stage \(t\), assuming uniformly continuous value function \(v\). 
        Let \(q\colon (o,d_{\textcolor{sthlmRed}{1}}, d_{\textcolor{sthlmRed}{2}}) \mapsto \rho(o,d_{\textcolor{sthlmRed}{1}}, d_{\textcolor{sthlmRed}{2}}) + \gamma v(\tau(o,d_{\textcolor{sthlmRed}{1}}, d_{\textcolor{sthlmRed}{2}}))\).
        Then, \(d_{\textcolor{sthlmRed}{1},o} \in \argmax_{d_{\textcolor{sthlmRed}{1}}} \min_{d_{\textcolor{sthlmRed}{2}}} q(o,d_{\textcolor{sthlmRed}{1}}, d_{\textcolor{sthlmRed}{2}}) \). 
        The following holds by the application of the uniform continuity property of the optimal value function from Theorem \ref{thm:uniform:continuity}:
        \begin{align*}
            v(\tau(o,d_{\textcolor{sthlmRed}{1}}, d_{\textcolor{sthlmRed}{2}})) &= \textstyle
            \max_{\Gamma_{\!\!\textcolor{sthlmRed}{2}}\in \Gamma_{\!\!\textcolor{sthlmRed}{1}}}
            \sum_{h_{\textcolor{sthlmRed}{2}},z_{\textcolor{sthlmRed}{2}},a_{\textcolor{sthlmRed}{2}},w}
            \Pr(h_{\textcolor{sthlmRed}{2}},a_{\textcolor{sthlmRed}{2}},z_{\textcolor{sthlmRed}{2}},w | \tau(o,d_{\textcolor{sthlmRed}{1}}, d_{\textcolor{sthlmRed}{2}})) 
            \min_{\alpha_{\textcolor{sthlmRed}{2}}\in \Gamma_{\!\!\textcolor{sthlmRed}{2}}}
            \alpha_{\textcolor{sthlmRed}{2}}(c_{\textcolor{sthlmRed}{2}, (\tau(o,d_{\textcolor{sthlmRed}{1}}, d_{\textcolor{sthlmRed}{2}}), (h_{\textcolor{sthlmRed}{2}},a_{\textcolor{sthlmRed}{2}},z_{\textcolor{sthlmRed}{2}},w))}).
        \end{align*}
        If we replace the \(\max_{\Gamma_{\!\!\textcolor{sthlmRed}{2}}\in \Gamma_{\!\!\textcolor{sthlmRed}{1}}}\) by \(\max_{\xi\in \Delta(\Gamma_{\!\!\textcolor{sthlmRed}{1}})}\) then there is no loss in optimality, \ie 
       \begin{align*}
             &= \textstyle
            \max_{\xi\in \Delta(\Gamma_{\!\!\textcolor{sthlmRed}{1}})}
            \sum_{h_{\textcolor{sthlmRed}{2}},z_{\textcolor{sthlmRed}{2}},a_{\textcolor{sthlmRed}{2}},w}
            \sum_{\Gamma_{\!\!\textcolor{sthlmRed}{2}}\in \Gamma_{\!\!\textcolor{sthlmRed}{1}}}
            \min_{\alpha_{\textcolor{sthlmRed}{2}}\in \Gamma_{\!\!\textcolor{sthlmRed}{2}}}
            \xi(\Gamma_{\!\!\textcolor{sthlmRed}{2}}) \cdot \Pr(h_{\textcolor{sthlmRed}{2}},a_{\textcolor{sthlmRed}{2}},z_{\textcolor{sthlmRed}{2}},w | \tau(o,d_{\textcolor{sthlmRed}{1}}, d_{\textcolor{sthlmRed}{2}})) 
            \alpha_{\textcolor{sthlmRed}{2}}(c_{\textcolor{sthlmRed}{2}, (\tau(o,d_{\textcolor{sthlmRed}{1}}, d_{\textcolor{sthlmRed}{2}}), (h_{\textcolor{sthlmRed}{2}},a_{\textcolor{sthlmRed}{2}},z_{\textcolor{sthlmRed}{2}},w))}).
        \end{align*}
        Notice that the product rule provides us with the following relation:
        \begin{align*}
            &\Pr(s\prime,h_{\textcolor{sthlmRed}{1}},a_{\textcolor{sthlmRed}{1}},z_{\textcolor{sthlmRed}{1}},h_{\textcolor{sthlmRed}{2}},a_{\textcolor{sthlmRed}{2}},z_{\textcolor{sthlmRed}{2}},w  | o,d_{\textcolor{sthlmRed}{1}}, d_{\textcolor{sthlmRed}{2}})\\
            &=
            \Pr(h_{\textcolor{sthlmRed}{2}},a_{\textcolor{sthlmRed}{2}},z_{\textcolor{sthlmRed}{2}},w | \tau(o,d_{\textcolor{sthlmRed}{1}}, d_{\textcolor{sthlmRed}{2}})) \cdot 
            c_{\textcolor{sthlmRed}{2}, (\tau(o,d_{\textcolor{sthlmRed}{1}}, d_{\textcolor{sthlmRed}{2}}), (h_{\textcolor{sthlmRed}{2}},a_{\textcolor{sthlmRed}{2}},z_{\textcolor{sthlmRed}{2}},w))}(s\prime,h_{\textcolor{sthlmRed}{1}},a_{\textcolor{sthlmRed}{1}},z_{\textcolor{sthlmRed}{1}})\\
            &=\textstyle
            d_{\textcolor{sthlmRed}{1}}(a_{\textcolor{sthlmRed}{1}}|h_{\textcolor{sthlmRed}{1}})\cdot 
            d_{\textcolor{sthlmRed}{2}}(a_{\textcolor{sthlmRed}{2}}|h_{\textcolor{sthlmRed}{2}})\cdot 
            \sum_{s} o(s,h_{\textcolor{sthlmRed}{1}},h_{\textcolor{sthlmRed}{2}}) \cdot 
            p(s\prime,z_{\textcolor{sthlmRed}{1}},z_{\textcolor{sthlmRed}{2}},w|s,a_{\textcolor{sthlmRed}{1}},a_{\textcolor{sthlmRed}{2}}).
         \end{align*}      
         Exploiting this insight along with the linearity of \(\alpha_{\textcolor{sthlmRed}{2}}\) yields:
         \begin{align*}
             v(\tau(o,d_{\textcolor{sthlmRed}{1}}, d_{\textcolor{sthlmRed}{2}})) &= \textstyle
            \max_{\xi\in \Delta(\Gamma_{\!\!\textcolor{sthlmRed}{1}})}
            \sum_{h_{\textcolor{sthlmRed}{2}},z_{\textcolor{sthlmRed}{2}},a_{\textcolor{sthlmRed}{2}},w}
            \sum_{\Gamma_{\!\!\textcolor{sthlmRed}{2}}\in \Gamma_{\!\!\textcolor{sthlmRed}{1}}}
            \min_{\alpha_{\textcolor{sthlmRed}{2}}\in \Gamma_{\!\!\textcolor{sthlmRed}{2}}} \\
            &\quad
            \textstyle \sum_{s,s\prime, h_{\textcolor{sthlmRed}{1}},a_{\textcolor{sthlmRed}{1}},z_{\textcolor{sthlmRed}{1}}}
            \xi(\Gamma_{\!\!\textcolor{sthlmRed}{2}}) \cdot
            \alpha_{\textcolor{sthlmRed}{2}}(s\prime, h_{\textcolor{sthlmRed}{1}},a_{\textcolor{sthlmRed}{1}},z_{\textcolor{sthlmRed}{1}}) \cdot 
            d_{\textcolor{sthlmRed}{1}}(a_{\textcolor{sthlmRed}{1}}|h_{\textcolor{sthlmRed}{1}})\cdot 
            d_{\textcolor{sthlmRed}{2}}(a_{\textcolor{sthlmRed}{2}}|h_{\textcolor{sthlmRed}{2}})\cdot \\
            &\quad
            \textstyle o(s,h_{\textcolor{sthlmRed}{1}},h_{\textcolor{sthlmRed}{2}}) \cdot 
            p(s\prime,z_{\textcolor{sthlmRed}{1}},z_{\textcolor{sthlmRed}{2}},w|s,a_{\textcolor{sthlmRed}{1}},a_{\textcolor{sthlmRed}{2}}).
         \end{align*}
         Define the following two intermediate functions \(g_{\Gamma_{\!\!\textcolor{sthlmRed}{2}}, \alpha_{\textcolor{sthlmRed}{2}}}\) and \(\beta_{\textcolor{sthlmRed}{2}}\)
            \begin{align*}
                &\textstyle g_{\Gamma_{\!\!\textcolor{sthlmRed}{2}}, \alpha_{\textcolor{sthlmRed}{2}}}\colon (h_{\textcolor{sthlmRed}{1}},h_{\textcolor{sthlmRed}{2}},a_{\textcolor{sthlmRed}{1}},a_{\textcolor{sthlmRed}{2}},z_{\textcolor{sthlmRed}{2}},w)
                \mapsto  
                \sum_{s\in \mathcal{S}}
                o(s, h_{\textcolor{sthlmRed}{1}},h_{\textcolor{sthlmRed}{2}})
                \cdot \beta_{\textcolor{sthlmRed}{2}}(s, h_{\textcolor{sthlmRed}{1}},a_{\textcolor{sthlmRed}{1}},a_{\textcolor{sthlmRed}{2}}, z_{\textcolor{sthlmRed}{2}}, w)\\
                &\textstyle  \beta_{\textcolor{sthlmRed}{2}}(s, h_{\textcolor{sthlmRed}{1}},a_{\textcolor{sthlmRed}{1}},a_{\textcolor{sthlmRed}{2}}, z_{\textcolor{sthlmRed}{2}}, w)
                \doteq r(s,a_{\textcolor{sthlmRed}{1}},a_{\textcolor{sthlmRed}{2}})+ \gamma \sum_{s\prime\in \mathcal{S}}\sum_{z_{\textcolor{sthlmRed}{1}}\in \mathcal{Z}_{\textcolor{sthlmRed}{1}}}
                p(s\prime, z_{\textcolor{sthlmRed}{1}}, z_{\textcolor{sthlmRed}{2}}, w|s, a_{\textcolor{sthlmRed}{1}},a_{\textcolor{sthlmRed}{2}}) \cdot 
                \alpha_{\textcolor{sthlmRed}{2}}(s\prime, h_{\textcolor{sthlmRed}{1}},a_{\textcolor{sthlmRed}{1}},z_{\textcolor{sthlmRed}{1}}).
            \end{align*}
            Consequently, the action value can be rewritten as follows:
            \begin{align*}
                q(o,d_{\textcolor{sthlmRed}{1}}, d_{\textcolor{sthlmRed}{2}}) &=\textstyle
                \max_{\xi\in \Delta(\Gamma_{\!\!\textcolor{sthlmRed}{1}})}
                \sum_{h_{\textcolor{sthlmRed}{2}},a_{\textcolor{sthlmRed}{2}}}
                d_{\textcolor{sthlmRed}{2}}(a_{\textcolor{sthlmRed}{2}}|h_{\textcolor{sthlmRed}{2}})
                \sum_{\Gamma_{\!\!\textcolor{sthlmRed}{2}}\in \Gamma_{\!\!\textcolor{sthlmRed}{1}}}
                \xi(\Gamma_{\!\!\textcolor{sthlmRed}{2}}) \\
                &\quad
                \sum_{z_{\textcolor{sthlmRed}{2}},w}
                \min_{\alpha_{\textcolor{sthlmRed}{2}}\in \Gamma_{\!\!\textcolor{sthlmRed}{2}}}
                \sum_{h_{\textcolor{sthlmRed}{1}},a_{\textcolor{sthlmRed}{1}}}
                d_{\textcolor{sthlmRed}{1}}(a_{\textcolor{sthlmRed}{1}}|h_{\textcolor{sthlmRed}{1}}) \cdot 
                g_{\Gamma_{\!\!\textcolor{sthlmRed}{2}}, \alpha_{\textcolor{sthlmRed}{2}}}(h_{\textcolor{sthlmRed}{1}},h_{\textcolor{sthlmRed}{2}},a_{\textcolor{sthlmRed}{1}},a_{\textcolor{sthlmRed}{2}},z_{\textcolor{sthlmRed}{2}},w).
            \end{align*}
            Let us define the decision variable \(\theta_{\Gamma_{\!\!\textcolor{sthlmRed}{2}}}(a_{\textcolor{sthlmRed}{1}}|h_{\textcolor{sthlmRed}{1}}) \doteq d_{\textcolor{sthlmRed}{1}}(a_{\textcolor{sthlmRed}{1}}|h_{\textcolor{sthlmRed}{1}}) \cdot\xi(\Gamma_{\!\!\textcolor{sthlmRed}{2}})\) then our greedy decision rule is the solution of the following maximin optimisation problem:
            \begin{align*}
                &\textstyle 
                \max_{\theta} \min_{d_{\textcolor{sthlmRed}{2}}} 
                \sum_{h_{\textcolor{sthlmRed}{2}},a_{\textcolor{sthlmRed}{2}}}
                d_{\textcolor{sthlmRed}{2}}(a_{\textcolor{sthlmRed}{2}}|h_{\textcolor{sthlmRed}{2}})
                \sum_{\Gamma_{\!\!\textcolor{sthlmRed}{2}}\in \Gamma_{\!\!\textcolor{sthlmRed}{1}}}
                \sum_{z_{\textcolor{sthlmRed}{2}},w}
                \min_{\alpha_{\textcolor{sthlmRed}{2}}\in \Gamma_{\!\!\textcolor{sthlmRed}{2}}}
                \sum_{h_{\textcolor{sthlmRed}{1}},a_{\textcolor{sthlmRed}{1}}}
                \theta_{\Gamma_{\!\!\textcolor{sthlmRed}{2}}}(a_{\textcolor{sthlmRed}{1}}|h_{\textcolor{sthlmRed}{1}}) \cdot
                g_{\Gamma_{\!\!\textcolor{sthlmRed}{2}}, \alpha_{\textcolor{sthlmRed}{2}}}(h_{\textcolor{sthlmRed}{1}},h_{\textcolor{sthlmRed}{2}},a_{\textcolor{sthlmRed}{1}},a_{\textcolor{sthlmRed}{2}},z_{\textcolor{sthlmRed}{2}},w).
            \end{align*}
            Using Wald's maximin model we can convert this maximin optimisation problem into a maximisation mathematical program, \ie 
            \begin{align*}
            \textstyle
            \text{Maximise}  \quad
            &\textstyle  \sum_{h_{\textcolor{sthlmRed}{2}}\in \mathcal{H}_{\textcolor{sthlmRed}{2}}(o)} \textcolor{sthlmGreen}{f_\theta(h_{\textcolor{sthlmRed}{2}})}\\
            \text{Subject to}
            \quad
            &\textstyle \sum_{\Gamma_{\!\!\textcolor{sthlmRed}{2}}\in \Gamma_{\!\!\textcolor{sthlmRed}{1}}} \sum_{a_{\textcolor{sthlmRed}{1}}\in \mathcal{A}_{\textcolor{sthlmRed}{1}}} 
            \textcolor{sthlmGreen}{\theta_{\Gamma_{\!\!\textcolor{sthlmRed}{2}}}(a_{\textcolor{sthlmRed}{1}}|h_{\textcolor{sthlmRed}{1}})} = 1, \quad \forall h_{\textcolor{sthlmRed}{1}}\in \mathcal{H}_{\textcolor{sthlmRed}{1}}(o) \\
            &\textstyle  \textcolor{sthlmGreen}{f_\theta(h_{\textcolor{sthlmRed}{2}})} \leq 
            \sum_{\Gamma_{\!\!\textcolor{sthlmRed}{2}}\in \Gamma_{\!\!\textcolor{sthlmRed}{1}}}
            \sum_{z_{\textcolor{sthlmRed}{2}}\in \mathcal{Z}_{\textcolor{sthlmRed}{2}}} 
            \sum_{w\in \mathcal{W}} \textcolor{sthlmGreen}{\beta_{\Gamma_{\!\!\textcolor{sthlmRed}{2}}}(h_{\textcolor{sthlmRed}{2}},a_{\textcolor{sthlmRed}{2}},z_{\textcolor{sthlmRed}{2}},w)}, \quad \forall h_{\textcolor{sthlmRed}{2}}\in \mathcal{H}_{\textcolor{sthlmRed}{2}}(o), \forall a_{\textcolor{sthlmRed}{2}}\in \mathcal{A}_{\textcolor{sthlmRed}{2}}\\
            &\textstyle \textcolor{sthlmGreen}{\beta_{\Gamma_{\!\!\textcolor{sthlmRed}{2}}}(h_{\textcolor{sthlmRed}{2}},a_{\textcolor{sthlmRed}{2}},z_{\textcolor{sthlmRed}{2}},w)} 
            \leq
            \sum_{a_{\textcolor{sthlmRed}{1}}\in \mathcal{A}_{\textcolor{sthlmRed}{1}}}\sum_{h_{\textcolor{sthlmRed}{1}}\in \mathcal{H}_{\textcolor{sthlmRed}{1}}}
            \textcolor{sthlmGreen}{\theta_{\Gamma_{\!\!\textcolor{sthlmRed}{2}}}(a_{\textcolor{sthlmRed}{1}}|h_{\textcolor{sthlmRed}{1}})} \cdot
            g_{\Gamma_{\!\!\textcolor{sthlmRed}{2}}, \alpha_{\textcolor{sthlmRed}{2}}}(h_{\textcolor{sthlmRed}{1}},h_{\textcolor{sthlmRed}{2}},a_{\textcolor{sthlmRed}{1}},a_{\textcolor{sthlmRed}{2}},z_{\textcolor{sthlmRed}{2}},w),\\
            &\forall \Gamma_{\!\!\textcolor{sthlmRed}{2}}\in \Gamma_{\!\!\textcolor{sthlmRed}{1}}, 
            \forall \alpha_{\textcolor{sthlmRed}{2}}\in \Gamma_{\!\!\textcolor{sthlmRed}{2}}, 
            \forall h_{\textcolor{sthlmRed}{2}}\in \mathcal{H}_{\textcolor{sthlmRed}{2}}(o), 
            \forall a_{\textcolor{sthlmRed}{2}}\in \mathcal{A}_{\textcolor{sthlmRed}{2}},
            \forall z_{\textcolor{sthlmRed}{2}}\in \mathcal{Z}_{\textcolor{sthlmRed}{2}},
            \forall w\in \mathcal{W}
            \end{align*}  
            Then, the solutin of the linear program in the theorem is the greedy decision rule of the focal player, which ends the proof.
    \end{proof}

\subsection{Proof of Corollary \ref{cor:lp:value:update}}
\corlpvalueupdate*
\begin{proof}
We are given that the value function \(v\) is represented by a collection \(\Gamma_{\textcolor{sthlmRed}{1}}\) of finite sets \(\Gamma_{\textcolor{sthlmRed}{2}}\), where each \(\Gamma_{\textcolor{sthlmRed}{2}}\) contains functions linear over marginal occupancy states. Let \(o\) be an informed occupancy state and \(\xi_{\textcolor{sthlmRed}{1}}\) the greedy decision rule obtained by solving the linear program in Theorem~\ref{thm:greedy:lp} at \(o\).

We define a new set of linear functions \(\Gamma_{\textcolor{sthlmRed}{2},(\mathcal{C}'_{\textcolor{sthlmRed}{2}}, \xi_{\textcolor{sthlmRed}{1}})}\) supported on the sampled marginal states \(\mathcal{C}'_{\textcolor{sthlmRed}{2}}\). For each \(c_{\textcolor{sthlmRed}{2}} \in \mathcal{C}'_{\textcolor{sthlmRed}{2}}\), define the linear function
\begin{align*}
\textstyle
\alpha_{\textcolor{sthlmRed}{2}, (c_{\textcolor{sthlmRed}{2}})} &=
\textstyle 
\sum_{\Phi_{\textcolor{sthlmRed}{2}} \in \Phi_{\textcolor{sthlmRed}{1}}} 
\argmin_{\alpha^{\phi_{\textcolor{sthlmRed}{2}},a_{\textcolor{sthlmRed}{2}}}_{\textcolor{sthlmRed}{2}} \colon \phi_{\textcolor{sthlmRed}{2}} \in \Phi_{\textcolor{sthlmRed}{2}},\; a_{\textcolor{sthlmRed}{2}} \in \mathcal{A}_{\textcolor{sthlmRed}{2}}} \alpha^{\phi_{\textcolor{sthlmRed}{2}},a_{\textcolor{sthlmRed}{2}}}_{\textcolor{sthlmRed}{2}}(c_{\textcolor{sthlmRed}{2}})\\
\alpha^{\phi_{\textcolor{sthlmRed}{2}},a_{\textcolor{sthlmRed}{2}}}_{\textcolor{sthlmRed}{2}}
(s, h_{\textcolor{sthlmRed}{1}}) &=
\textstyle 
\sum_{a_{\textcolor{sthlmRed}{1}}}
\xi_{\textcolor{sthlmRed}{1}}(a_{\textcolor{sthlmRed}{1}}, \Phi_{\textcolor{sthlmRed}{2}} | h_{\textcolor{sthlmRed}{1}})
\cdot \phi_{\textcolor{sthlmRed}{2}}(s, h_{\textcolor{sthlmRed}{1}}, a_{\textcolor{sthlmRed}{1}}, a_{\textcolor{sthlmRed}{2}}).
\end{align*}

These functions satisfy the constraints of the linear program at \(o\) and define a new set \(\Gamma_{\textcolor{sthlmRed}{2},(\mathcal{C}'_{\textcolor{sthlmRed}{2}}, \xi_{\textcolor{sthlmRed}{1}})}\). We then update the value function by setting
\[
\Gamma'_{\textcolor{sthlmRed}{1}} = \Gamma_{\textcolor{sthlmRed}{1}} \cup \left\{ \Gamma_{\textcolor{sthlmRed}{2},(\mathcal{C}'_{\textcolor{sthlmRed}{2}}, \xi_{\textcolor{sthlmRed}{1}})} \right\}.
\]

Let \(v\prime\) be the value function induced by \(\Gamma'_{\textcolor{sthlmRed}{1}}\), and fix an uninformed occupancy state \(x\) such that all the marginal states \(c_{\textcolor{sthlmRed}{2},(x, h_{\textcolor{sthlmRed}{\text{pub}}}, h_{\textcolor{sthlmRed}{2}})}\) involved in its convex decomposition lie in \(\mathcal{C}'_{\textcolor{sthlmRed}{2}}\). Then, by construction of improved state-value function \(v\prime\),
\[
\textstyle
v\prime(x) = \max_{\Gamma_{\textcolor{sthlmRed}{2}} \in \Gamma'_{\textcolor{sthlmRed}{1}}}
\sum_{h_{\textcolor{sthlmRed}{\text{pub}}}} \Pr(h_{\textcolor{sthlmRed}{\text{pub}}} | x)
\sum_{h_{\textcolor{sthlmRed}{2}}} \Pr(h_{\textcolor{sthlmRed}{2}} | h_{\textcolor{sthlmRed}{\text{pub}}}, x)
\min_{\alpha_{\textcolor{sthlmRed}{2}} \in \Gamma_{\textcolor{sthlmRed}{2}}}
\alpha_{\textcolor{sthlmRed}{2}}(c_{\textcolor{sthlmRed}{2},(x, h_{\textcolor{sthlmRed}{\text{pub}}}, h_{\textcolor{sthlmRed}{2}})}).
\]

Since we have added a new set of functions that are constructed to satisfy the linear program constraints at \(o\), this new maximum is at least as large as before. Thus,
\[
v\prime(x) \geq v(x).
\]

Finally, if the greedy linear program solution \(\xi_{\textcolor{sthlmRed}{1}}\) at \(o\) strictly improves the linear program objective compared to the current value function \(v\), then there exists at least one marginal occupancy state \(c_{\textcolor{sthlmRed}{2}} \in \mathcal{C}'_{\textcolor{sthlmRed}{2}}\) where the new linear function yields strictly higher value than all previous ones. This yields:
\[
v\prime(x) > v(x)
\]
for some \(x\) whose decomposition includes that \(c_{\textcolor{sthlmRed}{2}}\).
\end{proof}

\subsection{Algorithms}

\begin{algorithm}[H]
  \centering
  %------------------ PBVI procedure ------------------%
  \begin{minipage}[t]{.48\linewidth}
    \begin{algorithmic}
      \STATE ${\mathtt{function}~ \mathtt{PBVI}()}$
      \STATE Initialise $\mathcal{C}_{\textcolor{sthlmRed}{2},0:\ell}$, $\mathcal{O}_{0:\ell}$.
      \STATE Initialise $\Gamma_{\textcolor{sthlmRed}{1},t} \gets \emptyset$ for all $t\in\{0,\ldots,\ell\}$.
      \WHILE{has not converged}
        \FOR{$t=\ell,\ldots,0$}
          \STATE $\mathtt{improve}(\Gamma_{\textcolor{sthlmRed}{1},t})$.
        \ENDFOR
        \FOR{$t=0,\ldots,\ell$}
          \STATE $(\mathcal{C}_{\textcolor{sthlmRed}{2},t},\mathcal{O}_{t}) 
                 \gets \mathtt{expand}(\mathcal{C}_{\textcolor{sthlmRed}{2},t},\mathcal{O}_{t})$.
        \ENDFOR
      \ENDWHILE
    \end{algorithmic}
  \end{minipage}%
  \hfill
  %---------------- improve procedure -------------------%
  \begin{minipage}[t]{.48\linewidth}
    \begin{algorithmic}
      \STATE ${\mathtt{function}~ \mathtt{improve}(\Gamma_{\textcolor{sthlmRed}{1},t+1})}$
      \FOR{$o \in \mathcal{O}_{t}$}
        \STATE $\xi_{\textcolor{sthlmRed}{1}} 
               \gets \texttt{LP}(\Gamma_{\textcolor{sthlmRed}{1},t+1}, o)$  % Solve the primal LP.
        \STATE $\Gamma_{\textcolor{sthlmRed}{1},t} 
               \gets \Gamma_{\textcolor{sthlmRed}{1},t}
                  \cup \{\Gamma_{\textcolor{sthlmRed}{2},t,(\mathcal{C}_{\textcolor{sthlmRed}{2}},\xi_{\textcolor{sthlmRed}{1}})}\}$  
                  % Update collection.
      \ENDFOR
    \end{algorithmic}
  \end{minipage}
  \caption{PBVI for $\mathcal{M}'$ (resp.\ $\mathcal{M}$).}
  \label{pbvi:zsposg}
\end{algorithm}

\begin{algorithm}[H]
        \caption{Bounded Pruning.}
        \label{pruning:zsposg}
        \begin{algorithmic}
            \STATE ${\mathtt{function}~ \mathtt{BoundedPruning}(\Gamma_{\!\!\textcolor{sthlmRed}{1}},\mathcal{O}')}$
            \FOR{$\Gamma_{\!\!\textcolor{sthlmRed}{2}}\in \Gamma_{\!\!\textcolor{sthlmRed}{1}}$}
                \STATE $\mathtt{refCount}(\Gamma_{\!\!\textcolor{sthlmRed}{2}}) \gets 0$.
            \ENDFOR
            \FOR{$o\in \mathcal{O}'$}
                \STATE $\Gamma_{\!\!\textcolor{sthlmRed}{2},o} \gets \argmax_{\Gamma_{\!\!\textcolor{sthlmRed}{2}}\in \Gamma_{\!\!\textcolor{sthlmRed}{1}}} \sum_{h_{\textcolor{sthlmRed}{2}}} \Pr(h_{\textcolor{sthlmRed}{2}}|o) \min_{\alpha_{\textcolor{sthlmRed}{2}}\in \Gamma_{\!\!\textcolor{sthlmRed}{2}}} \alpha_{\textcolor{sthlmRed}{2}}(c_{\textcolor{sthlmRed}{2},(o,h_{\textcolor{sthlmRed}{2}})})$
                \STATE  $\mathtt{refCount}(\Gamma_{\!\!\textcolor{sthlmRed}{2},o}) \gets \mathtt{refCount}(\Gamma_{\!\!\textcolor{sthlmRed}{2},o})+1$
            \ENDFOR
            \STATE \textbf{return} $\{\Gamma_{\!\!\textcolor{sthlmRed}{2}}\in \Gamma_{\!\!\textcolor{sthlmRed}{1}} \mid \mathtt{refCount}(\Gamma_{\!\!\textcolor{sthlmRed}{2}})>0\}$
        \end{algorithmic}
        \end{algorithm}

\begin{algorithm}[H]
\caption{Redundant Informed Occupancy State Pruning.}
\label{pruning:occupancystates}
\begin{algorithmic}
    \STATE ${\mathtt{function}~ \mathtt{PruneStates}(\mathcal{O}', \Gamma_{\!\!\textcolor{sthlmRed}{1}}, \epsilon)}$
    \STATE Initialise $\mathcal{O}^{\circ} \gets \emptyset$
    \FOR{$o \in \mathcal{O}'$}
        \STATE $\Gamma_{\!\!\textcolor{sthlmRed}{2},o} \gets \argmax_{\Gamma_{\!\!\textcolor{sthlmRed}{2}}\in \Gamma_{\!\!\textcolor{sthlmRed}{1}}} \sum_{h_{\textcolor{sthlmRed}{2}}} \Pr(h_{\textcolor{sthlmRed}{2}}|o) \min_{\alpha_{\textcolor{sthlmRed}{2}}\in \Gamma_{\!\!\textcolor{sthlmRed}{2}}} \alpha_{\textcolor{sthlmRed}{2}}(c_{\textcolor{sthlmRed}{2},(o,h_{\textcolor{sthlmRed}{2}})})$
    \ENDFOR
    \FOR{$o \in \mathcal{O}'$}
        \STATE $\texttt{isRedundant} \gets \texttt{false}$
        \FOR{$o\prime \in \mathcal{O}^{\circ}$}
            \IF{$|\sum_{h_{\textcolor{sthlmRed}{2}}} \Pr(h_{\textcolor{sthlmRed}{2}}|o) \min_{\alpha_{\textcolor{sthlmRed}{2}} \in \Gamma_{\!\!\textcolor{sthlmRed}{2},o}} \alpha_{\textcolor{sthlmRed}{2}}(c_{\textcolor{sthlmRed}{2},(o,h_{\textcolor{sthlmRed}{2}})})
            -
            \sum_{h_{\textcolor{sthlmRed}{2}}} \Pr(h_{\textcolor{sthlmRed}{2}}|o) \min_{\alpha_{\textcolor{sthlmRed}{2}} \in \Gamma_{\!\!\textcolor{sthlmRed}{2},o\prime}} \alpha_{\textcolor{sthlmRed}{2}}(c_{\textcolor{sthlmRed}{2},(o,h_{\textcolor{sthlmRed}{2}})})| \leq \epsilon$}
                \STATE $\texttt{isRedundant} \gets \texttt{true}$ and \textbf{break}
            \ENDIF
        \ENDFOR
        \IF{$\neg \texttt{isRedundant}$}
            \STATE $\mathcal{O}^{\circ} \gets \mathcal{O}^{\circ} \cup \{o\}$
        \ENDIF
    \ENDFOR
    \STATE \textbf{return} $\mathcal{O}^{\circ}$
\end{algorithmic}
\end{algorithm}

\subsection{Proof of Theorem \ref{thm:exploitability:bound}}
\thmexploitabilitybound*
\begin{proof}
Let \(\pi_{\textcolor{sthlmRed}{1}}\) be an optimal focal policy with value \(v_{\textcolor{sthlmRed}{1},*}(b)\). Let \((x_0, \ldots, x_\ell)\) denote the sequence of uninformed occupancy states induced by \(\pi_{\textcolor{sthlmRed}{1}}\) and the opponent policy \(\pi_{\textcolor{sthlmRed}{2}}\), for which PBVI yields the worst estimate.
Let \((x'_0, \ldots, x'_\ell)\) be the closest sequence of uninformed occupancy states to \((x_0, \ldots, x_\ell)\) in \(\ell_1\)-norm, induced by the sampled marginal set \(\mathcal{C}'_{\textcolor{sthlmRed}{2},0:\ell}\). As a consequence, the following inequality holds \(\|x_t-x'_t\|_1 \leq \delta\) for any stage \(t\). Let \(v_{\textcolor{sthlmRed}{1}}\) be the approximate value function, and \(\pi'_{\textcolor{sthlmRed}{1}}\) the induced focal policy computed by PBVI over \(\mathcal{C}'_{\textcolor{sthlmRed}{2},0:\ell}\).
Let \(v_*\) and \(v\) be the value functions induced by pairs of behavioural strategies, each linear over uninformed occupancy states, such that \(v_*(b) = v_{\textcolor{sthlmRed}{1},*}(b)\) and \(v(b) = v_{\textcolor{sthlmRed}{1}}(b)\), respectively. These functions always exist for a fixed joint policy, \eg \(v_* = v_{\pi_{\textcolor{sthlmRed}{1}},\pi_{\textcolor{sthlmRed}{2}}}\). Then, 
\begin{align*}
\varepsilon &\textstyle\doteq
v_{\textcolor{sthlmRed}{1},*}(b) - \min_{\pi'_{\textcolor{sthlmRed}{2}} \in \Pi_{\textcolor{sthlmRed}{2}}} v_{\pi'_{\textcolor{sthlmRed}{1}}, \pi'_{\textcolor{sthlmRed}{2}}}(b) \\
&\textstyle= v_{\textcolor{sthlmRed}{1},*}(b) \textcolor{sthlmGreen}{ - v_{\textcolor{sthlmRed}{1}}(b) + v_{\textcolor{sthlmRed}{1}}(b)} - \min_{\pi'_{\textcolor{sthlmRed}{2}} \in \Pi_{\textcolor{sthlmRed}{2}}} v_{\pi'_{\textcolor{sthlmRed}{1}}, \pi'_{\textcolor{sthlmRed}{2}}}(b) \quad \text{(adding zero)}.
\end{align*}
Since the values of the focal and uninformed planners coincide at the initial state distribution, i.e., \(v_{\textcolor{sthlmRed}{1},*}(b) = v_*(x_0)\) and \(v_{\textcolor{sthlmRed}{1}}(b) = v(x_0)\), we have:
\begin{align*}
v_{\textcolor{sthlmRed}{1},*}(b) - v_{\textcolor{sthlmRed}{1}}(b)
&\textstyle= \textcolor{sthlmGreen}{v_*(x_0) - v(x_0)} \\
&\textstyle= \left(\textcolor{sthlmGreen}{\sum_{t=0}^\ell \gamma^t \cdot \rho(x_t, d_{\textcolor{sthlmRed}{1},t}, d_{\textcolor{sthlmRed}{2},t})} \right) - v(x_0) \quad \text{(by definition)} \\
&\textstyle= \left(\sum_{t=0}^\ell \gamma^t \cdot \rho(x_t, d_{\textcolor{sthlmRed}{1},t}, d_{\textcolor{sthlmRed}{2},t})\right) \textcolor{sthlmGreen}{ - \sum_{t=1}^\ell \gamma^t (v(x_t) - v(x_t))} - v(x_0) \quad \text{(adding zero)}.
\end{align*}
Using the convention \(v_{\ell+1}(\cdot) \doteq 0\), we rearrange terms:
\begin{align*}
&\textstyle= \sum_{t=0}^\ell \gamma^t \cdot \rho(x_t, d_{\textcolor{sthlmRed}{1},t}, d_{\textcolor{sthlmRed}{2},t})
+ \left( \gamma^{\ell+1} v(x_{\ell+1}) + \sum_{t=1}^{\ell} \gamma^t v(x_t) \right)
- \left( \gamma^0 v(x_0) + \sum_{t=1}^{\ell} \gamma^t v(x_t) \right) \\
&\textstyle= \sum_{t=0}^\ell \gamma^t \cdot \rho(x_t, d_{\textcolor{sthlmRed}{1},t}, d_{\textcolor{sthlmRed}{2},t})
+ \sum_{t=0}^\ell \gamma^{t+1} v(x_{t+1}) - \sum_{t=0}^\ell \gamma^t v(x_t) \\
&\textstyle= \sum_{t=0}^\ell \gamma^t \left( \rho(x_t, d_{\textcolor{sthlmRed}{1},t}, d_{\textcolor{sthlmRed}{2},t}) + \gamma v(x_{t+1}) - v(x_t) \right) \\
&\textstyle= \sum_{t=0}^\ell \gamma^t \left( q(x_t, d_{\textcolor{sthlmRed}{1},t}, d_{\textcolor{sthlmRed}{2},t}) - v(x_t) \right).
\end{align*}
Now substitute \(x'_t\) in place of \(x_t\):
\begin{align*}
&\textstyle= \sum_{t=0}^\ell \gamma^t \left( q(x_t, d_{\textcolor{sthlmRed}{1},t}, d_{\textcolor{sthlmRed}{2},t}) \textcolor{sthlmGreen}{ - q(x_t, d_{\textcolor{sthlmRed}{1},t}, d_{\textcolor{sthlmRed}{2},t}) + q(x_t, d_{\textcolor{sthlmRed}{1},t}, d_{\textcolor{sthlmRed}{2},t})} - v(x_t) \right) \\
&\textstyle= \sum_{t=0}^\ell \gamma^t \left( q(x_t, d_{\textcolor{sthlmRed}{1},t}, d_{\textcolor{sthlmRed}{2},t}) - q(\textcolor{sthlmGreen}{x'_t}, d_{\textcolor{sthlmRed}{1},t}, d_{\textcolor{sthlmRed}{2},t}) + q(\textcolor{sthlmGreen}{x'_t}, d_{\textcolor{sthlmRed}{1},t}, d_{\textcolor{sthlmRed}{2},t}) - v(x_t) \right).
\end{align*}
Because the greedy rule for \(q(x'_t,\cdot,\cdot)\) achieves value \(v(x'_t)\), we have:
\begin{align*}
&\textstyle\leq \sum_{t=0}^\ell \gamma^t \left( q(x_t, d_{\textcolor{sthlmRed}{1},t}, d_{\textcolor{sthlmRed}{2},t}) - q(x'_t, d_{\textcolor{sthlmRed}{1},t}, d_{\textcolor{sthlmRed}{2},t}) + \textcolor{sthlmGreen}{v(x'_t)} - v(x_t) \right) \\
&\textstyle= \sum_{t=0}^\ell \gamma^t \left( q(x_t, d_{\textcolor{sthlmRed}{1},t}, d_{\textcolor{sthlmRed}{2},t}) - v(x_t) - q(x'_t, d_{\textcolor{sthlmRed}{1},t}, d_{\textcolor{sthlmRed}{2},t}) + v(x'_t) \right) \\
&\textstyle= \sum_{t=0}^\ell \gamma^t \left( q(\cdot, d_{\textcolor{sthlmRed}{1},t}, d_{\textcolor{sthlmRed}{2},t}) - v(\cdot) \right) \cdot (x_t - x'_t).
\end{align*}
Applying Hölder’s inequality, using the definition of \(\delta\), and the fact that \(r\) is bounded:
\begin{align*}
v_{\textcolor{sthlmRed}{1},*}(b) - v_{\textcolor{sthlmRed}{1}}(b)
&\textstyle\leq \sum_{t=0}^\ell \gamma^t \cdot \left\| q(\cdot, d_{\textcolor{sthlmRed}{1},t}, d_{\textcolor{sthlmRed}{2},t}) - v(\cdot) \right\|_\infty \cdot \|x_t - x'_t\|_1 \\
&\textstyle\leq \delta \sum_{t=0}^\ell \gamma^t \cdot \left\| q(\cdot, d_{\textcolor{sthlmRed}{1},t}, d_{\textcolor{sthlmRed}{2},t}) - v(\cdot) \right\|_\infty \\
&\textstyle\leq 2m\delta \sum_{t_0=0}^\ell \gamma^{t_0} \sum_{t_1=t_0}^\ell \gamma^{t_1 - t_0},~\text{(\(q\) and \(v\) being linear across \(x_t\))} \\
&\textstyle= 2m\delta \sum_{t_0=0}^\ell \sum_{t_1=t_0}^\ell \gamma^{t_1} \\
&\textstyle= 2m\delta \sum_{t_0=0}^\ell \frac{\gamma^{t_0} - \gamma^{\ell+1}}{1 - \gamma} \\
&\textstyle= \frac{2m\delta}{1 - \gamma} \sum_{t_0=0}^\ell (\gamma^{t_0} - \gamma^{\ell+1}) \\
&\textstyle= \frac{2m\delta}{(1 - \gamma)^2} \left[ 1 + (\ell + 1) \gamma^{\ell+2} - (\ell + 2) \gamma^{\ell+1} \right].
\end{align*}

A similar argument yields for this part \(v_{\textcolor{sthlmRed}{1}}(b) - \min_{\pi'_{\textcolor{sthlmRed}{2}} \in \Pi_{\textcolor{sthlmRed}{2}}} v_{\pi'_{\textcolor{sthlmRed}{1}}, \pi'_{\textcolor{sthlmRed}{2}}}(b)\).
Let \(\pi_{\textcolor{sthlmRed}{2}}\) be a best-response to the focal policy \(\pi'_{\textcolor{sthlmRed}{1}}\) induced by \(v_{\textcolor{sthlmRed}{1}}(b)\). Let \(v_{\pi'_{\textcolor{sthlmRed}{1}},*}\) and \(v_{\pi'_{\textcolor{sthlmRed}{1}}}\) be the value functions induced by the pairs of behavioural strategies, each linear over uninformed occupancy states, such that \(v_{\pi'_{\textcolor{sthlmRed}{1}},*}(b) =\min_{\pi'_{\textcolor{sthlmRed}{2}} \in \Pi_{\textcolor{sthlmRed}{2}}} v_{\pi'_{\textcolor{sthlmRed}{1}}, \pi'_{\textcolor{sthlmRed}{2}}}(b)\) and \( v_{\pi'_{\textcolor{sthlmRed}{1}}}(b) = v_{\textcolor{sthlmRed}{1}}(b)\), respectively.
\begin{align*}
    \textstyle v_{\textcolor{sthlmRed}{1}}(b) - \min_{\pi'_{\textcolor{sthlmRed}{2}} \in \Pi_{\textcolor{sthlmRed}{2}}} v_{\pi'_{\textcolor{sthlmRed}{1}}, \pi'_{\textcolor{sthlmRed}{2}}}(b) &= 
    \textcolor{sthlmGreen}{v_{\pi'_{\textcolor{sthlmRed}{1}}}(x_0) - v_{\pi'_{\textcolor{sthlmRed}{1}},*}(x_0)}.
\end{align*}
Let \((x_0,\ldots, x_\ell)\) denote the sequence of uninformed occupancy states induced by \(\pi'_{\textcolor{sthlmRed}{1}}\) and the selected best-response \(\pi_{\textcolor{sthlmRed}{2}}\). Let \(x'_0,\ldots, x'_\ell)\) be the closezt sequence of uninformed occupancy states to \((x_0,\ldots, x_\ell)\) in \(\ell_1\)-norm, induced by the sampled marginal set \(\mathcal{C}'_{\textcolor{sthlmRed}{2},0:\ell}\). Then, it follows that:
\begin{align*}
    &=\textstyle  v_{\pi'_{\textcolor{sthlmRed}{1}}}(x_0) - \textcolor{sthlmGreen}{(\sum_{t=0}^\ell \gamma^t \cdot \rho(x_t,  d_{\textcolor{sthlmRed}{1},t},d_{\textcolor{sthlmRed}{2},t}))},\quad \text{(by Definition)}\\
    &=\textstyle  v_{\pi'_{\textcolor{sthlmRed}{1}}}(x_0) \textcolor{sthlmGreen}{~+ \sum_{t=1}^\ell \gamma^t \cdot (v_{\pi'_{\textcolor{sthlmRed}{1}}}(x_t)-v_{\pi'_{\textcolor{sthlmRed}{1}}}(x_t)) } - (\sum_{t=0}^\ell \gamma^t \cdot \rho(x_t,  d_{\textcolor{sthlmRed}{1},t},d_{\textcolor{sthlmRed}{2},t})),\quad \text{(adding zero)}.
\end{align*}
Using the convention \(v_{\pi'_{\textcolor{sthlmRed}{1}}}(x_{\ell+1}) \doteq 0\), we rearrange terms: 
\begin{align*}
    &=\textstyle  
     (v_{\pi'_{\textcolor{sthlmRed}{1}}}(x_0)+ \sum_{t=1}^\ell \gamma^t \cdot v_{\pi'_{\textcolor{sthlmRed}{1}}}(x_t))
    - (\textcolor{sthlmGreen}{\gamma^{\ell+1}\cdot v_{\pi'_{\textcolor{sthlmRed}{1}}}(x_{\ell+1})}+\sum_{t=1}^\ell \gamma^t \cdot v_{\pi'_{\textcolor{sthlmRed}{1}}}(x_t)) - (\sum_{t=0}^\ell \gamma^t \cdot \rho(x_t,  d_{\textcolor{sthlmRed}{1},t},d_{\textcolor{sthlmRed}{2},t}))\\
    &=\textstyle 
    \sum_{t=0}^\ell \gamma^t \cdot v_{\pi'_{\textcolor{sthlmRed}{1}}}(x_t)
    -
    \sum_{t=0}^\ell \gamma^t \cdot \gamma v_{\pi'_{\textcolor{sthlmRed}{1}}}(x_{t+1})
    - \sum_{t=0}^\ell \gamma^t \cdot \rho(x_t,  d_{\textcolor{sthlmRed}{1},t},d_{\textcolor{sthlmRed}{2},t})\\
    &=\textstyle
    \sum_{t=0}^\ell \gamma^t \cdot 
    (v_{\pi'_{\textcolor{sthlmRed}{1}}}(x_t) - \gamma v_{\pi'_{\textcolor{sthlmRed}{1}}}(x_{t+1}) - \rho(x_t,  d_{\textcolor{sthlmRed}{1},t},d_{\textcolor{sthlmRed}{2},t}))\\
    &= \textstyle
    \sum_{t=0}^\ell \gamma^t \cdot 
    (v_{\pi'_{\textcolor{sthlmRed}{1}}}(x_t) - [ \rho(x_t,  d_{\textcolor{sthlmRed}{1},t},d_{\textcolor{sthlmRed}{2},t}) + \gamma v_{\pi'_{\textcolor{sthlmRed}{1}}}(x_{t+1})] )\\
    &= \textstyle
    \sum_{t=0}^\ell \gamma^t \cdot 
    (v_{\pi'_{\textcolor{sthlmRed}{1}}}(x_t) - \textcolor{sthlmGreen}{q_{\pi'_{\textcolor{sthlmRed}{1}}}(x_t,  d_{\textcolor{sthlmRed}{1},t},d_{\textcolor{sthlmRed}{2},t})}  ),\quad \text{(by Definition)}\\
    &= \textstyle
    \sum_{t=0}^\ell \gamma^t \cdot 
    (v_{\pi'_{\textcolor{sthlmRed}{1}}}(x_t) - q_{\pi'_{\textcolor{sthlmRed}{1}}}(x_t,  d_{\textcolor{sthlmRed}{1},t},d_{\textcolor{sthlmRed}{2},t}) - \textcolor{sthlmGreen}{q_{\pi'_{\textcolor{sthlmRed}{1}}}(x'_t,  d_{\textcolor{sthlmRed}{1},t},d_{\textcolor{sthlmRed}{2},t})+q_{\pi'_{\textcolor{sthlmRed}{1}}}(x'_t,  d_{\textcolor{sthlmRed}{1},t},d_{\textcolor{sthlmRed}{2},t})}   ),\quad \text{(adding zero)}.
\end{align*}
Because the greedy rules in \(q_{\pi'_{\textcolor{sthlmRed}{1}}}(x'_t,  d_{\textcolor{sthlmRed}{1},t},d_{\textcolor{sthlmRed}{2},t})\) achieves value \(v_{\pi'_{\textcolor{sthlmRed}{1}}}(x'_t)\), we have:
\begin{align*}
    &\leq \textstyle
    \sum_{t=0}^\ell \gamma^t \cdot 
    (v_{\pi'_{\textcolor{sthlmRed}{1}}}(x_t) - q_{\pi'_{\textcolor{sthlmRed}{1}}}(x_t,  d_{\textcolor{sthlmRed}{1},t},d_{\textcolor{sthlmRed}{2},t}) - \textcolor{sthlmGreen}{q_{\pi'_{\textcolor{sthlmRed}{1}}}(x'_t,  d_{\textcolor{sthlmRed}{1},t},d_{\textcolor{sthlmRed}{2},t})+v_{\pi'_{\textcolor{sthlmRed}{1}}}(x'_t)}   ),\quad \text{(adding zero)}\\
    &=\textstyle
    \sum_{t=0}^\ell \gamma^t \cdot 
    ([v_{\pi'_{\textcolor{sthlmRed}{1}}}(x_t) - q_{\pi'_{\textcolor{sthlmRed}{1}}}(x_t,  d_{\textcolor{sthlmRed}{1},t},d_{\textcolor{sthlmRed}{2},t})] - [v_{\pi'_{\textcolor{sthlmRed}{1}}}(x'_t) - q_{\pi'_{\textcolor{sthlmRed}{1}}}(x'_t,  d_{\textcolor{sthlmRed}{1},t},d_{\textcolor{sthlmRed}{2},t})])\\
    &=\textstyle
    \sum_{t=0}^\ell \gamma^t \cdot 
    (v_{\pi'_{\textcolor{sthlmRed}{1}}}(\cdot) - q_{\pi'_{\textcolor{sthlmRed}{1}}}(\cdot,  d_{\textcolor{sthlmRed}{1},t},d_{\textcolor{sthlmRed}{2},t}))\cdot (x_t-x'_t).
\end{align*}%
Applying H\"{o}lder's inequality, using the definition of \(\delta\), and the fact that \(r\) is bounded:
\begin{align*}
    \textstyle v_{\textcolor{sthlmRed}{1}}(b) - \min_{\pi'_{\textcolor{sthlmRed}{2}} \in \Pi_{\textcolor{sthlmRed}{2}}} v_{\pi'_{\textcolor{sthlmRed}{1}}, \pi'_{\textcolor{sthlmRed}{2}}}(b) &\leq \textstyle
    \sum_{t=0}^\ell \gamma^t \cdot 
    \|v_{\pi'_{\textcolor{sthlmRed}{1}}}(\cdot) - q_{\pi'_{\textcolor{sthlmRed}{1}}}(\cdot,  d_{\textcolor{sthlmRed}{1},t},d_{\textcolor{sthlmRed}{2},t})\|_\infty\cdot \|x_t-x'_t\|_1\\
    &\leq \textstyle
    \delta \sum_{t=0}^\ell \gamma^t \cdot 
    \|v_{\pi'_{\textcolor{sthlmRed}{1}}}(\cdot) - q_{\pi'_{\textcolor{sthlmRed}{1}}}(\cdot,  d_{\textcolor{sthlmRed}{1},t},d_{\textcolor{sthlmRed}{2},t})\|_\infty\\
    &\leq \textstyle
    2m\delta \sum_{t_0=0}^\ell \gamma^{t_0}\sum_{t_1=t_0}^\ell \gamma^{t_1-t_0}\\
    &= \textstyle
    2m\delta \sum_{t_0=0}^\ell \sum_{t_1=t_0}^\ell \gamma^{t_1}\\
    &= \textstyle
    2m\delta \sum_{t_0=0}^\ell \frac{\gamma^{t_0}-\gamma^{\ell+1}}{1-\gamma}\\
    &= \textstyle
    \frac{2m\delta}{1-\gamma} \sum_{t_0=0}^\ell {\gamma^{t_0}-\gamma^{\ell+1}}\\
    &= \textstyle
    \frac{2m\delta}{(1-\gamma)^2} [1+ (\ell+1)\gamma^{\ell+2}-(\ell+2)\gamma^{\ell+1}]
    .
\end{align*}
Combining both bounds gives the final exploitability guarantee and concludes the proof.
\end{proof}

While it is theoretically sufficient to define a focal policy by computing values over the entire set of uninformed occupancy states, this approach is often impractical due to the exponential growth of that set with the horizon. 
To address this, we construct a worst-case trajectory through the uninformed occupancy space by solving a sequence of linear programs. 
At each step, the primal LP from Theorem~\ref{thm:greedy:lp} provides a greedy decision rule for the focal player. 
To complete the picture, we require a dual LP that identifies a worst-case response for the opponent, certifying the pessimism constraints that underpin the primal solution. 
This primal–dual pair induces a compact trajectory of uninformed occupancy states along which a focal policy can be explicitly extracted.
The corollary below formalises the dual program that supports this construction.
\begin{corollary}
\label{cor:dual:greedy}
Let \(o\) be an informed occupancy state. Then the pessimistic evaluation \(q(o, \cdot)\), defined in Theorem~\ref{thm:greedy:lp}, can equivalently be computed by solving the following linear program with:
\begin{itemize}
    \item \(O(|\Phi_{\textcolor{sthlmRed}{1}}| \cdot |\Phi^*_{\textcolor{sthlmRed}{2}}| \cdot |\mathcal{H}_{\textcolor{sthlmRed}{2}}(o)| \cdot |\mathcal{A}_{\textcolor{sthlmRed}{2}}|)\) \textcolor{sthlmGreen}{\bf variables},
    \item \(O(|\Phi_{\textcolor{sthlmRed}{1}}| \cdot |\mathcal{H}_{\textcolor{sthlmRed}{1}}(o)| \cdot |\mathcal{A}_{\textcolor{sthlmRed}{1}}| )\) \textcolor{black}{\bf constraints},
\end{itemize}
where \( |\Phi^*_{\textcolor{sthlmRed}{2}}| \doteq \max_{\Phi_{\textcolor{sthlmRed}{2}} \in \Phi_{\textcolor{sthlmRed}{1}}} |\Phi_{\textcolor{sthlmRed}{2}}| \). The dual linear program is:
\begin{align*}
\textstyle
\text{Minimise} \quad
&\textstyle  \sum_{h_{\textcolor{sthlmRed}{2}} \in \mathcal{H}_{\textcolor{sthlmRed}{2}}(o)}
\Pr(h_{\textcolor{sthlmRed}{2}} \mid o) \cdot \textcolor{sthlmGreen}{u(h_{\textcolor{sthlmRed}{2}}, \Phi_{\textcolor{sthlmRed}{2}})} \\[0.5em]
\text{Subject to} \quad
&\textstyle \sum_{\Phi_{\textcolor{sthlmRed}{2}} \in \Phi_{\textcolor{sthlmRed}{1}}}
\sum_{\phi_{\textcolor{sthlmRed}{2}} \in \Phi_{\textcolor{sthlmRed}{2}}}
\sum_{a_{\textcolor{sthlmRed}{2}} \in \mathcal{A}_{\textcolor{sthlmRed}{2}}}
\textcolor{sthlmGreen}{\lambda(\Phi_{\textcolor{sthlmRed}{2}}, \phi_{\textcolor{sthlmRed}{2}}, h_{\textcolor{sthlmRed}{2}}, a_{\textcolor{sthlmRed}{2}})}
= 1,
\quad \forall h_{\textcolor{sthlmRed}{2}} \in \mathcal{H}_{\textcolor{sthlmRed}{2}}(o),
\\[0.5em]
&\textstyle \textcolor{sthlmGreen}{u(h_{\textcolor{sthlmRed}{2}},\Phi_{\textcolor{sthlmRed}{2}})} \geq
\sum_{\phi_{\textcolor{sthlmRed}{2}} \in \Phi_{\textcolor{sthlmRed}{2}}}
\sum_{a_{\textcolor{sthlmRed}{2}} \in \mathcal{A}_{\textcolor{sthlmRed}{2}}}
\textcolor{sthlmGreen}{\lambda(\Phi_{\textcolor{sthlmRed}{2}}, \phi_{\textcolor{sthlmRed}{2}}, h_{\textcolor{sthlmRed}{2}}, a_{\textcolor{sthlmRed}{2}})}
\sum_{s \in \mathcal{S}} \phi_{\textcolor{sthlmRed}{2}}(s, h_{\textcolor{sthlmRed}{1}}, a_{\textcolor{sthlmRed}{1}}, a_{\textcolor{sthlmRed}{2}})
\cdot c_{\textcolor{sthlmRed}{2},(o,h_{\textcolor{sthlmRed}{1}})}(s, h_{\textcolor{sthlmRed}{2}}),\\
&
\forall \Phi_{\textcolor{sthlmRed}{2}} \in \Phi_{\textcolor{sthlmRed}{1}},\;
\forall h_{\textcolor{sthlmRed}{1}}\in \mathcal{H}_{\textcolor{sthlmRed}{1}}(o), \forall a_{\textcolor{sthlmRed}{1}}\in \mathcal{A}_{\textcolor{sthlmRed}{1}}.
\end{align*}
The variable \( \textcolor{sthlmGreen}{\lambda(\Phi_{\textcolor{sthlmRed}{2}}, \phi_{\textcolor{sthlmRed}{2}}, h_{\textcolor{sthlmRed}{2}}, a_{\textcolor{sthlmRed}{2}})} \in [0,1] \) represents the conditional probability of model–action pair \((\phi_{\textcolor{sthlmRed}{2}}, a_{\textcolor{sthlmRed}{2}})\) under value model set \(\Phi_{\textcolor{sthlmRed}{2}}\) and private history \(h_{\textcolor{sthlmRed}{2}}\). The variable \( \textcolor{sthlmGreen}{u(h_{\textcolor{sthlmRed}{2}}, \Phi_{\textcolor{sthlmRed}{2}})} \) upper-bounds the expected value of Player~\textcolor{sthlmRed}{1}’s return under the induced model \(\Phi_{\textcolor{sthlmRed}{2}}\).
The normalisation constraint ensures that for each \(h_{\textcolor{sthlmRed}{2}}\), the conditional distribution \(\textcolor{sthlmGreen}{\lambda(\cdot \mid h_{\textcolor{sthlmRed}{2}})}\) is valid. 
This dual program reflects the adversary's strategy: choosing a worst-case model-action distribution per private history \(h_{\textcolor{sthlmRed}{2}}\) that maximises cost to Player~\textcolor{sthlmRed}{1} while respecting model uncertainty through \(\Phi_{\textcolor{sthlmRed}{2}} \in \Phi_{\textcolor{sthlmRed}{1}}\).
\end{corollary}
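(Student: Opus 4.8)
The plan is to prove Corollary~\ref{cor:dual:greedy} by exhibiting the stated program as the minimax dual of the primal in Theorem~\ref{thm:greedy:lp}, so that strong duality forces both optima to equal $q(o,\cdot)$. The conceptual picture is that, for a fixed informed occupancy state $o$, Corollary~\ref{cor:uniform:continuity} presents $q(o,\cdot)$ as the value of a two-player zero-sum matrix game: the focal player commits to $\xi_{\textcolor{sthlmRed}{1}}$ (a decision rule together with a mixture over the value-model sets in $\Phi_{\textcolor{sthlmRed}{1}}$), while the adversary selects, for each private history $h_{\textcolor{sthlmRed}{2}}$ and model set $\Phi_{\textcolor{sthlmRed}{2}}$, a worst-case pair $(\phi_{\textcolor{sthlmRed}{2}}, a_{\textcolor{sthlmRed}{2}})$. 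The primal of Theorem~\ref{thm:greedy:lp} is the row (max) player's linearization, with auxiliary variables $\upsilon(h_{\textcolor{sthlmRed}{2}},\Phi_{\textcolor{sthlmRed}{2}})$ lower-bounding the inner minimum; the claimed program is the column (min) player's linearization. The minimax theorem of \citet{Neumann1928}---already invoked to swap $\min$ and $\max$ in the derivations of Corollary~\ref{cor:uniform:continuity} and Theorem~\ref{thm:greedy:lp}---guarantees that the two orderings share the common value $q(o,\cdot)$.

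First I would restate the primal in canonical form, recording that the variables $\xi_{\textcolor{sthlmRed}{1}}(a_{\textcolor{sthlmRed}{1}}, \Phi_{\textcolor{sthlmRed}{2}} \mid h_{\textcolor{sthlmRed}{1}})$ are nonnegative, the auxiliary variables $\upsilon(h_{\textcolor{sthlmRed}{2}},\Phi_{\textcolor{sthlmRed}{2}})$ are free, the normalisation constraints (one per $h_{\textcolor{sthlmRed}{1}}$) are equalities, and the pessimism constraints (one per $(\Phi_{\textcolor{sthlmRed}{2}}, \phi_{\textcolor{sthlmRed}{2}}, a_{\textcolor{sthlmRed}{2}}, h_{\textcolor{sthlmRed}{2}})$) are inequalities. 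Then I would attach a nonnegative multiplier $\lambda(\Phi_{\textcolor{sthlmRed}{2}}, \phi_{\textcolor{sthlmRed}{2}}, h_{\textcolor{sthlmRed}{2}}, a_{\textcolor{sthlmRed}{2}})$ to each pessimism inequality and eliminate the primal variables family by family. Because each $\upsilon(h_{\textcolor{sthlmRed}{2}},\Phi_{\textcolor{sthlmRed}{2}})$ is free, appearing with weight $\Pr(h_{\textcolor{sthlmRed}{2}}\mid o)$ in the objective and coefficient $+1$ in its matching pessimism constraints, dual feasibility in $\upsilon$ fixes $\sum_{\phi_{\textcolor{sthlmRed}{2}}, a_{\textcolor{sthlmRed}{2}}}\lambda(\Phi_{\textcolor{sthlmRed}{2}}, \phi_{\textcolor{sthlmRed}{2}}, h_{\textcolor{sthlmRed}{2}}, a_{\textcolor{sthlmRed}{2}})$; normalising out the weight yields exactly the stated simplex constraint $\sum_{\Phi_{\textcolor{sthlmRed}{2}}, \phi_{\textcolor{sthlmRed}{2}}, a_{\textcolor{sthlmRed}{2}}}\lambda(\cdot\mid h_{\textcolor{sthlmRed}{2}}) = 1$, identifying $\lambda(\cdot\mid h_{\textcolor{sthlmRed}{2}})$ as the adversary's mixed response. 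Eliminating the nonnegative $\xi_{\textcolor{sthlmRed}{1}}$ produces, for every reachable $(h_{\textcolor{sthlmRed}{1}}, a_{\textcolor{sthlmRed}{1}})$, the requirement that $u(h_{\textcolor{sthlmRed}{2}},\Phi_{\textcolor{sthlmRed}{2}})$ dominate $\sum_{\phi_{\textcolor{sthlmRed}{2}}, a_{\textcolor{sthlmRed}{2}}}\lambda(\cdots)\sum_s \phi_{\textcolor{sthlmRed}{2}}(s,h_{\textcolor{sthlmRed}{1}},a_{\textcolor{sthlmRed}{1}},a_{\textcolor{sthlmRed}{2}})\, c_{\textcolor{sthlmRed}{2}}(\cdots)$, where $u$ is introduced to linearise the emergent maximum over $(h_{\textcolor{sthlmRed}{1}}, a_{\textcolor{sthlmRed}{1}})$, mirroring the role of $\upsilon$ in the primal. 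Collecting the reduced objective gives $\min \sum_{\Phi_{\textcolor{sthlmRed}{2}}, h_{\textcolor{sthlmRed}{2}}}\Pr(h_{\textcolor{sthlmRed}{2}}\mid o)\, u(h_{\textcolor{sthlmRed}{2}},\Phi_{\textcolor{sthlmRed}{2}})$, and the variable and constraint counts follow by counting indices.

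I would close the argument by invoking strong duality: both programs are feasible (any decision rule with uniform mixture for the primal; any product distribution over $(\phi_{\textcolor{sthlmRed}{2}}, a_{\textcolor{sthlmRed}{2}})$ for the dual) and bounded (rewards are bounded by $m$), so the optima coincide and equal $q(o,\cdot)$, after which the adversarial reading of $\lambda$ and $u$ is immediate. I expect the main obstacle to be the index bookkeeping of the dualisation---translating the free versus nonnegative status of each primal variable into the equality versus inequality form of each dual constraint, and reconciling the probability weights $\Pr(h_{\textcolor{sthlmRed}{2}}\mid o)$ that simultaneously explain the normalisation constant $1$ and the apparent conditioning swap between $c_{\textcolor{sthlmRed}{2},(o,h_{\textcolor{sthlmRed}{2}})}(s,h_{\textcolor{sthlmRed}{1}})$ and $c_{\textcolor{sthlmRed}{2},(o,h_{\textcolor{sthlmRed}{1}})}(s,h_{\textcolor{sthlmRed}{2}})$ via the joint identity $o(s,h_{\textcolor{sthlmRed}{1}},h_{\textcolor{sthlmRed}{2}}) = \Pr(h_{\textcolor{sthlmRed}{2}}\mid o)\, c_{\textcolor{sthlmRed}{2},(o,h_{\textcolor{sthlmRed}{2}})}(s,h_{\textcolor{sthlmRed}{1}})$. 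Care is also needed to ensure the auxiliary $u$ variables tighten to the true maximum at optimality, so that no slack is introduced and the LP value is exactly the game value.
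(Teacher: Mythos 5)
Your proposal is correct and follows essentially the same route as the paper, which disposes of this corollary in a single line by deferring to LP duality applied to the primal program of Theorem~\ref{thm:greedy:lp}; you simply carry out the dualisation explicitly (multipliers on the pessimism constraints, elimination of the free $\upsilon$ and nonnegative $\xi_{\textcolor{sthlmRed}{1}}$ variables, strong duality to equate the optima). Your added care with the identity $o(s,h_{\textcolor{sthlmRed}{1}},h_{\textcolor{sthlmRed}{2}}) = \Pr(h_{\textcolor{sthlmRed}{2}}\mid o)\, c_{\textcolor{sthlmRed}{2},(o,h_{\textcolor{sthlmRed}{2}})}(s,h_{\textcolor{sthlmRed}{1}})$ and the normalisation of the $\lambda$ weights is exactly the bookkeeping the paper leaves implicit.
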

\begin{proof}
    The proof follows directly from the proof for the primal linear program, see Theorem \ref{thm:greedy:lp}.
\end{proof}

\section{Experiments}

\subsection{Benchmarks}

We evaluate our approach on several competitive benchmark problems, adapted from standard multi-agent settings. Their key characteristics are summarised in Table~\ref{table:benchmarks_dimensions}.

\paragraph{Multi-Agent Recycling.} In the original cooperative version, two robots must clean a room represented as a grid by emptying garbage cans. Each robot has limited battery life and a restricted view of the environment, including limited observability of the other robot. Coordination is therefore required. We adapt the task to a zero-sum setting by altering the reward function: each robot now aims to clean more efficiently than the other.

\paragraph{Multi-Agent Tiger.} The environment consists of two rooms—one containing a treasure and the other a tiger. Each agent stands before a door and may choose to either listen for cues or enter a room. Due to stochastic listening outcomes, agents receive noisy observations. Two competitive variants, \emph{Adversarial Tiger} and \emph{Competitive Tiger}, were introduced in~\citet{Wiggers16} to study adversarial behaviour under partial observability.

\paragraph{Multi-Agent Broadcast Channel (MABC).} This benchmark captures a communication scenario where two agents (nodes) must broadcast messages over a shared channel. To prevent collisions, only one node may broadcast at any time. While the original version is cooperative—maximising joint throughput—we consider a competitive variant by modifying the reward structure.

\paragraph{Matching Pennies.} Each player secretly chooses heads or tails. If the two choices match, Player~1 wins; otherwise, Player~2 wins. This is a simple, fully observable zero-sum game commonly used in theoretical analysis.

\paragraph{Pursuit–Evasion.} This benchmark involves a grid-world where an evader attempts to escape a pursuer. Both agents can move in the four cardinal directions, and each perceives the opponent only when they occupy adjacent cells. The game continues after a capture, which rewards the pursuer and penalises the evader. We consider multiple grid sizes and obstacle settings to vary difficulty and observability.

\begin{table}[ht!]
    \caption{Benchmark characteristics. $|S|$: number of hidden states, $|A_i|$: number of actions for player~$i$, $|Z_i|$: number of observations for player~$i$, $R_{\max}$/$R_{\min}$: reward bounds, $\gamma$: discount factor.}
    \label{table:benchmarks_dimensions}
    \centering
    \begin{tabular}{@{}lcccccccc@{}}
        \toprule
        Problem & $|S|$ & $|A_1|$ & $|A_2|$ & $|Z_1|$ & $|Z_2|$ & $R_{\max}$ & $R_{\min}$ & $\gamma$ \\
        \midrule
        Adversarial Tiger           & 2   & 3   & 2   & 2   & 2   & 0.75  & -1.25 & 1 \\
        Competitive Tiger           & 2   & 4   & 4   & 3   & 3   & 0.66  & -0.66 & 1 \\
        Recycling                   & 4   & 3   & 3   & 2   & 2   & 0.5   & -0.39 & 1 \\
        MABC                        & 4   & 2   & 2   & 2   & 2   & 0.1   & 0.0   & 1 \\
        Matching Pennies            & 3   & 2   & 2   & 1   & 1   & 2.0   & -1.0  & 1 \\
        Pursuit–Evasion $2 \times 2 \times 2$ & 16  & 4   & 4   & 6   & 6   & 0.0   & -100  & 1 \\
        Pursuit–Evasion $3 \times 3 \times 1$ & 64  & 4   & 4   & 6   & 6   & 0.0   & -100  & 1 \\
        Pursuit–Evasion $3 \times 3 \times 2$ & 81  & 4   & 4   & 6   & 6   & 0.0   & -100  & 1 \\
        \bottomrule
    \end{tabular}
    \vspace{-0.4cm}
\end{table}

\subsection{Additional Plots}
\label{sec:appendix:additional:plots}

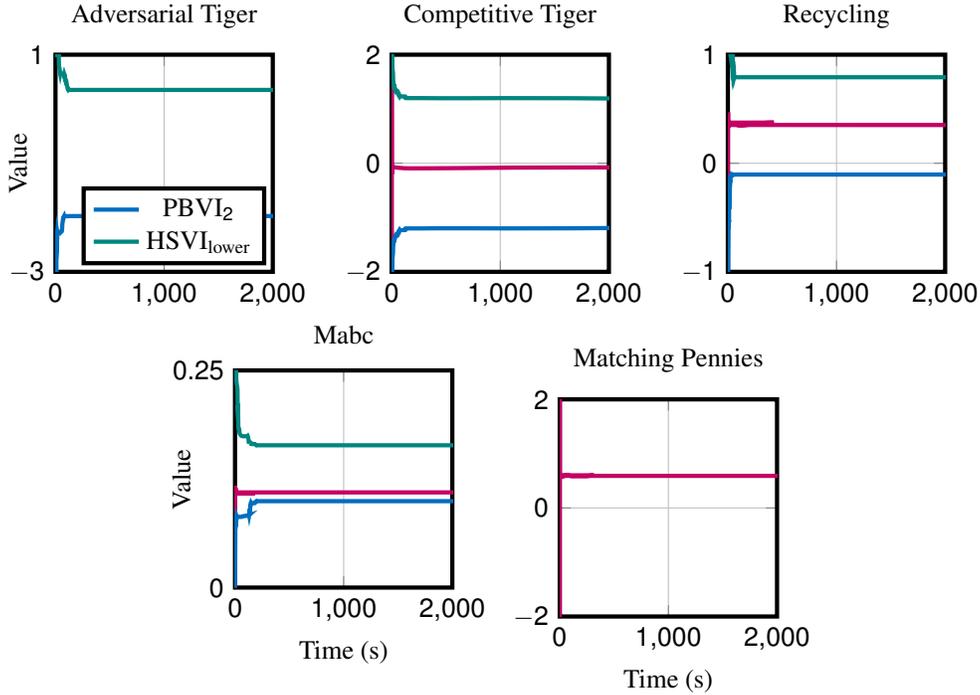
\begin{figure}[H]
  \centering
  % First subfigure
  \begin{subfigure}{0.32\textwidth}
    \centering
    \begin{tikzpicture}
      \begin{axis}[
        title={Adversarial Tiger},
        width=\linewidth,
        height=\linewidth,
        ylabel={Value},
        ylabel style={yshift=-5mm},
        xmin=0, xmax=2000,
        xtick={0, 1000, 2000}, 
        ymin=-3, ymax=1,
        ytick={-3, 1},
        grid=both,
        style={ultra thick},
        legend pos=south east,
        ]
        % \addplot+[smooth,mark=none,sthlmRed] table [col sep=comma, x=cumul_time, y=value] {./paper/results/PBVI_new_results_with_exp/H4/adversarialtiger/adversarialtiger_pruning_H4.csv};
        \addplot+[smooth,mark=none,sthlmBlue] table [col sep=comma, x=time, y=lb] {./paper/results/HSVI_Delage_old_without_exploitability/H4/adversarialtiger.csv};
        \addplot+[smooth,mark=none,sthlmGreen] table [col sep=comma, x=time, y=ub] {./paper/results/HSVI_Delage_old_without_exploitability/H4/adversarialtiger.csv};
        \legend{PBVI$_2$, HSVI$_\text{lower}$, HSVI$_\text{upper}$}
      \end{axis}
    \end{tikzpicture}
    \label{fig:lineplots:hsvi_vs_pbvi:h4:adversarial_tiger}
  \end{subfigure}%
  % Second subfigure
  \begin{subfigure}{0.32\textwidth}
    \centering
    \begin{tikzpicture}
      \begin{axis}[
        title={Competitive Tiger},
        width=\linewidth,
        height=\linewidth,
        xmin=0, xmax=2000,
        xtick={0, 1000, 2000}, 
        ymin=-2, ymax=2,
        ytick={-2, 0, 2},
        grid=major,
        style={ultra thick},
        ]
        \addplot+[smooth,mark=none,sthlmRed] table [col sep=comma, x=cumul_time, y=value] {./paper/results/PBVI_new_results_with_exp/H4/competitivetiger/competitivetiger_pruning_H4.csv};
        \addplot+[smooth,mark=none,sthlmBlue] table [col sep=comma, x=time, y=lb] {./paper/results/HSVI_Delage_old_without_exploitability/H4/competitivetiger.csv};
        \addplot+[smooth,mark=none,sthlmGreen] table [col sep=comma, x=time, y=ub] {./paper/results/HSVI_Delage_old_without_exploitability/H4/competitivetiger.csv};
      \end{axis}
    \end{tikzpicture}
    \label{fig:lineplots:hsvi_vs_pbvi:h4:competitive_tiger}
  \end{subfigure}%
  % Third subfigure
  \begin{subfigure}{0.32\textwidth}
    \centering
    \begin{tikzpicture}
      \begin{axis}[
        title={Recycling},
        width=\linewidth,
        height=\linewidth,
        ylabel style={yshift=-5mm},
        xmin=0, xmax=2000,
        xtick={0, 1000, 2000}, 
        ymin=-1, ymax=1,
        ytick={-1, 0, 1},
        grid=both,
        style={ultra thick},
        ]
        \addplot+[smooth,mark=none,sthlmRed] table [col sep=comma, x=cumul_time, y=value] {./paper/results/PBVI_new_results_with_exp/H4/recycling/recycling_pruning_H4.csv};
        \addplot+[smooth,mark=none,sthlmBlue] table [col sep=comma, x=time, y=lb] {./paper/results/HSVI_Delage_old_without_exploitability/H4/recycling.csv};
        \addplot+[smooth,mark=none,sthlmGreen] table [col sep=comma, x=time, y=ub] {./paper/results/HSVI_Delage_old_without_exploitability/H4/recycling.csv};
        \end{axis}
    \end{tikzpicture}
    \label{fig:lineplots:hsvi_vs_pbvi:h4:recycling}
  \end{subfigure}
  % Fourth subfigure
  \begin{subfigure}{0.32\textwidth}
    \centering
    \begin{tikzpicture}
      \begin{axis}[
        title={Mabc},
        width=\linewidth,
        height=\linewidth,        
        xlabel={Time (s)},
        ylabel={Value},
        ylabel style={yshift=-5mm},
        xmin=0,
        xmax=2000,
        xtick={0, 1000, 2000}, 
        ymin=0.,
        ymax=0.25,
        ytick={0., 0.25},
        grid=both,
        style={ultra thick}
        ]
        \addplot+[smooth, mark=none, sthlmRed] table [col sep=comma, x=cumul_time, y=value] {./paper/results/PBVI_new_results_with_exp/H4/mabc/mabc_pruning_H4.csv};
        \addplot+[smooth, mark=none, sthlmBlue] table [col sep=comma, x=time, y=lb] {./paper/results/HSVI_Delage_old_without_exploitability/H4/mabc.csv};
        \addplot+[smooth, mark=none, sthlmGreen] table [col sep=comma, x=time, y=ub] {./paper/results/HSVI_Delage_old_without_exploitability/H4/mabc.csv};
        \end{axis}
    \end{tikzpicture}
    \label{fig:lineplots:hsvi_vs_pbvi:h4:mabc}
  \end{subfigure}%
  % Fifth subfigure
  \begin{subfigure}{0.32\textwidth}
    \centering
    \begin{tikzpicture}
      \begin{axis}[
        title={Matching Pennies},
        width=\linewidth,
        height=\linewidth,
        xlabel=Time (s),
        xmin=0, xmax=2000,
        xtick={0, 1000, 2000}, 
        ymin=-2, ymax=2,
        ytick={-2, 0, 2},
        grid=both,
        style={ultra thick}
        ]
        \addplot+[smooth,mark=none,sthlmRed] table [col sep=comma, x=cumul_time, y=value] {./paper/results/PBVI_new_results_with_exp/H4/matchingpennies/matchingpennies_pruning_H4.csv};
        \end{axis}
    \end{tikzpicture}
    \label{fig:lineplots:hsvi_vs_pbvi:h4:matching_pennies}
  \end{subfigure}%
  \caption{A visual representation of the performance of our best performing algorithm (PBVI$_3$) against the HSVI algorithm of \citet{DelBufDibSaf-DGAA-23} for horizon $\ell = 4$ across five different games. \textbf{Best viewed in color.}}
  \label{fig:hsvi_vs_pbvi_results:h4}
\end{figure}

\begin{figure}[ht]
  \centering

  \begin{subfigure}{0.32\textwidth}
    \centering
    \begin{tikzpicture}
      \begin{axis}[
        title={Adversarial Tiger},
        width=\linewidth,
        height=\linewidth,
        ylabel={Value},
        ylabel style={at={(axis description cs:-0.25,.5)}},
        xmin=0, xmax=5,
        xtick={0, 2.5, 5}, 
        ymin=-10, ymax=0,
        ytick={-10, -5, 0},
        grid=both,
        style={ultra thick},
        legend pos=south east,
        ]
        \addplot+[smooth,mark=|,sthlmYellow] table [col sep=comma, x=iter, y=value] {./paper/results/PBVI_new_results_with_exp/H7/adversarialtiger/adversarialtiger_no_pruning_H7.csv};
        \addplot+[smooth,mark=x,sthlmBlue] table [col sep=comma, x=iter, y=value] {./paper/results/PBVI_new_results_with_exp/H7/adversarialtiger/adversarialtiger_pruning_H7.csv};
        \addplot+[smooth,mark=o,sthlmRed] table [col sep=comma, x=iter, y=value] {./paper/results/PBVI_new_results_with_exp/H7/adversarialtiger/adversarialtiger_pts_pruning_H7.csv};
        \legend{PBVI$_1$, PBVI$_2$, PBVI$_3$}
      \end{axis}
    \end{tikzpicture}
  \end{subfigure}%
  \begin{subfigure}{0.32\textwidth}
    \centering
    \begin{tikzpicture}
      \begin{axis}[
        title={Competitive Tiger},
        width=\linewidth,
        height=\linewidth,
        xmin=0, xmax=3,
        xtick={0, 1.5, 3}, 
        ymin=-7, ymax=1,
        ytick={-7, -3, 1},
        grid=major,
        style={ultra thick},
        ]
        \addplot+[smooth,mark=|,sthlmYellow] table [col sep=comma, x=iter, y=value] {./paper/results/PBVI_new_results_with_exp/H7/competitivetiger/competitivetiger_no_pruning_H7.csv};
        \addplot+[smooth,mark=x,sthlmBlue] table [col sep=comma, x=iter, y=value] {./paper/results/PBVI_new_results_with_exp/H7/competitivetiger/competitivetiger_pruning_H7.csv};
        \addplot+[smooth,mark=o,sthlmRed] table [col sep=comma, x=iter, y=value] {./paper/results/PBVI_new_results_with_exp/H7/competitivetiger/competitivetiger_pts_pruning_H7.csv};
      \end{axis}
    \end{tikzpicture}
  \end{subfigure}%
  \begin{subfigure}{0.32\textwidth}
    \centering
    \begin{tikzpicture}
      \begin{axis}[
        title={Recycling},
        width=\linewidth,
        height=\linewidth,
        ylabel style={yshift=-5mm},
        xmin=0, xmax=4,
        xtick={0, 2, 4}, 
        ymin=-4, ymax=2,
        ytick={-4, -1, 2},
        grid=both,
        style={ultra thick},
        ]
        \addplot+[smooth,mark=|,sthlmYellow] table [col sep=comma, x=iter, y=value] {./paper/results/PBVI_new_results_with_exp/H7/recycling/recycling_no_pruning_H7.csv};
        \addplot+[smooth,mark=x,sthlmBlue] table [col sep=comma, x=iter, y=value] {./paper/results/PBVI_new_results_with_exp/H7/recycling/recycling_pruning_H7.csv};
        \addplot+[smooth,mark=o,sthlmRed] table [col sep=comma, x=iter, y=value] {./paper/results/PBVI_new_results_with_exp/H7/recycling/recycling_pts_pruning_H7.csv};
      \end{axis}
    \end{tikzpicture}
  \end{subfigure}

  \begin{subfigure}{0.32\textwidth}
    \centering
    \begin{tikzpicture}
      \begin{axis}[
        title={MABC},
        width=\linewidth,
        height=\linewidth,
        xlabel={Iterations},
        ylabel={Value},
        ylabel style={at={(axis description cs:-0.25,.5)}},
        xmin=0, xmax=5,
        xtick={0, 2.5, 5}, 
        ymin=0, ymax=0.2,
        ytick={0, 0.1, 0.2},
        grid=both,
        style={ultra thick},
        ]
        \addplot+[smooth,mark=|,sthlmYellow] table [col sep=comma, x=iter, y=value] {./paper/results/PBVI_new_results_with_exp/H7/mabc/mabc_no_pruning_H7.csv};
        \addplot+[smooth,mark=x,sthlmBlue] table [col sep=comma, x=iter, y=value] {./paper/results/PBVI_new_results_with_exp/H7/mabc/mabc_pruning_H7.csv};
        \addplot+[smooth,mark=o,sthlmRed] table [col sep=comma, x=iter, y=value] {./paper/results/PBVI_new_results_with_exp/H7/mabc/mabc_pts_pruning_H7.csv};
      \end{axis}
    \end{tikzpicture}
  \end{subfigure}%
  \begin{subfigure}{0.32\textwidth}
    \centering
    \begin{tikzpicture}
      \begin{axis}[
        title={Matching Pennies},
        width=\linewidth,
        height=\linewidth,
        xlabel={Iterations},
        xmin=0, xmax=8,
        xtick={0, 4, 8}, 
        ymin=-10, ymax=10,
        ytick={-7, 0, 7},
        grid=both,
        style={ultra thick},
        ]
        \addplot+[smooth,mark=|,sthlmYellow] table [col sep=comma, x=iter, y=value] {./paper/results/PBVI_new_results_with_exp/H7/matchingpennies/matchingpennies_no_pruning_H7.csv};
        \addplot+[smooth,mark=x,sthlmBlue] table [col sep=comma, x=iter, y=value] {./paper/results/PBVI_new_results_with_exp/H7/matchingpennies/matchingpennies_pruning_H7.csv};
        \addplot+[smooth,mark=o,sthlmRed] table [col sep=comma, x=iter, y=value] {./paper/results/PBVI_new_results_with_exp/H7/matchingpennies/matchingpennies_pts_pruning_H7.csv};
      \end{axis}
    \end{tikzpicture}
  \end{subfigure}%
  \caption{Performance over $\ell = 7$ for five benchmark problems. PBVI$_1$, PBVI$_2$, and PBVI$_3$ perform comparably on Adversarial Tiger, Competitive Tiger, and Recycling. PBVI$_2$ also matches PBVI$_1$ on MABC and Matching Pennies, while PBVI$_3$ struggles more on the latter. Pruning in PBVI$_2$ and PBVI$_3$ often enables continued improvement where PBVI$_1$ plateaus. \textbf{Best viewed in colour.}}
  \label{fig:lineplots:value:h7}
\end{figure}
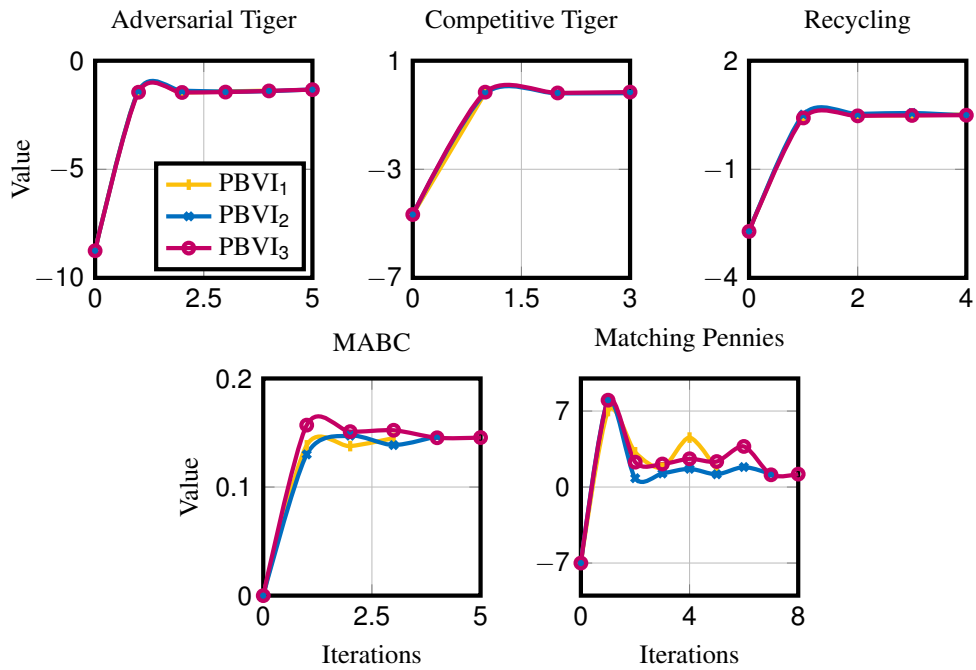
\begin{table}[!ht]
\centering
\caption{Snapshot of empirical results. Games are ordered by increasing planning horizon \(\ell\), and within each horizon by ascending number of local histories. For each setting, we report the value \(v(b)\) and exploitability \(\varepsilon\). \textsc{oot} indicates a timeout (2-hour limit), \textsc{oom} denotes out-of-memory runs, and ‘--’ means the exploitability budget was exceeded. Best results are highlighted in \highest{magenta}.}
\label{table:main_results:value_and_exploitability}
\scalebox{0.8}{%
\begin{tabular}{@{}l rr rr rr rr rr@{}}
\toprule
\textbf{Game ($\ell$)} & \multicolumn{2}{c}{\textbf{PBVI$_1$}} & \multicolumn{2}{c}{\textbf{PBVI$_2$}} & \multicolumn{2}{c}{\textbf{PBVI$_3$}} & \multicolumn{2}{c}{\textbf{HSVI}~\citep{DelBufDibSaf-DGAA-23}} & \multicolumn{2}{c}{\textbf{CFR+}~\citep{Tammelin14}} \\
& \(v(b)\) & \(\varepsilon\) & \(v(b)\) & \(\varepsilon\) & \(v(b)\) & \(\varepsilon\) & \(v(b)\) & \(\varepsilon\) & \(v(b)\) & \(\varepsilon\) \\
\cmidrule(lr){2-3} \cmidrule(lr){4-5} \cmidrule(lr){6-7} \cmidrule(lr){8-9} \cmidrule(lr){10-11}

\myrowcolour  adversarial-tiger(2) & \highest{-0.40} & \highest{0.00} & \highest{-0.40} & \highest{0.00} & \highest{-0.40} & \highest{0.00} & \highest{-0.40} & \highest{0.00} & \highest{-0.40} & \highest{0.00} \\
adversarial-tiger(3) & \highest{-0.56} & \highest{0.00} & \highest{-0.56} & \highest{0.00} & \highest{-0.56} & \highest{0.00} & -0.56 & 1e-3 & \highest{-0.56} & \highest{0.00} \\

\myrowcolour  competitive-tiger(2) & \highest{-0.02} & \highest{0.00} & \highest{-0.02} & \highest{0.00} & \highest{-0.02} & \highest{0.00} & 0.00 & 0.00 & 0.00 & 0.00 \\
competitive-tiger(3) & -0.02 & \highest{0.00} & -0.04 & \highest{0.00} & \highest{-0.03} & \highest{0.00} & \textsc{oot} &  & \highest{0.00} & \highest{0.00} \\

\myrowcolour  recycling(2) & \highest{0.26} & \highest{0.00} & \highest{0.26} & \highest{0.00} & \highest{0.26} & \highest{0.00} & \highest{0.26} & \highest{0.00} & \highest{0.26} & \highest{0.00} \\
recycling(3) & \highest{0.32} & \highest{0.00} & \highest{0.32} & 3e-2 & \highest{0.32} & \highest{0.00} & \highest{0.32} & 1e-2 & \highest{0.32} & 2e-2 \\

\myrowcolour  mabc(2) & \highest{0.077} & \highest{0.00} & \highest{0.077} & \highest{0.00} & \highest{0.077} & \highest{0.00} & \highest{0.077} & \highest{0.00} & \highest{0.077} & \highest{0.00} \\
mabc(3) & 0.095 & \highest{0.00} & 0.094 & \highest{0.00} & \highest{0.096} & \highest{0.00} & \highest{0.096} & \highest{0.00} & \highest{0.096} & \highest{0.00} \\

\myrowcolour  matching-pennies(2) & \highest{0.20} & \highest{0.00} & \highest{0.20} & \highest{0.00} & \highest{0.20} & \highest{0.00} & \highest{0.20} & \highest{0.00} & \highest{0.20} & 1e-3 \\
matching-pennies(3) & \highest{0.40} & 1e-3 & \highest{0.40} & 1e-3 & 0.39 & \highest{0.00} & \highest{0.40} & \highest{0.00} & \highest{0.40} & \highest{0.00} \\

\bottomrule
\end{tabular}
}
\end{table}

\end{document}